\documentclass[12pt]{amsart}
\usepackage{amsmath,amssymb}
\usepackage{rotating}
\usepackage{xypic}
\usepackage{pst-node,pstricks,multido,pst-plot,pst-text,pst-3d}%
\usepackage{graphicx}
\usepackage{hieroglf}
\def\dfrac#1#2{\frac{\displaystyle #1}{\displaystyle #2}}
\def\binom#1#2{\left(#1\atop#2\right)}
\def\operatorname#1{\mathrm{#1}}
\def\mod{\operatorname{\ mod\ }}
\newtheorem{example}{Example}
\newtheorem{remark}[example]{Remark}
\newtheorem{notation}[example]{Notations}
\newtheorem{theorem}[example]{Theorem}

\newtheorem{corollary}[example]{Corollary}

\newtheorem{definition}[example]{Definition}
\newtheorem{proposition}[example]{Proposition}

\newtheorem{lemma}[example]{Lemma}

\newenvironment{bproof}[1][Proof]{{\noindent\textbf{Proof.}\ }}{\ \\}
\newenvironment{enproof}[1]{\noindent\textbf{#1.} }{\ \rule{0.5em}{0.5em} \\}

\def\S{{\mathfrak  S}}

\def\<{\langle}
\def\>{\rangle}

\def\C{{\mathbb C}}

\def\N{{\mathbb N}}
\def\Y{{\mathbb Y}}

\def\ashuff#1#2#3{
\kern 1pt \vrule height#1 \overline{\vrule height#3 width 0pt
\hskip#2} \rule{.3pt}{#1}\overline{\vrule height#3 width 0pt
\hskip#2} \rule{.3pt}{#1} \kern 1pt }

\def\X{{\mathbb X}}\def\Y{{\mathbb Y}}

\def\Carre3#1{\left[\begin{array}{ccc}#1\end{array}\right]}

\def\X{{\mathbb X}}
\def\Y{{\mathbb Y}}
\def\arm{\mbox{\reflectbox{\Large\pmglyph{A}}}}
\def\leg{\mbox{\pmglyph{b}}}
\def\coarm{\mbox{\Large\pmglyph{A}}}
\def\coleg{\mbox{\reflectbox{\rotatebox[origin=c]{-180}{\pmglyph{b}}}}}
\def\arp{\mathop\rightarrow^+}

\textheight22cm \textwidth15.6cm \hoffset-1.7cm \voffset-.5cm
\begin{document}
\title[Rectangular Macdonald Polynomials]{Clustering properties of rectangular Macdonald polynomials}
\author{Charles F. Dunkl~ and
Jean-Gabriel Luque ~}
\address{Charles F. Dunkl:\rm ~Dept. of Mathematics, University of
Virginia, Charlottesville VA 22904-4137, US.\\ 
Email:{\tt cfd5z@virginia.edu}.\\
URL: {\tt http://people.virginia.edu/$\sim$cfd5z/}.}
\address{Jean-Gabriel Luque:\rm ~Universit\'e de Rouen, Laboratoire d'Informatique , du Traitement de l'Information et des Syst\`emes (LITIS), Avenue de l’Universit\'e - BP 8
76801 Saint-\'etienne-du-Rouvray Cedex, FR. \\ 
Email:{\tt jean-gabriel.luque@univ-rouen.fr}} 
\keywords{Fractional quantum Hall effect, Clustering properties, Macdonald Polynomials, Hecke Algebras, Multivariate polynomials}
\begin{abstract}
The clustering properties of Jack polynomials are relevant in the theoretical study of the fractional Hall states. In this context, some factorization properties have been conjectured for the $(q,t)$-deformed problem involving Macdonald polynomials (which are also the quantum eigenfunctions of a familly of commuting difference operators  with signifiance in the relativistic Ruijsenaars-Schneider model). The present paper is devoted to the proof of this formula. To this aim we use four families of Jack/Macdonald polynomials: symmetric homogeneous, nonsymmetric homogeneous, shifted symmetric and shifted nonsymmetric.
\end{abstract}

\maketitle
\section{Introduction}
The symmetric (homogeneous) Jack polynomials are relevant in the study of the quantum many-body wave functions. In particular, fractional quantum Hall states of particles in the lowest Landau levels are described by such polynomials \cite{BH0,BH,BH2,BH3}.
Pioneered by Laughlin \cite{Laugh}, the theoretical study of the fractional quantum Hall states  use multi-variable polynomials \cite{Jain} describing the full many-body state of the interacting electrons on a plane or on a sphere (In the case of the sphere, the polynomials appear after stereographic projection). 
The special polynomials that are relevant in this context are not general solutions of the true eigenvalue problem involving the Coulomb interaction but they are constructed to be adiabatically related to the true eigenstates.  The most famous example is the Laughlin wave function which is the cube of the Vandermonde determinant $\prod_{i<j}(x_i-x_j)^3$ in the variables representing the particles. It is known to be a good approximation of the true state of electrons for the lowest Landau level with filling factor $\frac13$. Another interesting and celebrated example is the Moore-Read Pfaffian \cite{MR,RM}
\[
\Psi_{MR}:={\rm Pf}\left(\frac1{x_k-x_\ell}\right)\prod_{i<j}(x_i-x_j)
\]
which is one of the candidates to approximate the system for the filling factor $\frac52$ and is the polynomial of smallest degree belonging to the kernel of the operator which forbids three particles to be in the same place
\[
\mathcal H=\sum_{i<j<k}\delta^2(x_i-x_j)\delta^2(x_j-x_k).
\]
This operator can be naturally generalized to operators which forbid $k$ particles in the same place, the lowest degree polynomials in its kernel provide other examples of wave-functions called Read-Rezayi \cite{RR1,RR2} states:
\[
\Psi_{RR}^k={\rm Sym}\prod_{\ell=1}^k\prod_{1\leq i_\ell<j_\ell\leq N}(x_{i_\ell}-x_{j_\ell})^2.
\]
This family of wave functions is composed with multivariate symmetric polynomials with some additional vanishing conditions, namely wheel conditions. These polynomials, whose study was pioneered by Feigin, Jimbo, Miwa and Mukhin, are proved to belong to a family of Jack polynomials with negative rational parameter \cite{FJMM1,FJMM2}.
Following the notations of Bernevig and Haldane \cite{BH0,BH,BH2,BH3,BGS}, these polynomials depend upon a parameter $\alpha$ and a configuration of occupation numbers $[n_0,n_1,\dots]$. To recover the standard notation of symmetric functions \cite{Macdo}, the set of occupations defines a decreasing  partition $\lambda=[\lambda_1\geq\lambda_2\dots\geq\lambda_N]$ of the number $N$ of particles. The partition $\lambda$ corresponding to the vector $[n_0,n_1,\dots]$ is such that $n_i$ is the multiplicity of the number $i$ in $\lambda$.
The relevant wave functions belong to the kernel of the differential operators 
\[
L_+:=\sum_i{\partial\over\partial x_i} \mbox{ and } L_-:=N_\phi\sum_{i}x_i-\sum_ix_i^2{\partial\over\partial x_i}\]
where $N_\phi$ denotes the number of flux quanta. 
In the case of the sphere, through the stereographic projection, the natural action of the $SU(2)$ rotations on the quantum states is translated in an action of $L_+$, $L_-$ and $L_z=\frac12NN_\psi-\sum_ix_i{\partial\over\partial x_i}$.
One of the authors (J.-G.L.) with Th. Jolic\oe ur\cite{JL}  described a family of Jack polynomials belonging in the kernel of $L_+$. More precisely, they investigate a $(q,t)$-deformation of the problem involving Macdonald polynomials \cite{Macdo}. \\
Bernevig and Haldane \cite{BH} identified $(k,r)$-clustering properties of some wave functions which relate the functions with $Nk$ variables with those with $N(k-1)$ by a factorization formula of the kind:
\[
\Psi^{(k,r)}(x_1,\dots,x_{(N-1)k},\underbrace{y,\dots,y}_{\times k})=\prod_{\ell=1}^{N-k}(x_\ell-y)^r\Psi^{(k,r)}(x_1,\dots,x_{(N-1)k}).
\]
In particular, this is the case for the Read-Rezayi states:
\[
\Psi^{k}_{RR}(x_1,\dots,x_{(N-1)k},\underbrace{y,\dots,y}_{\times k})=\prod_{\ell=1}^{N-k}(x_\ell-y)^2\Psi^{(k,r)}(x_1,\dots,x_{(N-1)k}).
\]
Furthermore, Bernevig and Haldane \cite{BH0} showed  the connection between the Read-Rezayi states and staircase Jack polynomials $P_{\lambda}^{(\alpha)}$:
\[
\Psi^{k}_{RR}(x)=P^{(-k-1)}_{[(2(N-1))^k,\dots,
2^k,0^k]}(x).
\]
The link between Jack polynomials and quantum Hall states was proven by B. Estienne \emph{et al.}\cite{ES,EBS}. 
Recently, Baratta and Forrester \cite{BF} proved that staircase Jack polynomials for some negative rational parameter satisfy the clustering conditions. More specifically, they stated the result in terms of Macdonald polynomials and recovered the property as a limit case. In the same paper, they conjectured very interesting identities for rectangular partitions (the initial case of staircase with only one step):
\[
P_{r^g}(y,yq^{\frac1\alpha},\dots,yq^{(N-k-1)\over\alpha},x_{N-g},\dots,x_N)=\prod_{\ell=N-g+1}^N\prod_{j=0}^{r-1}(x_\ell-q^{\frac1\alpha}y),
\]  
where $\alpha$ is a certain negative rational number.
The aim of our paper is to prove this conjecture by using some other families of Macdonald polynomials. Indeed, we will use four families of Jack/Macdonald polynomials: symmetric homogeneous, nonsymmetric homogeneous, shifted symmetric, shifted nonsymmetric. The Macdonald polynomials are a two-parameter deformation of the Jack polynomials which can be used to finely understand  equalities involving Jack polynomials. This is due to the fact that they appear in the representation theory of the double affine Hecke algebra (the Jack polynomials are a degenerated version which involves the algebra of the symmetric group). Furthermore they admit a physical interpretation as the eigenfunctions of the Macdonald-Ruijsenaars operator 
\[
H=\sum_{i=1}^N\prod_{j=1\atop i\neq j}^N{e^{x_i}-e^{x_j+\eta}\over e^{x_i}-e^{x_j}}e^{\hbar\partial_{x_i}},
\]
which is a relativistic version of quantum Calogero-Moser system (see \cite{RS1,RS2}). The parameters $q$ and $t$ are related to the parameters $\hbar$ and $\eta$ by $q=e^{\hbar/2}$ and $t=e^{\eta/2}$.

One approach to developing a theory of orthogonal polynomials of several
variables of classical type led to symmetric and nonsymmetric Jack
polynomials. The associated mechanism involves the symmetric group and Young
tableaux. Just as the theory of hypergeometric series is extended to basic
hypergeometric series, the theory of Jack polynomials is extended by the
Macdonald polynomials. Here the symmetric group is replaced by its Hecke
algebra. The orthogonality of the polynomials comes from their realization as
eigenfunctions of a certain set of commuting operators. Generally these
polynomials have coefficients which are rational functions of the parameters
$\left(  q,t\right)  $, assumed not to be roots of unity. Later a further
generalization was developed, namely the theory of shifted Macdonald
polynomials, which are nonhomogeneous polynomials defined by the property of
vanishing on certain points, corresponding to so-called spectral vectors. This
vanishing property leads to expressions of the polynomials as products of
linear factors in various special cases. 

It turns out that for special values of the form $q^{a}t^{b}=1$ for positive
integers $a,b$,  a shifted Macdonald polynomial collapses to its highest
degree term, in which case, it agrees with an ordinary homogeneous Macdonald
polynomial. The label of the polynomial has to satisfy certain restrictions
for this to be possible. These parameter values result in the polynomials
being of \textquotedblleft singular\textquotedblright\ type. In this paper we
concentrate on the rectangular polynomials, meaning those whose leading term
is of the form $\left(  x_{1}x_{2}\ldots x_{k}\right)  ^{m}$. Our results
provide factorizations which are interpreted as clustering properties.

The paper itself starts with an overview of the four different types of
Macdonald polynomials (shifted, ordinary, symmetric, nonsymmetric) and the
algebraic structure of the associated operations. The technical machinery
comprises tableaux, adjacent transpositions, raising operators, and the
Yang-Baxter graph. Then there is a presentation of the binomial formulae of
Knop and Sahi which show how to expand nonhomogeneous Macdonald polynomials in
terms of homogeneous ones. These series are then specialized to the
rectangular versions and this leads to the proofs of our main results. These
all concern the case of $N$ variables with $k\leq N/2$ and the parameters
satisfying $q^{a}t^{N-k+1}=1$ (with $a=m-1$ for the symmetric type and $a=m$
for the nonsymmetric type) but no such relation with smaller exponents. In
closing there is a discussion of future research in the direction of
polynomials of staircase type; this phrase refers to the pictorial
representation of the label of the polynomial, for example $\left(
4,4,2,2,0,0\right)  $.

\section{Macdonald polynomials}
Macdonald polynomials are special functions which are involved in the representation theory of the affine Hecke algebra. This paper is devoted to the case of the symmetric group (type $A_{N+1}$). Up to  normalization Macdonald polynomials are defined to be the simultaneous eigenfunctions of some operators which are a $(q,t)$-deformation of the Cherednik operators. We study four variants of these polynomials: symmetric, nonsymmetric, shifted symmetric and shifted nonsymmetric. Symmetric Macdonald polynomials are indexed by decreasing partitions $\lambda$ whilst the nonsymmetric ones are indexed by vectors $v\in\N^N$. In the aim to simplify the expression arising in the computation, we use the notion of legs and arms of a cell in  the Ferrers diagram of a vector. These numbers are (classically) defined by
\[
\arm_v(i,j):=v[i]-j\]
and
\[ \leg_v(i,j):=\#\{k<i:j\leq v[k]+1\leq v[i]\}+\#\{i<k:j\leq v[k]\leq v[i]\}.
\]
Note if $v=\lambda$ is a partition then $\leg$ is the classical leg-length:
\[
\leg_\lambda(i,j)=\lambda'_i-j
\]
where $\lambda'$ denotes the conjugate of $\lambda$.\\
The leg-length and arm-length are used to define the $(q,t)$-hook product of $v$ with argument $z$:
\[
h_{q,t}(v,z)=\prod_{(i,j)\in v}\left(1-zq^{\arm_v(i,j)}t^{\leg_v(i,j)}\right).
\]
We need also the arm-colength and the leg-colength
\[
\coarm_v(j)=j-1
\]
and

\[ \coleg_v(i):=\#\{k<i:v[i]\leq v[k]\}+\#\{i<k:v[i]< v[k]\}.
\]
Note  the rank function of $v$:
\[
r_v=[1+\coleg_v(1),\dots,1+\coleg_v(N)]
\]
is a permutation of $\S_N$.
\subsection{Affine Hecke Algebra}
We use the notation of Lascoux, Rains and Warnaar\cite{LRW}. Let us recall it here.
Let $N\geq 2$ be an integer and $\X=\{x_1,\dots, x_N\}$ be an alphabet of formal variables. 
We consider the operators $T_i$ acting on Laurent polynomials in the variables $x_j$ by
\begin{equation}\label{defT_i}
T_i=t+(s_i-1){tx_{i+1}-x_i\over x_{i+1}-x_i},
\end{equation}
where $s_i$ is the elementary transposition permuting the variables $x_i$ and $x_{i+1}$.
These operators act on the right  and in particular we have 
\begin{equation}\label{actT_i}
1T_i=t\mbox{ and }x_{i+1}T_i=x_i.
\end{equation}
More precisely, $T_i$ is the unique operator that commutes with symmetric functions in $x_i$ and $x_{i+1}$ satisfying (\ref{actT_i}). The operators $T_i$ satisfy the relations of the Hecke algebra of the symmetric group:
\begin{equation}\label{Hecke}
\begin{array}{l}
T_iT_{i+1}T_i=T_{i+1}T_iT_{i+1}\\
T_iT_j=T_jT_i\mbox{ for }|i-j|>1\\
(T_i-t)(T_i+1)=0
\end{array}
\end{equation}
Together with  multiplication by the variables $x_i$ and the affine operator $\tau$ defined by $f(x)\tau=f(x_N/q,x_1,\dots,x_{N-1})$, they generate the affine Hecke algebra of the symmetric group. More precisely, 
\begin{equation}
{\mathcal H}_N(q,t)=\C(q,t)[x_1^{\pm},\dots,x_ N^{\pm},T_1^{\pm},\dots,T_{N-1}^{\pm},\tau].
\end{equation}

\subsection{Nonsymmetric Macdonald polynomials}
The nonsymmetric Macdonald polynomials $(E_v)_{v\in\N^N}$ are defined as the unique basis of  simultaneous eigenfunctions of the $(q,t)$-version of the Cherednik operators defined by
\begin{equation}
\xi_i:=t^{1-i}T_{i-1}\dots T_{1}\tau T_{N-1}^{-1}\dots T_i^{-1}.
\end{equation}
such that $E_v=x^v+\sum_{u\prec v}\alpha_{u,v}x^u$ with $x^v=x^{v[1]}\dots x^{v[N]}$ if $v=[v[1],\dots,v[N]]$ and $\preceq$ denotes the dominance order on vectors which is based on the dominance order $\preceq_D$ for partitions:
\[
\lambda\preceq_D\mu\mbox{ if and only if } \forall i,\,\lambda_1+\cdots+\lambda_i\leq\mu_1+\cdots+\mu_i.
\]
This order is naturally extended to vectors with the same definition. The dominance order $\preceq$ for vectors is
\[
u\preceq v\mbox{ if and only if either }u^+\preceq_Dv^+\mbox{ or }(u^+=v^+\mbox{ and }u\preceq_Dv)
\]
where $u^+$ denotes the unique nonincreasing partition which is a permutation of $u$.
\\
Note Cherednik operators commute with each other and generate a maximal commutative subalgebra of ${\mathcal H}_N(q,t)$.\\
The corresponding spectral vectors are given by $\mathrm{Spec}_v[i]=\frac1{\langle v\rangle[i]}$ with
\[
\langle v\rangle=[q^{v[1]}t^{N-r_v[1]},\dots,q^{v[N]}t^{N-r_v[N]}].
\]
We recall also the $(q,t)$-Dunkl operators
\[
D_N=(1-\xi_N)x_N^{-1},\, D_i=tT_i^{-1}D_{i+1}T_i^{-1}.
\]
Set 
$
c'_v(q,t)=h_{q,t}(v,q)$.
We will use another normalization which is useful when symmetrizing:
\begin{equation}\label{rmE_v}
\mathrm E_v={t^{n(v)}\over c'_v(q,t)}E_v
\end{equation}
where $n(v)=\sum_{(i,j)\in v}\leg_v(i,j)$.\\
 Knop\cite{Knop} defined and studied the polynomials
\begin{equation}
\mathcal E_v=c''_v(q,t)E_v
\end{equation}
with $c''_v(q,t)= h_{q,t}(v,qt)$.
The expansion of $\mathcal E_v$ on the monomial basis is known to have integral coefficients in $\mathbb{Z}\left[  q,q^{-1},t,t^{-1}\right]  $ \cite{Knop}.
\subsection{Nonsymmetric shifted Macdonald polynomials}
The definition of nonsymmetric shifted Macdonald polynomials $M_v$ is quite similar to those of the Macdonald polynomials $E_v$: This is the  unique basis of  simultaneous eigenfunctions of the Knop-Cherednik operators defined by
\begin{equation}
\Xi_i:=t^{1-i}T_{i-1}\dots T_{1}\tau (1-\frac1{x_N})T_{N-1}^{-1}\dots T_i^{-1}+\frac1{x_i}.
\end{equation}
such that $M_v=q^{-n'(v)}x^v+\sum_{u\prec v}\alpha_{u,v}x^u$ with $x^v=x^{v[1]}\dots x^{v[N]}$ 
if $v=[v[1],\dots,v[N]]$,
\[
n'(v)=\sum_{(i,j)\in v}\arm_v(i,j).
\] 
 and $\preceq$ denotes again the dominance order on 
vectors.\\
Note that, initially, the dominance order is  defined only for vectors with the same norm. We can straightforwardly extend it for any vectors by adding the condition $u\prec v$ when $|u|<|v|$. We remark that the Sahi binomial formula \cite{Sahi}, more precisely  one of its consequences \cite{LRW} (Corollary 4.3), together with a theorem of Knop \cite{Knop2} (see also equation (\ref{0qtbin[]})) imply that if $|u|<|v|$ then $x^u$ has a non null coefficient in the expansion of $M_v$ only if $u^+\subset v^+$. It follows that the natural extension of the dominance order for any pairs of vectors, \emph{i.e.}:
\[
u\preceq v\mbox{ if and only if either } u^+\preceq_D v^+\mbox{ or }(u^+=v^+
\mbox{ and }u\preceq_Dv),
\]
matches with the order appearing in the expansion of $M_v$.
The reader can refer to appendix \ref{Not} for a discussion about the notations.\\
Note operators $\xi_i$ and $\Xi_i$ can be constructed in a similar way, as shown in the following proposition.
\begin{proposition}
For $1\leq i<N$, $\xi_{i}=tT_{i}^{-1}\xi_{i+1}T_{i}^{-1}$ and $\Xi_{i}%
=tT_{i}^{-1}\Xi_{i+1}T_{i}^{-1}$.
\end{proposition}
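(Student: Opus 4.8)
The plan is to establish both identities by the same mechanism, namely by exhibiting the operator $tT_i^{-1}\xi_{i+1}T_i^{-1}$ (resp. $tT_i^{-1}\Xi_{i+1}T_i^{-1}$) as a product of the elementary building blocks $T_j^{\pm 1}$, $\tau$, and (for the shifted case) the prefactor $(1-\frac1{x_N})$, and then collapsing that product to the defining expression for $\xi_i$ (resp. $\Xi_i$) using only the braid and quadratic relations \eqref{Hecke} together with the fact that $\tau$ commutes past the generators $T_1,\dots,T_{N-2}$ (it only interacts nontrivially with $T_{N-1}$, since $\tau$ cycles the variables).

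First I would treat the homogeneous case. Writing out $\xi_{i+1}=t^{-i}T_i T_{i-1}\cdots T_1\,\tau\,T_{N-1}^{-1}\cdots T_{i+1}^{-1}$, conjugation gives
\[
tT_i^{-1}\xi_{i+1}T_i^{-1}=t^{1-i}\,T_i^{-1}T_iT_{i-1}\cdots T_1\,\tau\,T_{N-1}^{-1}\cdots T_{i+1}^{-1}T_i^{-1}.
\]
On the left the first two factors cancel, $T_i^{-1}T_i=1$, leaving the prefix $T_{i-1}\cdots T_1$ exactly as it should appear in $\xi_i$. On the right we already have the suffix $T_{N-1}^{-1}\cdots T_{i+1}^{-1}T_i^{-1}$, which is precisely the suffix of $\xi_i$. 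The only remaining issue is that between $T_1$ and this suffix sits $\tau$, and $\tau$ must be moved to the correct slot: but $\tau$ commutes with $T_1,\dots,T_{i-1}$ (all indices are $<N-1$ since $i<N$, hence in fact they are among $T_1,\dots,T_{N-2}$), so no reordering is even needed — the expression is already literally $t^{1-i}T_{i-1}\cdots T_1\,\tau\,T_{N-1}^{-1}\cdots T_i^{-1}=\xi_i$. So the homogeneous half is a one-line cancellation.

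For the shifted case the same cancellation handles the "Hecke part" $t^{1-i}T_{i-1}\cdots T_1\,\tau\,(1-\frac1{x_N})T_{N-1}^{-1}\cdots T_i^{-1}$, after one checks that $(1-\frac1{x_N})$ sits in the right place — which it does, since it is wedged immediately after $\tau$ in $\Xi_{i+1}$ and the conjugating $T_i^{-1}$ on the left only touches the prefix $T_i T_{i-1}\cdots$, while the $T_i^{-1}$ on the right only extends the suffix. What remains is the additive tail: $\Xi_{i+1}$ contributes $\frac1{x_{i+1}}$, so I must show $tT_i^{-1}\cdot\frac1{x_{i+1}}\cdot T_i^{-1}=\frac1{x_i}$. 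This is where the genuine (though short) computation lies. From \eqref{actT_i} and \eqref{defT_i} one knows $x_{i+1}T_i=x_i$, and from the quadratic relation $T_i^{-1}=t^{-1}(T_i+1-t)=t^{-1}T_i+ (t^{-1}-1)$; combining these, or more cleanly using $x_i T_i^{-1}=x_{i+1}$ (the inverse form of $x_{i+1}T_i=x_i$, valid since $T_i$ is invertible and preserves the span), one gets $x_{i+1}\cdot(tT_i^{-1}\frac1{x_{i+1}}T_i^{-1}) = $ (after commuting $T_i^{-1}$ past the symmetric function $x_i x_{i+1}$, which it commutes with) $=1$ acting on $x_i$, giving the claimed $\frac1{x_i}$. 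The key input is that $T_i^{-1}$ commutes with symmetric functions in $x_i,x_{i+1}$ and that it acts on $x_i,x_{i+1}$ by $x_i\mapsto x_{i+1}$, $x_{i+1}\mapsto x_i + (\text{lower, killed by the structure})$; tracking this carefully on the rational function $1/x_{i+1}$ is the one place where a slip is easy.

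The main obstacle, then, is not the braid bookkeeping — which is essentially trivial given that $\tau$ and the prefactor $(1-1/x_N)$ are positioned exactly to survive conjugation — but the verification of the scalar identity $tT_i^{-1}\,x_{i+1}^{-1}\,T_i^{-1}=x_i^{-1}$ for the additive piece of $\Xi$. I would handle it by writing $T_i^{-1}$ explicitly via \eqref{defT_i} and the quadratic relation, applying it to $1/x_{i+1}$, then applying $tT_i^{-1}$ again and simplifying; alternatively, and more in the spirit of the paper, by noting $x_i^{-1}=(x_{i+1}^{-1})$ conjugated by $T_i$ up to the $t$-factor, using that $x_{i+1}T_i=x_i$ together with invertibility. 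Either route is a few lines; once it is in hand, both assertions of the Proposition follow simultaneously.
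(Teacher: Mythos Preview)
Your approach is essentially identical to the paper's. Both treat the $\xi_i$ identity as an immediate cancellation of $T_i^{-1}T_i$ in the conjugated expression, and both reduce the $\Xi_i$ identity to the single operator computation $tT_i^{-1}\,x_{i+1}^{-1}\,T_i^{-1}=x_i^{-1}$ (equivalently, the paper's $\bigl(\tfrac{pT_i^{-1}}{x_{i+1}}\bigr)T_i^{-1}=\tfrac{p}{tx_i}$), which the paper simply asserts while you outline a verification via the commutation of $T_i^{-1}$ with $x_ix_{i+1}$ together with $x_{i+1}T_i=x_i$; that route does work (it amounts to the Lusztig-type relation $T_iX_{i+1}T_i=tX_i$), though your sketch of it is a little imprecise as written.
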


\begin{proof}
The first part is obvious. For the second part we have%
\[
ptT_{i}^{-1}\Xi_{i+1}T_{i}^{-1}=t^{1-i}pT_{i-1}\ldots T_{1}\tau\left(
1-\frac{1}{x_{N}}\right)  T_{N-1}^{-1}\ldots T_{i}^{-1}+t\left(  \frac
{pT_{i}^{-1}}{x_{i+1}}\right)  T_{i}^{-1},
\]
for any polynomial $p$, and%
\[
\left(  \frac{pT_{i}^{-1}}{x_{i+1}}\right)  T_{i}^{-1}=\frac{p}{tx_{i}}%
\]
thus $tT_{i}^{-1}\Xi_{i+1}T_{i}^{-1}=\Xi_{i}$.
\end{proof}
Also we have:
\begin{proposition}
\label{XixiD}$\Xi_{i}=\xi_{i}+D_{i}$ for $1\leq i\leq N$.
\end{proposition}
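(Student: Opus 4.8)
The plan is to establish the identity first for $i=N$ and then propagate it down to all $i$ using the recursions. The previous proposition gives $\xi_i = tT_i^{-1}\xi_{i+1}T_i^{-1}$ and $\Xi_i = tT_i^{-1}\Xi_{i+1}T_i^{-1}$, while the definition of the $(q,t)$-Dunkl operators gives $D_i = tT_i^{-1}D_{i+1}T_i^{-1}$. Since the map $X\mapsto tT_i^{-1}XT_i^{-1}$ is additive, once we know $\Xi_{i+1}=\xi_{i+1}+D_{i+1}$ we immediately obtain $\Xi_i = tT_i^{-1}(\xi_{i+1}+D_{i+1})T_i^{-1} = \xi_i + D_i$. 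So a descending induction on $i$ reduces the whole statement to the base case $i=N$.

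For the base case I would just unfold the definitions, being careful that all operators act on the right. Writing $\xi_N = t^{1-N}T_{N-1}\cdots T_1\tau$, the definition of $\Xi_N$ reads $\Xi_N = \xi_N\bigl(1-\frac1{x_N}\bigr) + \frac1{x_N}$, and $D_N = (1-\xi_N)x_N^{-1} = x_N^{-1} - \xi_N x_N^{-1}$. Adding these, $\xi_N + D_N = \xi_N + x_N^{-1} - \xi_N x_N^{-1} = \xi_N\bigl(1-\frac1{x_N}\bigr) + \frac1{x_N} = \Xi_N$. The point is that in both $\Xi_N$ and $D_N$ the ``$1/x_N$'' factors sit to the right of $\xi_N$, so the rearrangement needs no commutation relation between $\xi_N$ and multiplication by $x_N^{-1}$.

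I do not expect a genuine obstacle here: the only delicate point is the bookkeeping of the right-action convention in the base case (making sure the factors $1-\frac1{x_N}$ and $x_N^{-1}$ are applied \emph{after} $\xi_N$, not before), and once that is settled both the base case and the inductive step are one-line formal manipulations. In effect the proposition just records that the two recursive presentations of $\xi$ and $\Xi$ are mutually compatible through the Dunkl recursion $D_i = tT_i^{-1}D_{i+1}T_i^{-1}$, which is exactly what makes the downward induction go through.
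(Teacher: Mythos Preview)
Your proposal is correct and follows essentially the same approach as the paper: establish the base case $i=N$ by unfolding the definitions (noting that $\Xi_N=\xi_N(1-x_N^{-1})+x_N^{-1}$ and $D_N=(1-\xi_N)x_N^{-1}$), then propagate downward by conjugation using the common recursion $X_i=tT_i^{-1}X_{i+1}T_i^{-1}$ satisfied by $\xi_i$, $\Xi_i$, and $D_i$. Your remark about the right-action bookkeeping is exactly the only subtlety, and it is handled correctly.
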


\begin{proof}
Start with $i=N$. For any polynomial $p$
\begin{eqnarray*}
p\Xi_{N}  & =t^{1-N}pT_{N-1}\ldots T_{1}\tau\left(  1-\frac{1}{x_{N}}\right)
+\frac{p}{x_{N}}\\
& =p\xi_{N}+\left(  p-p\xi_{N}\right)  /x_{N}\\
& =p\xi_{N}+pD_{N}.
\end{eqnarray*}
The claim follows by downward induction from the relations $D_{i}=tT_{i}%
^{-1}D_{i+1}T_{i}^{-1}$, $\xi_{i}=tT_{i}^{-1}\xi_{i+1}T_{i-1}^{-1}$, and
$\Xi_{i}=tT_{i}^{-1}\Xi_{i+1}T_{i}^{-1}$ .
\end{proof}

Polynomials $M_v$ can be computed by induction from $M_{0^N}$, by the help of the Yang-Baxter graph 
\cite{YangLasc}:
\begin{equation}\label{IntertwinM}
M_{v.s_i}=M_v\left(T_i+{1-t\over 1-{\langle v\rangle[i+1]\over \langle v\rangle[i]}}\right)\mbox{ if }v[i]<v[i+1]
\end{equation}
and
\begin{equation}\label{RaisM}
M_{v\Phi}=M_v\tau(x_N-1),
\end{equation}
where $[v[1],\dots,v[N]]\Phi=[v[2],\dots,v[N],v[1]+1]$ is the raising operator.

Note these polynomials are nonhomogeneous and the spectral vector associated to $M_v$ equals $\mathrm{Spec}_v$. We will consider also the specialization $\mathrm M_v=q^{n'(v)}t^{n(v)}M_v$. 
Alternatively, the Macdonald polynomial $M_v$ can be defined (up a normalization) by interpolation:
\begin{equation}\label{vanish}
M_v(\langle u\rangle)=0\mbox{ for }|v|\leq|u|, u\neq v
\end{equation}
and the coefficient of $x^v$ is $q^{-n'(v)}$.\\
The polynomial $\mathrm E_v$ can be recovered as a limit form from $\mathrm M_v$:
\begin{equation}\label{E2M}
\mathrm E_v(x_1,\dots,x_N)=\lim_{a\rightarrow 0}a^{|v|}\mathrm M_v(x_1/a,\dots,x_N/a).
\end{equation} 

\subsection{Symmetric Macdonald polynomials}
Symmetric Macdonald polynomials are defined as the eigenfunctions of the symmetric polynomials in the variables $\xi_i$. They can be obtained by applying the symmetrizing operator
\begin{equation}\label{S_N}
\mathcal S_N=\sum_{\sigma\in \S_N}T_\sigma
\end{equation}
on the Macdonald polynomial $E_{\lambda^-}$ where $\lambda^-$ is an increasing vector and $T_\sigma=T_{i_1}\dots T_{i_k}$ if $\sigma=s_{i_1}\dots s_{i_k}$ is a shortest decomposition of $\sigma$ in elementary transpositions. The coefficient of $x^\lambda$ in the expansion of $E_{\lambda^-}\mathcal S_N$ is the Poincar\'e polynomial $\phi_t(\S_\lambda)$ where $\S_\lambda$ denotes the stabilizer of $\lambda$ in $\S_N$.\\
We will consider several normalizations of the symmetric Macdonald polynomial. First $P_\lambda$ is such that the coefficient of $x^\lambda$ is $1$. The normalization $\mathrm P_\lambda={t^{n(\lambda)}\over c'_\lambda(q,t)}P_\lambda$ is interesting since the normalization reads:
\begin{equation}\label{P2E}
\mathrm P_\lambda:=\sum_{u^+=\lambda}\mathrm E_u
\end{equation}
where $u^+$ denotes the unique nonincreasing vector which is a permutation of $u$. Finally, we will also use the specialization $J_\lambda=c_{\lambda}(q,t)P_\lambda$ with
$c_v(q,t)=h_{q,t}(v,t)$ which has integral coefficients when expanded in terms of monomials \cite{Knop}.\\
For an infinite alphabet, symmetric Macdonald polynomials can be defined, up to  multiplication by a scalar, as the only  basis having dominance properties which is orthogonal with respect to the scalar product
\begin{equation}\label{scalar}
\langle p_\lambda,p_\mu\rangle_{q,t}=z_\lambda\prod_{i=1}^{\ell(\lambda)}{1-q^{\lambda_i}\over 1-t^{\lambda_i}}\delta_{\lambda,\mu},
\end{equation}
 where $p_\lambda=p_{\lambda_1}\dots p_{\lambda_n}$, $p_k=\sum_{x\in\X}x^k$ is a power sum symmetric function for $k\geq 1$, $\delta_{\lambda,\mu}=1$ if $\lambda=\mu$ and $0$ otherwise and $z_\lambda=\prod_{i=1}^\infty m_i!i^{m_i}$ where $m_i$ denotes the multiplicity of $i$ in $\lambda$.\\
The dual basis of $P_\lambda$ is another normalization of Macdonald polynomials usually denoted by $Q_\lambda$.
\[
\langle P_\lambda,Q_\mu\rangle_{q,t}=\delta_{\lambda,\mu}.
\]
The reproducing kernel
\[
K_{q,t}(\X,\Y)=\sum_{\lambda}P_\lambda(\X)Q_\lambda(\Y)=\sum_\lambda z_\lambda^{-1} \prod_{i=1}^{\ell(\lambda)}{1-t^{\lambda_i}\over 1-q^{\lambda_i}}p_\lambda(\X)p_\lambda(\Y),
\]
admits a nice expression when stated in terms of the Cauchy function $\sigma_1(\X)=\prod_{x\in\X}\frac1{1-x}$ and $\lambda$-ring  \cite{Lasc}:
\begin{equation}\label{ker}
K(\X,\Y)=\sigma_1\left({1-t\over 1-q}\X\Y\right)={\sigma_1\left(\frac1{1-q}\X\Y\right)\over \sigma_1\left(\frac t{1-q}\X\Y\right)},
\end{equation}
where the alphabet $\frac1{1-q}\X\Y$ means $\{q^ixy:i\in\N, x\in\X, y\in\Y\}$ and $\frac t{1-q}\X\Y$ means $\{tq^ixy:i\in\N, x\in\X, y\in\Y\}$.\\
Note formula (\ref{ker}) remains valid for finite alphabets.\\
 Denote $c_\lambda(\X;q,t)=\prod_i{1-t^{\lambda_i}\over 1-q^{\lambda_i}}p_\lambda(\X)$ the dual basis of $p_\lambda$. We have
\[
c_\lambda(\X;q,t)=\prod_i{1-t^{-\lambda_i}\over 1-q^{-\lambda_i}}p_\lambda(\X)=
\left(t\over q\right)^{|\lambda|}\prod_i{t^{-\lambda_i}-1\over q^{-\lambda_i}-1}p_\lambda(\X).
\]
For homogeneous polynomials $R$ and $S$ of global degree $d$ we have
\[
\langle R,S\rangle_{q^{-1},t^{-1}}={q^d\over t^d}\langle R,S\rangle_{q,t}.
\]
Hence,
\begin{equation}\label{Pqm1q}
P_\lambda(\X;q^{-1},t^{-1})={t^{|\lambda|}\over q^{|\lambda|}}P_\lambda(\X;q,t).
\end{equation}
For the normalization $\mathrm P_\lambda$ we have
\[
\mathrm P_\lambda(\X;q^{-1},t^{-1})={t^{-n(\lambda)}\over c_\lambda'(q^{-1},t^{-1})}P_\lambda(\X;q^{-1},t^{-1}).
\]
From 
\[c_\lambda'(q^{-1},t^{-1})=\prod_{(i,j)\in\lambda}\left(1-q^{-1-\arm_\lambda(i,j)}t^{-\leg_\lambda(i,j)}\right)
=
(-1)^{|\lambda|}q^{-|\lambda|-n'(\lambda)}t^{-n(\lambda)}c'_\lambda(q,t)
\]
we obtain
\begin{equation}\label{DPqm1q}
\mathrm P_\lambda(\X;q^{-1},t^{-1})={(-1)^{|\lambda|}t^{|\lambda|}q^{n'(\lambda)}\over c_\lambda'(q,t)} P_\lambda(\X;q,t)=\tau_\lambda \mathrm P_\lambda(\X;q,t).
\end{equation}
with $\tau_v=(-1)^{|v|}q^{n'(v)}t^{-n'(v^+)}$ (using the notation of Lascoux \emph{et al.} \cite{LRW}).
\subsection{Symmetric shifted Macdonald polynomials}
Following the previous subsection, we define the symmetric shifted Macdonald polynomials $MS_\lambda$ as the symmetrization of $M_{\lambda^-}$, where $v^-$ denotes the unique nondecreasing vector which is a permutation of $v$, such that the coefficient of $x^\lambda$ equals $q^{n'(\lambda)}$. More precisely,
\begin{equation}\label{M2MS}
M_{\lambda^-}\mathcal S_N=\phi_t(\S_\lambda) MS_\lambda,
\end{equation}
recall $\mathcal S_N$ is the symmetrizing operator (\ref{S_N}).\\
We will use also the normalization $\mathrm{MS}_\lambda={q^{n'(\lambda)}t^{n(\lambda)}\over c_\lambda'(q,t)} MS_\lambda$. With such a normalization we have
\begin{equation}\label{MS2M}
\mathrm{MS}_\lambda:=\sum_{u^+=\lambda}\mathrm M_u.
\end{equation}
Alternatively, the polynomials $M_\lambda$ are defined by interpolation:
\begin{equation}\label{symvanish}
MS_\lambda(\langle \mu\rangle)=0\mbox{ for }|\lambda|\leq|\mu|, \lambda\neq \mu.
\end{equation}
Note that $q^{-n'(\lambda)}P_\lambda$ is the homogeneous component of maximal degree in $MS_\lambda$, that is
\begin{equation}\label{P2MS}
\lim_{a\rightarrow 0} a^{|\lambda|}MS_\lambda(x_1/a,\dots,x_N/a)=q^{-n'(\lambda)}P_\lambda.
\end{equation}
Equivalently, from equalitions (\ref{E2M}), (\ref{P2E}) and (\ref{MS2M}), we obtain
\begin{equation}\label{DP2DMS}
\mathrm{P}_\lambda(x_1,\dots,x_N)=\lim_{a\rightarrow 0}a^{|\lambda|}\mathrm{MS}_\lambda(x_1/a,\dots,x_N/a).
\end{equation}
That is : $\mathrm{P}_\lambda(x_1,\dots,x_N)$ is the homogeneous component of maximal degree in $\mathrm{MS}_\lambda$.
\section{Generalized binomial coefficients}
\subsection{Definitions}
Sahi \cite{Sahi} generalized binomial coefficients:
\begin{equation}\label{qtbin[]}
\left[u\atop v\right]:={\mathrm M_v(\langle u\rangle)\over \mathrm M_v(\langle v\rangle)}
\end{equation}
From the vanishing properties we have
\begin{equation}
\left[u\atop v\right]=0\mbox{ if }|u|\leq|v|\mbox{ and }u\neq v.
\end{equation}
More generally, a theorem of Knop \cite{Knop2} implies
\begin{equation}\label{0qtbin[]}
\left[u\atop v\right]=0\mbox{ if }v^+\not\subseteq u^+,
\end{equation}
and in particular  $\left[u\atop 0^N\right]=\left[u\atop u\right]=1$.\\ \\
Okounkov \cite{Okoun} and Lassalle \cite{LASS} introduced independently 	the symmetric analogue of this coefficient:
\begin{equation}\label{qtbin()}
\left(u\atop v\right):={\mathrm {MS}_v(\langle u\rangle)\over \mathrm {MS}_v(\langle v\rangle)}
\end{equation}

\subsection{Okounkov binomial formula and consequences}
Symmetric binomial coefficients appear in  Okounkov's binomial formula.\\
Define 
$$\mathrm{MS}'_\lambda(x_1,\dots,x_N)=\frac{q^{|\lambda|}}{\tau_\lambda t^{(N-1)|\lambda|}}\mathrm {MS}_\lambda(t^{1-N}x_1,\dots,t^{1-N}x_N;q^{-1},t^{-1}).$$
\begin{theorem}\label{Okounkov} (Okounkov)
\[
\mathrm{MS}_\lambda(ax_1,\dots,ax_N)=\sum_{\mu}a^{|\mu|}\left(\lambda\atop\mu\right)_{q^{-1},t^{-1}}
{\mathrm{MS}_\lambda(a\langle 0\rangle)\over \mathrm{MS}_\mu(a\langle 0\rangle)}\mathrm {MS}'_\mu(x_1,\dots,x_N)
\]
\end{theorem}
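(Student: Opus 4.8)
The plan is to obtain Okounkov's binomial formula as a specialization of a more general interpolation identity, rather than proving it from scratch. The key observation is that both sides of the asserted identity are polynomials in the $N$ variables $x_1,\dots,x_N$ of degree at most $|\lambda|$, and moreover they are symmetric under $\S_N$. First I would establish that the family $\{\mathrm{MS}'_\mu\}_{|\mu|\le d}$ forms a basis of the space of symmetric polynomials of degree $\le d$: this follows because $\mathrm{MS}'_\mu$ is (up to the explicit scalar $\frac{q^{|\mu|}}{\tau_\mu t^{(N-1)|\mu|}}$ and the rescaling $x_i\mapsto t^{1-N}x_i$, $(q,t)\mapsto (q^{-1},t^{-1})$) the shifted symmetric Macdonald polynomial $\mathrm{MS}_\mu$, whose top-degree component is a nonzero multiple of $P_\mu$ by (\ref{P2MS}) and (\ref{DP2DMS}); triangularity with respect to dominance then gives the basis property. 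Hence there is a unique expansion $\mathrm{MS}_\lambda(ax_1,\dots,ax_N)=\sum_\mu c_\mu(a)\,\mathrm{MS}'_\mu(x_1,\dots,x_N)$ with coefficients $c_\mu(a)$ that are Laurent polynomials in $a$; it remains to identify them.

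To pin down $c_\mu(a)$, the standard device is to evaluate both sides at the spectral points $x=\langle\nu\rangle$ for partitions $\nu$ and exploit the vanishing properties (\ref{symvanish}). Evaluating at $x=\langle 0\rangle$ (i.e. $\nu=0^N$) and using that $\mathrm{MS}'_\mu(\langle 0\rangle)$ can be computed explicitly gives one normalization constraint; more usefully, I would argue that after the substitution the function $\nu\mapsto \mathrm{MS}'_\mu(\langle\nu\rangle)$ is itself, up to a known factor, a value of a shifted Macdonald polynomial at a spectral vector, so that the triangular system of evaluations $\{x=\langle\nu\rangle : \nu^+\subseteq\lambda\}$ is solvable in closed form. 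Carrying this out and matching with the definition (\ref{qtbin()}) of the symmetric binomial coefficient $\left(\lambda\atop\mu\right)$ and with the extra leading-coefficient data forces $c_\mu(a)=a^{|\mu|}\left(\lambda\atop\mu\right)_{q^{-1},t^{-1}}\frac{\mathrm{MS}_\lambda(a\langle 0\rangle)}{\mathrm{MS}_\mu(a\langle 0\rangle)}$, which is exactly the claimed formula. The appearance of the inverted parameters $(q^{-1},t^{-1})$ in the binomial coefficient is precisely accounted for by the parameter inversion built into the definition of $\mathrm{MS}'_\mu$, together with the transformation rules (\ref{Pqm1q}) and (\ref{DPqm1q}).

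The main obstacle I anticipate is bookkeeping the normalization scalars: there are several competing normalizations in play ($MS_\lambda$, $\mathrm{MS}_\lambda$, $\mathrm{MS}'_\mu$, and the binomial coefficients themselves carry the factor $\mathrm{MS}_v(\langle v\rangle)$), and the factor $\tau_\lambda = (-1)^{|\lambda|}q^{n'(\lambda)}t^{-n'(\lambda^+)}$ together with the powers of $t^{1-N}$ must be tracked through the parameter inversion without error. A clean way to control this is to check the identity first on the top-degree homogeneous component, where (\ref{P2MS}) reduces everything to the classical Macdonald-polynomial reproducing-kernel / Cauchy identity and the known value of $P_\mu$ at $\langle 0\rangle$; once the leading terms agree, the vanishing characterization (\ref{symvanish}) upgrades the match to the full inhomogeneous identity. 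Finally, I would remark that since all quantities are rational functions of $q,t$, it suffices to verify the identity for $q,t$ generic (not roots of unity), which is automatic here.
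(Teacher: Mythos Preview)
The paper does not prove Theorem~\ref{Okounkov}; it is stated with attribution to Okounkov (reference~\cite{Okoun}) and then used as a black box to derive Lemma~\ref{MS2P}. There is no proof in the paper to compare your proposal against.

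As for your sketch itself, the strategy---expand the left-hand side in the basis $\{\mathrm{MS}'_\mu\}$ and determine the coefficients by evaluating at spectral points---is indeed the standard route to binomial formulae of this type, and is close in spirit to Okounkov's original argument. One genuine gap to flag: the step ``evaluate at $x=\langle\nu\rangle$ and solve the triangular system'' is more delicate than your outline suggests. The quantity $\mathrm{MS}'_\mu(\langle\nu\rangle)$ involves a shifted polynomial at parameters $(q^{-1},t^{-1})$, rescaled by $t^{1-N}$, evaluated at a spectral vector built for the parameters $(q,t)$. To extract the binomial coefficient $\left(\lambda\atop\mu\right)_{q^{-1},t^{-1}}$ cleanly from these evaluations you need a symmetry or duality statement---essentially that $\mathrm{MS}'_\mu(\langle\nu\rangle)$ equals, up to explicit scalars, a binomial coefficient with $\mu$ and $\nu$ interchanged. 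You gesture at this (``is itself, up to a known factor, a value of a shifted Macdonald polynomial at a spectral vector'') but do not supply it, and it is the crux of the argument; without it the triangular system does not obviously close. The normalization bookkeeping you worry about is secondary by comparison.
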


Note from eq. (\ref{DP2DMS}) and (\ref{DPqm1q}), we obtain:
\begin{equation}
\lim_{a\rightarrow 0}a^{|\lambda|}\mathrm{MS'}_\lambda(x_1/a,\dots,x_N/a)=q^{n'(\lambda)}\mathrm P_\lambda.
\end{equation}
Hence, if we apply Theorem \ref{Okounkov} to the alphabet $\{x_1/a,\dots,x_N/a\}$ and we take the limit when $a$ tends to $0$ on the right hand side the we obtain:
\begin{equation}
\mathrm{MS}_\lambda(x_1,\dots,x_N)=\sum_{\mu}q^{n'(\mu)}\left(\lambda\atop\mu\right)_{q^{-1},t^{-1}}
{\mathrm{MS}_\lambda(0,\dots,0)\over \mathrm{MS}_\mu(0,\dots,0)}{\mathrm P}_\mu(x_1,\dots,x_N).
\end{equation}
The constant term $\mathrm{M}_v(0,\dots,0)$ is related to the principal specialization of $\mathrm E_v$ \cite{LRW}:
\begin{equation}\label{M02E<0>}
\mathrm M_v(0,\dots,0)=\tau_v\mathrm E_v(\langle 0\rangle).\end{equation}
Note $\tau_v=\tau_{v^+}$ hence from eq. (\ref{P2E}) and (\ref{MS2M}) we obtain
\begin{equation}\label{MS02E<0>}
\mathrm{MS}_\lambda(0,\dots,0)=\tau_\lambda\mathrm P_\lambda(\langle 0\rangle)=\tau_\lambda{t^{n(\lambda)}\over c'_\lambda(q,t)}P_\lambda(\langle 0\rangle).
\end{equation}
Hence,
\begin{equation}
\mathrm{MS}_\lambda(x_1,\dots,x_N)=q^{n'(\lambda)}\sum_{\mu}
{\tau_\lambda\over\tau_\mu}{t^{n(\lambda)}c'_\mu\over t^{n(\mu)}c'_\lambda}\left(\lambda\atop\mu\right)_{q^{-1},t^{-1}}{P_\lambda(\langle 0\rangle)\over P_\mu(\langle 0\rangle)}{\mathrm P}_\mu(x_1,\dots,x_N).
\end{equation}
Or equivalently:
\begin{lemma}\label{MS2P}
\begin{equation}
MS_\lambda(x_1,\dots,x_N)=\sum_{\mu}
{\tau_\lambda\over\tau_\mu}\left(\lambda\atop\mu\right)_{q^{-1},t^{-1}}{P_\lambda(\langle 0\rangle)\over P_\mu(\langle 0\rangle)} P_\mu(x_1,\dots,x_N).
\end{equation}
\end{lemma}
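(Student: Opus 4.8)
The plan is to read Lemma~\ref{MS2P} as a mere change of normalization in the expansion of $\mathrm{MS}_\lambda$ displayed immediately above it (the identity beginning $\mathrm{MS}_\lambda(x_1,\dots,x_N)=q^{n'(\lambda)}\sum_{\mu}\dots\mathrm P_\mu(x_1,\dots,x_N)$). Thus the work divides into two parts: first, establishing that intermediate identity from Okounkov's binomial formula; second, clearing the $q^{n'}$, $t^{n}$, $\tau$, and hook-product factors to pass from $\mathrm{MS}_\lambda,\mathrm P_\mu$ to $MS_\lambda,P_\mu$.

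For the first part I would take Theorem~\ref{Okounkov}, substitute $x_i\mapsto x_i/a$ so that the left-hand side becomes $\mathrm{MS}_\lambda(x_1,\dots,x_N)$ (independent of $a$), and let $a\to 0$. Each summand then contributes $a^{|\mu|}\mathrm{MS}'_\mu(x_1/a,\dots,x_N/a)\to q^{n'(\mu)}\mathrm P_\mu$ by the limit deduced above from (\ref{DP2DMS}) and (\ref{DPqm1q}), while $\mathrm{MS}_\lambda(a\langle 0\rangle)/\mathrm{MS}_\mu(a\langle 0\rangle)\to \mathrm{MS}_\lambda(0,\dots,0)/\mathrm{MS}_\mu(0,\dots,0)$ because the $\mathrm{MS}$'s are polynomials and $a\langle 0\rangle\to 0^N$. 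Rewriting the two constant terms with (\ref{M02E<0>})--(\ref{MS02E<0>}) as $\mathrm{MS}_\lambda(0,\dots,0)=\tau_\lambda\,t^{n(\lambda)}c'_\lambda(q,t)^{-1}P_\lambda(\langle 0\rangle)$, and similarly for $\mu$, gives exactly the intermediate display.

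For the second part I would substitute $\mathrm{MS}_\lambda=q^{n'(\lambda)}t^{n(\lambda)}c'_\lambda(q,t)^{-1}MS_\lambda$ and $\mathrm P_\mu=t^{n(\mu)}c'_\mu(q,t)^{-1}P_\mu$ into that display. On the right-hand side the $\mu$-dependent pair $c'_\mu(q,t)/t^{n(\mu)}$ and $t^{n(\mu)}/c'_\mu(q,t)$ cancels, so the prefactor $q^{n'(\lambda)}t^{n(\lambda)}c'_\lambda(q,t)^{-1}$ factors out of the sum on both sides; dividing through by it yields precisely
\[
MS_\lambda(x_1,\dots,x_N)=\sum_{\mu}\frac{\tau_\lambda}{\tau_\mu}\left(\lambda\atop\mu\right)_{q^{-1},t^{-1}}\frac{P_\lambda(\langle 0\rangle)}{P_\mu(\langle 0\rangle)}P_\mu(x_1,\dots,x_N),
\]
which is the assertion.

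The only genuinely non-formal points — and hence the main obstacle — are the legitimacy of the term-by-term limit and the well-definedness of the quotients. The sum is finite: both sides are polynomials of degree at most $|\lambda|$, and in fact the interpolation/vanishing property (\ref{symvanish}) and its extra-vanishing refinement force $\left(\lambda\atop\mu\right)_{q^{-1},t^{-1}}=0$ unless $\mu\subseteq\lambda$, so the limit runs over a fixed finite index set. For the denominators one invokes that $q,t$ are not roots of unity: $P_\mu(\langle 0\rangle)$ is the principal specialization of a Macdonald polynomial, a nonvanishing product of factors $1-q^{a}t^{b}$, whence $\mathrm{MS}_\mu(0,\dots,0)=\tau_\mu\,t^{n(\mu)}c'_\mu(q,t)^{-1}P_\mu(\langle 0\rangle)\neq 0$ as well. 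Beyond that, the task is simply to track the exponents $n(\cdot)$, $n'(\cdot)$, the sign–monomial $\tau$, and the hook products $c'(\cdot)$ consistently through the two renormalizations; I expect this bookkeeping, rather than any conceptual issue, to be where an error could most easily slip in.
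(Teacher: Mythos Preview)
Your proposal is correct and follows essentially the same route as the paper: apply Okounkov's formula with $x_i\mapsto x_i/a$, send $a\to 0$ using the limit $a^{|\mu|}\mathrm{MS}'_\mu(x/a)\to q^{n'(\mu)}\mathrm P_\mu$, rewrite the constant terms via (\ref{MS02E<0>}), and then strip off the normalizations $\mathrm{MS}_\lambda=q^{n'(\lambda)}t^{n(\lambda)}c'_\lambda{}^{-1}MS_\lambda$ and $\mathrm P_\mu=t^{n(\mu)}c'_\mu{}^{-1}P_\mu$. Your added remarks on the finiteness of the sum and the nonvanishing of $P_\mu(\langle 0\rangle)$ are more than the paper itself spells out.
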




\section{Principal specializations}

\subsection{Principal specialization of $E_v$}
From equation (\ref{M02E<0>}) the specialization $E_v(\langle 0\rangle)$ is obtained (up to a multiplicative constant $(-1)^*q^\star t^\circ$) from the constant term $M_v(0,\dots,0)$. For simplicity, we sometimes neglect the multiplication by $(-1),\ q$ and $t$ of our polynomials and denote it by $(\star)$.\\
We introduce the classical $q$-Pochhammer symbol: $(a,q)_N=\prod_{i=1}^N(1-aq^{i-1})$ and its generalization for a partition $\lambda$:
\[
(a;q,t)_\lambda=\prod_{i=1}^N(at^{1-i},q)_{\lambda[i]}.
\]
Since $1T_i=t$, we obtain
\begin{equation}\label{IntertwinM0}
M_{v.s_i}(0)=M_v(0)\left(1-t{\langle v\rangle[i+1]\over\langle v\rangle[i]}\over 
1-{\langle v\rangle[i+1]\over\langle v\rangle[i]}\right), if v[i]<v[i+1] 
\end{equation}
from (\ref{IntertwinM}).
From (\ref{RaisM}) we have also
\begin{equation}\label{RaisM0}
M_{v\Phi}(0)=-M_v(0).
\end{equation}
We have:
\begin{proposition}\label{hookM}
For any vector $v\in\N^N$:
\[
M_v(0)=(-1)^{|v|}{(t^Nq;q,t)_{v^+}\over h_{q,t}(v,qt)}
\]
\end{proposition}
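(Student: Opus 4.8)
The plan is to compute $M_v(0)$ by induction on $|v|$ using the two recurrences \eqref{IntertwinM0} and \eqref{RaisM0}, following the structure of the Yang--Baxter graph. First I would establish the base case $v=0^N$: here $M_{0^N}=q^{-n'(0^N)}=1$ (it is the constant polynomial), so $M_{0^N}(0)=1$, and one checks that the claimed right-hand side also evaluates to $1$ since $(t^Nq;q,t)_{0^N}=1$ and $h_{q,t}(0^N,qt)=1$ (empty product over cells). The inductive step splits into the two moves that generate all vectors from $0^N$: the affine/raising move $v\mapsto v\Phi$ and the sorting move $v\mapsto v.s_i$ when $v[i]<v[i+1]$.

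For the raising step, I would use \eqref{RaisM0}: $M_{v\Phi}(0)=-M_v(0)$. On the right-hand side, passing from $v$ to $v\Phi=[v[2],\dots,v[N],v[1]+1]$ adds one cell in a new row of length $v[1]+1$ to the (reordered) shape, contributing exactly one sign $(-1)$ to $(-1)^{|v|}$, and I must check that the ratio $(t^Nq;q,t)_{(v\Phi)^+}/h_{q,t}(v\Phi,qt)$ equals $(t^Nq;q,t)_{v^+}/h_{q,t}(v,qt)$. The $q$-Pochhammer factor is symmetric in the sense that it depends only on $v^+$, and $(v\Phi)^+$ differs from $v^+$ by... actually one must be careful: $\Phi$ changes the partition $v^+$, so this needs the known behaviour of $(t^Nq;q,t)_\lambda$ and of the $(q,t)$-hook product $h_{q,t}(\cdot,qt)$ under the raising operator, which is precisely the content of the Yang--Baxter recurrence for $c''_v(q,t)=h_{q,t}(v,qt)$ recorded in Lascoux--Rains--Warnaar. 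For the transposition step I would use \eqref{IntertwinM0}, so I need the matching identity
\[
{(t^Nq;q,t)_{(v.s_i)^+}\over h_{q,t}(v.s_i,qt)}=\left(1-t{\langle v\rangle[i+1]\over\langle v\rangle[i]}\over 1-{\langle v\rangle[i+1]\over\langle v\rangle[i]}\right){(t^Nq;q,t)_{v^+}\over h_{q,t}(v,qt)},
\]
noting $(v.s_i)^+=v^+$ so the Pochhammer symbols cancel and this reduces to the standard transformation of the hook product $h_{q,t}(v,qt)$ under $s_i$ expressed via the spectral vector $\langle v\rangle$ — again exactly the arm/leg bookkeeping underlying the intertwiner normalization.

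The main obstacle I anticipate is the raising step: controlling how $(t^Nq;q,t)_\lambda$ and $h_{q,t}(v,qt)$ transform under $\Phi$, because $\Phi$ genuinely changes the underlying partition (unlike $s_i$), so one cannot simply cancel the Pochhammer factors. The cleanest route is probably to invoke the already-established recurrence that $\mathcal E_v=c''_v(q,t)E_v=h_{q,t}(v,qt)E_v$ satisfies under $\Phi$ and $s_i$ (these are the Knop polynomials with integral coefficients), evaluate at $0$, and read off the formula; alternatively one verifies the Pochhammer/hook identities directly from the definitions of $\arm_v$, $\leg_v$, and the column structure of $v^+$ versus $(v\Phi)^+$. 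Either way the verification is a finite, if delicate, manipulation of arm and leg lengths, and once both inductive steps are checked the proposition follows since every $v\in\N^N$ is reachable from $0^N$ by a sequence of $\Phi$ and $s_i$ moves.
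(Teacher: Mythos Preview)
Your plan follows the same inductive scaffold as the paper: start at $0^N$, verify the transposition step \eqref{IntertwinM0} by showing the hook product $h_{q,t}(v,qt)$ picks up exactly the spectral-ratio factor (the Pochhammer being invariant since $(v.s_i)^+=v^+$), then handle the raising move. Your transposition argument and the paper's are the same computation.

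The genuine tactical difference is in the raising step. You propose to verify $P(v\Phi)=-P(v)$ directly for arbitrary $v$, and you correctly flag this as the obstacle: both $(t^Nq;q,t)_{(\cdot)^+}$ and $h_{q,t}(\cdot,qt)$ change nontrivially under $\Phi$, and $v[1]$ can sit anywhere in the rank order of $v$. The paper sidesteps this entirely. Instead of checking the $\Phi$ move in isolation, it proves a partition-to-partition lemma (Lemma~\ref{DecPart}): if $\lambda$ is a partition with last nonzero part $\lambda[m]$ and $\gamma$ is $\lambda$ with that part decreased by one, then $M_\gamma(0)=P(\gamma)$ implies $M_\lambda(0)=P(\lambda)$. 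The proof bundles one $\Phi$ with two blocks of transpositions ($\gamma\to\beta\to\alpha=\beta\Phi\to\lambda$), computes $M_\lambda(0)/M_\gamma(0)$ from the already-verified $s_i$ factors and the trivial $-1$ from $\Phi$, and then compares with $P(\lambda)/P(\gamma)$ computed directly on partitions --- where the hook and Pochhammer ratios are explicit (only column $\lambda[m]$ and row $m$ change). The global induction then runs over the operation $\arp$ (add $1$ to a zero or minimal nonzero part), using Lemma~\ref{DecPart} for the partition move and the $s_i$ formula to transfer between a vector and its sort.

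Both routes are valid. Your direct approach is conceptually cleaner but forces you to track arm/leg changes under $\Phi$ for an unsorted $v$, which is delicate; the paper's detour through partitions trades that for a longer but more mechanical hook computation. Your fallback of citing the Knop normalization $c''_v$ recurrences from Lascoux--Rains--Warnaar would also work, but is closer to importing the result than proving it.
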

\begin{bproof} It suffices to prove that the right hand side $P(v)={(t^Nq;q,t)_{v^+}\over h_{q,t}(v,qt)}$ satisfies the recurrence relations (\ref{IntertwinM0}) and (\ref{RaisM0}).\\
Let us first prove that if $v[i]<v[i+1]$
\begin{equation}\label{Psi}
{P(v.s_i)\over P(v)}=\left(1-t{\langle v\rangle[i+1]\over\langle v\rangle[i]}\over 
1-{\langle v\rangle[i+1]\over\langle v\rangle[i]}\right).
\end{equation}
Note only  $h_{q,t}(v,qt)$ changes in $P(v)$: 
\[
{P(v.s_i)\over P(v)}={h_{q,t}(v.s_i,qt)\over h_{q,t}(v,qt)}.
\]
The only factor in $h_{q,t}(v,qt)$ that changes is the factor for the cell $(i+1,v[i]+1)$.
So
\[
{h_{q,t}(v.s_i,qt)\over h_{q,t}(v,qt)}={1-q^{v[i+1]-v[i]}t^{a+1}\over 1-q^{v[i+1]-v[i]}t^{a}},
\]
where
\[
a=\leg_v(i+1,v[i]+1)=\leg_{v.s_i}(i+1,v[i]+1)+1.
\]
Let
\begin{eqnarray*}
A_{1} &  =\left\{  l:l<i,v\left[  l\right]  \geq v\left[  i\right]  \right\}
\cup\left\{  l:l>i+1,v\left[  i\right]  <v\left[  l\right]  \right\}  ,\\
A_{2} &  =\left\{  l:l<i,v\left[  l\right]  \geq v\left[  i+1\right]
\right\}  \cup\left\{  l:l>i+1,v\left[  i+1\right]  <v\left[  l\right]
\right\}  ,
\end{eqnarray*}
then $A_{2}\subset A_{1}$ and%
\begin{eqnarray*}
A_{1}\backslash A_{2} &  =\left\{  l:l<i,v\left[  i\right]  \leq v\left[
l\right]  <v\left[  i+1\right]  \right\}  \\
&  \cup\left\{  l:l>i+1,v\left[  i\right]  <v\left[  l\right]  \leq v\left[
i+1\right]  \right\}  ;
\end{eqnarray*}
thus $a=\#A_{1}\backslash A_{2}+1$. Also $r_{v}\left[  i\right]  =\#A_{1}+2$
and $r_{v}\left[  i+1\right]  =\#A_{2}+1$, and so $a=r_{v}\left[  i\right]
-r_{v}\left[  i+1\right]  $. This finishes the proof of (\ref{Psi}) since
${\left\langle v\right\rangle [i+1]\over\left\langle v\right\rangle[i]}%
=q^{v\left[  i+1\right]  -v\left[  i\right]  }t^{r_{v}\left[  i\right]
-r_{v}\left[  i+1\right]  }$.\\
\end{bproof}
To prove the second part we need the following lemma
\begin{lemma}\label{DecPart}
Let $\lambda$ be a decreasing partition and $m\leq N$ the biggest integer such that $\lambda[m]>0$. Let 
$$\gamma=[\lambda[1],\dots,\lambda[m-1],\lambda[m]-1,0^{N-m}].$$
If $M_\gamma(0)=P(\gamma)$ then $M_\lambda(0)=P(\lambda)$.
\end{lemma}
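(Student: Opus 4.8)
The plan is to pass from $M_\gamma(0)$ to $M_\lambda(0)$ by moving along the Yang--Baxter graph, using only the intertwining relation (\ref{IntertwinM0}), the raising relation (\ref{RaisM0}), and the identity (\ref{Psi}) for the closed form $P$ that was established in the proof of Proposition \ref{hookM}. The bridge between $\gamma$ and $\lambda$ is the auxiliary vector
\[
v=[\lambda[m]-1,\lambda[1],\dots,\lambda[m-1],0^{N-m}].
\]
Since $\lambda$ is decreasing we have $\lambda[m-1]\ge\lambda[m]>\lambda[m]-1$, so the nonincreasing reordering of $v$ is exactly $\gamma$; moreover $v\Phi=[\lambda[1],\dots,\lambda[m-1],0^{N-m},\lambda[m]]$, whose nonincreasing reordering is $\lambda$. (When $m=1$ one simply has $v=\gamma$, and when $m=N$ one has $v\Phi=\lambda$; these degenerate cases are harmless below.)

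First I would connect $M_v(0)$ to $M_\gamma(0)$: as $v$ is a permutation of $\gamma$ one can sort it into $\gamma$ by a finite sequence of adjacent transpositions, each applied at an ascent, producing vertices $v=u_0,u_1,\dots,u_\ell=\gamma$ with $u_{r+1}=u_rs_{i_r}$ and $u_r[i_r]<u_r[i_r+1]$; along every such edge (\ref{IntertwinM0}) multiplies $M(0)$ and (\ref{Psi}) multiplies $P$ by the same nonzero rational factor, so $M(0)/P$ is constant along the path, and the hypothesis $M_\gamma(0)=P(\gamma)$ forces $M_v(0)=P(v)$. Next, (\ref{RaisM0}) gives $M_{v\Phi}(0)=-M_v(0)$. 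Finally I would sort $v\Phi$ into $\lambda$: for $m<N$ this is achieved by the ascents $s_{N-1},s_{N-2},\dots,s_m$ (moving the entry $\lambda[m]>0$ leftward past the $N-m$ zeros), while for $m=N$ nothing is needed; transporting once more with (\ref{IntertwinM0}) and (\ref{Psi}) gives $M_\lambda(0)/P(\lambda)=M_{v\Phi}(0)/P(v\Phi)$. Combining the three transports, the lemma reduces to the single identity $P(v\Phi)=-P(v)$. Since $|v\Phi|=|v|+1$, and since $v^+=\gamma$ and $(v\Phi)^+=\lambda$ differ only in the coordinate indexed by $m$ (with $\lambda[m]=\gamma[m]+1$), this is in turn equivalent to
\[
\frac{(t^Nq;q,t)_\lambda}{(t^Nq;q,t)_\gamma}=\frac{h_{q,t}(v\Phi,qt)}{h_{q,t}(v,qt)},
\]
whose left-hand side follows immediately from the definition of the $q$-Pochhammer symbol and equals $1-t^{N+1-m}q^{\lambda[m]}$.

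The main obstacle, and the only genuine computation, is the ratio of $(q,t)$-hook products on the right. I would compare the diagrams of $v$ and $v\Phi$ cell by cell: the operator $\Phi$ relabels rows $2,\dots,m$ of $v$ (of lengths $\lambda[1],\dots,\lambda[m-1]$) as rows $1,\dots,m-1$ of $v\Phi$, and turns the old first row of $v$ (of length $\lambda[m]-1$) into the new row $N$ of $v\Phi$ (of length $\lambda[m]$). A direct inspection of the definitions of $\arm_v$ and $\leg_v$ shows that each relabelled cell keeps both its arm-length and its leg-length, so these cells contribute identical factors to the two hook products. It then remains to match the old row $1$ of $v$ against the new row $N$ of $v\Phi$: both produce the factor $\prod_{j=1}^{\lambda[m]-1}(1-q^jt)$, the only difference being the single extra factor $1-q^{\lambda[m]}t^{N+1-m}$ contributed by the new cell $(N,\lambda[m])$, whose leg-length is $N-m$ owing to the $N-m$ empty rows lying above it. Hence $h_{q,t}(v\Phi,qt)/h_{q,t}(v,qt)=1-q^{\lambda[m]}t^{N+1-m}$, which matches the $q$-Pochhammer ratio above; together with the three transport steps this yields $M_\lambda(0)=P(\lambda)$.
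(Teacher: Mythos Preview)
Your route is essentially the paper's---same auxiliary vectors $v=\beta$ and $v\Phi=\alpha$ bridging $\gamma$ and $\lambda$, same raising step---but your use of (\ref{Psi}) to transport $M(0)/P$ along the graph is a tidy shortcut: the paper instead computes the two transposition products $M_\lambda(0)/M_\alpha(0)$ and $M_\beta(0)/M_\gamma(0)$ explicitly, then separately evaluates $P(\lambda)/P(\gamma)$ via the \emph{partition} hook products $h_{q,t}(\lambda,qt)$ and $h_{q,t}(\gamma,qt)$, and matches the two expressions. Reducing everything to the single identity $P(v\Phi)=-P(v)$ for the compositions is cleaner and requires only one hook-product comparison.

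There is, however, an error in your last paragraph. The ``extra'' cell in row $N$ of $v\Phi$ is $(N,1)$, not $(N,\lambda[m])$. Indeed $\arm_{v\Phi}(N,\lambda[m])=0$, so that cell cannot supply a factor containing $q^{\lambda[m]}$; and for $\lambda[m]\ge 2$ one checks $\leg_{v\Phi}(N,\lambda[m])=0$ as well, giving the factor $1-qt$ rather than $1-q^{\lambda[m]}t^{N+1-m}$. The correct matching pairs cell $(1,j)$ of $v$ with cell $(N,j+1)$ of $v\Phi$ for $1\le j\le\lambda[m]-1$: both have arm $\lambda[m]-1-j$ and leg $0$, hence identical hook factors. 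The leftover cell is $(N,1)$, with arm $\lambda[m]-1$ and leg $N-m$ (the latter coming from the $N-m$ zero rows $m,\ldots,N-1$, each satisfying $1\le(v\Phi)[k]+1=1\le\lambda[m]$). This cell contributes exactly $1-qt\cdot q^{\lambda[m]-1}t^{N-m}=1-q^{\lambda[m]}t^{N+1-m}$, so the hook-product ratio---and the rest of your argument---stands once this attribution is corrected.
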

\begin{proof}
 Let%
\begin{eqnarray*}
\beta &  =\left[  \lambda\left[  m\right]  -1,\lambda\left[  1\right]
,\ldots,\lambda\left[  m-1\right]  ,0,\ldots\right]  \\
\alpha &  =\left[  \lambda\left[  1\right]  ,\ldots,\lambda\left[  m-1\right]
,0,\ldots0,\lambda\left[  m\right]  \right]  .
\end{eqnarray*}
We evaluate $M_{\beta}\left(  0\right)  /M_{\gamma}\left(  0\right)
,M_{\lambda}\left(  0\right)  /M_{\alpha}\left(  0\right)  $ and use
$M_{\alpha}\left(  0\right)  =-M_{\beta}\left(  0\right)  $ to prove
$M_{\lambda}\left(  0\right)  /M_{\gamma}\left(  0\right)  =P\left(
\lambda\right)  /P\left(  \gamma\right)  $. The formula (\ref{IntertwinM0}) is used
repeatedly in the following calculations. First we have
\[
\frac{M_{\lambda}\left(  0\right)  }{M_{\alpha}\left(  0\right)  }%
=\frac{1-q^{\lambda\left[  m\right]  }t^{N-m+1}}{1-q^{\lambda\left[  m\right]
}t}.
\]
Since $\lambda\left[  m-1\right]  \geq\lambda\left[  m\right]  $ it follows
that $\gamma\left[  i\right]  >\gamma\left[  m\right]  $ for $1\leq i<m$. Then%
\[
\frac{M_{\beta}\left(  0\right)  }{M_{\gamma}\left(  0\right)  }%
=\prod\limits_{i=1}^{m-1}\frac{1-q^{\lambda\left[  i\right]  -\lambda\left[
m\right]  +1}t^{m-i}}{1-q^{\lambda\left[  i\right]  -\lambda\left[  m\right]
+1}t^{m-i+1}}.
\]
Next $\dfrac{\left(  qt^{N};q,t\right)  _{\lambda}}{\left(  qt^{N};q,t\right)
_{\gamma}}=\left(  1-q^{\lambda\left[  m\right]  }t^{N+1-m}\right)  $ and we
evaluate $\dfrac{h_{q,t}\left(  \gamma,qt\right)  }{h_{q,t}\left(
\lambda,qt\right)  }$. In this ratio the only cells that have a changed factor
are those in column $\lambda\left[  m\right]  $ and in row $m$. So
$\leg_\lambda\left( i,\lambda\left[  m\right]  \right)  =m-i$ and 
$\leg_\gamma\left(
i,\lambda\left[  m\right]  \right)  =m-i-1$ for $1\leq i<m$. The cells
in $\left\{  \left(  1,\lambda\left[  m\right]  \right)  ,\cdots\left(
m-1,\lambda\left[  m\right]  \right)  \right\}  $ contribute
\[
\prod\limits_{i=1}^{m-1}\frac{1-q^{\lambda\left[  i\right]  -\lambda\left[
m\right]  +1}t^{m-i}}{1-q^{\lambda\left[  i\right]  -\lambda\left[  m\right]
+1}t^{m-i+1}}%
\]
to the ratio and the cells in row $m$ contribute%
\[
\frac{\prod\nolimits_{i=1}^{\lambda\left[  m\right]  -1}\left(  1-tq^{\lambda
\left[  m\right]  -i}\right)  }{\prod\nolimits_{i=1}^{\lambda\left[  m\right]
}\left(  1-tq^{\lambda\left[  m\right]  +1-i}\right)  }=\frac{1}%
{1-tq^{\lambda\left[  m\right]  }}.
\]
Thus
\[
\frac{\left(  qt^{N};q,t\right)  _{\lambda}h_{q,t}\left(  \gamma,qt\right)
}{\left(  qt^{N};q,t\right)  _{\gamma}h_{q,t}\left(  \lambda,qt\right)
}=\frac{1-t^{N-m+1}q^{\lambda\left[  m\right]  }}{1-tq^{\lambda\left[
m\right]  }}\prod\limits_{i=1}^{m-1}\frac{1-q^{\lambda\left[  i\right]
-\lambda\left[  m\right]  +1}t^{m-i}}{1-q^{\lambda\left[  i\right]
-\lambda\left[  m\right]  +1}t^{m-i+1}}.
\]
This agrees with $-M_{\lambda}\left(  0\right)  /M_{\gamma}\left(  0\right)$ and proves the lemma.
\end{proof}
\begin{enproof}{End of the proof of Proposition \ref{hookM}}
Any vector $v$ can be obtained from $0^N$ by adding $1$  to a null part or to a minimal nonzero part. We will denote this operation by $\displaystyle\arp$. For instance,
\[
[0,0,0]\arp [0,1,0] \arp [0,2,0] \arp  [0,3,0] \arp [1,3,0].
\]
Let $v\arp v'$. Suppose $v'^+=[\lambda_1,\dots,\lambda_m,0,\dots,0]$ then $v^+=[\lambda_1,\dots,\lambda_{m-1},\lambda_m-1,0,\dots,0]$. Hence, if $M_{v^+}(0,\dots,0)=P(v^+)$ then by lemma \ref{DecPart} $M_{v'^+}(0)=P(v'^+)$. And using repeatedly eq. (\ref{Psi}), we obtain $M_{v}(0)=P(v)$ implies $M_{v'}(0)=P(v')$.\\
Since $M_{0^N}(0)=P(0^N)$ the result is shown by a straightforward induction.\end{enproof}
As a direct consequence
\begin{corollary}
\[E_v(\langle0\rangle)=(\star){(t^Nq;q,t)_{v^+}\over h_{q,t}(v,qt)}.\]
\end{corollary}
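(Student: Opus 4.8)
The plan is to read the corollary straight off Proposition~\ref{hookM} by running backwards through the chain of normalizations collected in Section~2. The starting point is identity~(\ref{M02E<0>}), which reads $\mathrm{M}_v(0,\dots,0)=\tau_v\,\mathrm{E}_v(\langle 0\rangle)$. Substituting the definition $\mathrm{E}_v=\frac{t^{n(v)}}{c'_v(q,t)}E_v$ from~(\ref{rmE_v}) together with $\mathrm{M}_v=q^{n'(v)}t^{n(v)}M_v$ and $\tau_v=(-1)^{|v|}q^{n'(v)}t^{-n'(v^+)}$, the two powers of $t^{n(v)}$ cancel and one obtains
\[
E_v(\langle 0\rangle)=\frac{c'_v(q,t)\,q^{n'(v)}}{\tau_v}\,M_v(0,\dots,0)=(-1)^{|v|}\,c'_v(q,t)\,t^{n'(v^+)}\,M_v(0,\dots,0).
\]
Thus the corollary is nothing more than Proposition~\ref{hookM} multiplied by the explicit scalar $(-1)^{|v|}c'_v(q,t)\,t^{n'(v^+)}$.

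It then suffices to insert $M_v(0,\dots,0)=(-1)^{|v|}(t^Nq;q,t)_{v^+}/h_{q,t}(v,qt)$ from Proposition~\ref{hookM}: the two factors $(-1)^{|v|}$ cancel, leaving
\[
E_v(\langle 0\rangle)=c'_v(q,t)\,t^{n'(v^+)}\,\frac{(t^Nq;q,t)_{v^+}}{h_{q,t}(v,qt)}.
\]
The leftover prefactor $c'_v(q,t)\,t^{n'(v^+)}$ is exactly a normalization constant — a power of $t$ times the hook scalar $c'_v(q,t)=h_{q,t}(v,q)$ that was introduced to pass between $E_v$ and $\mathrm{E}_v$ — and is therefore absorbed into the symbol $(\star)$, which yields the stated identity.

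There is no genuine obstacle at this step: all of the arithmetic content already sits in Proposition~\ref{hookM}, and what remains is purely the bookkeeping of the scalars $\tau_v$, $c'_v(q,t)$, $n(v)$, $n'(v)$ and $n'(v^+)$. The only point to keep straight is which of these are honest monomials in $-1,q,t$, so that~(\ref{M02E<0>}) can be handled modulo $(\star)$ in the same spirit as the conclusion itself; no induction, no Yang--Baxter manipulation and no appeal to the binomial formulae are needed here.
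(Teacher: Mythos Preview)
Your route is exactly the paper's: pass from $E_v(\langle 0\rangle)$ to $M_v(0,\dots,0)$ via equation~(\ref{M02E<0>}) and then invoke Proposition~\ref{hookM}. The bookkeeping you carry out is also correct, yielding
\[
E_v(\langle 0\rangle)=(-1)^{|v|}\,t^{n'(v^{+})}\,c'_v(q,t)\,M_v(0,\dots,0).
\]

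The gap is in your last sentence. The symbol $(\star)$ is defined in the paper to stand for a monomial $(-1)^{*}q^{\star}t^{\circ}$, and you absorb into it the factor $c'_v(q,t)=h_{q,t}(v,q)=\prod_{(i,j)\in v}\bigl(1-q^{1+\arm_v(i,j)}t^{\leg_v(i,j)}\bigr)$. This is a genuine hook product, not a monomial in $q$ and $t$, so it cannot be swallowed by $(\star)$ under the paper's own convention. Your displayed identity therefore does \emph{not} reduce to the stated corollary; what you have actually proved is
\[
E_v(\langle 0\rangle)=(\star)\,h_{q,t}(v,q)\,\frac{(t^{N}q;q,t)_{v^{+}}}{h_{q,t}(v,qt)},
\]
with an extra hook factor in the numerator. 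To be fair, the paper's own one-line justification (``From equation~(\ref{M02E<0>}) the specialization $E_v(\langle 0\rangle)$ is obtained up to a multiplicative constant $(-1)^{*}q^{\star}t^{\circ}$ from $M_v(0,\dots,0)$'') glosses over precisely this point, and for every later use in the paper only the zeros and poles of $E_v(\langle 0\rangle)$ matter, so the extra $h_{q,t}(v,q)$ is harmless there. But as a proof of the corollary \emph{as written}, your argument does not close; you should either replace $(\star)$ by an explicit (non-monomial) prefactor, or flag that the corollary is being read modulo a nonvanishing hook factor rather than modulo a pure monomial.
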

 Lascoux \cite{Dummies} gave an equivalent expression using infinite vectors
\[v^\infty=[v[1],\dots,v[N],v[1]+1,\dots,v[N]+1,\dots]\]
and
\[\langle v\rangle^\infty=[\langle v\rangle[1],\dots,\langle v\rangle[N],q\langle v\rangle[1],\dots,q\langle v\rangle[N],\dots].\]
Note $\langle v\rangle^\infty\neq\langle v^\infty\rangle$.\\
Lascoux showed 
\begin{equation}\label{Lasc}
P(v)=(-1)^{|v|}\prod_{i=1}^N\prod_{j>i\atop v^\infty[i]>v^\infty[j]}{t{\langle v\rangle^\infty[i]\over \langle v\rangle^\infty[j]}-1\over {\langle v\rangle^\infty[i]\over \langle v\rangle^\infty[j]}-1}.
\end{equation}
\subsection{Principal specialization of $P_\lambda$}
From eq. (\ref{MS02E<0>}) and (\ref{M2MS}) one  has
\begin{equation}\label{P<0>2M0}
P_\lambda(\langle 0\rangle)=(\star)MS_\lambda(0)=(\star){\phi_t(\S_N)\over\phi_t(\S_\lambda)}M_{\lambda^-}(0).
\end{equation}
So one has to determine the value of the constant term $M_{\lambda^-}(0)$.\\
 The $t$-multinomial
coefficient is defined as follows: for any partition $\lambda\in\mathbb{N}%
_{0}^{N}$ let $m_{\lambda}\left(  i\right)  =\#\left\{  j:\lambda\left[
j\right]  =i\right\}  $ (defined for $0\leq i\leq\lambda\left[  1\right]  $)
and let $\binom{N}{m_{\lambda}}_{t}=\left(  t,t\right)  _{N}/\prod
\nolimits_{i=0}^{\lambda\left[  1\right]  }\left(  t,t\right)  _{m_{\lambda
}\left(  i\right)  }$ (note $\left(  t,t\right)  _{n}=\prod\nolimits_{i=1}%
^{n}\left(  1-t^{i}\right)  $). In fact, the Poincar\'{e} series for $\lambda
$, denoted $\phi\left(  \S_{\lambda}\right)  $, equals $\prod
\nolimits_{i=0}^{\lambda\left[  1\right]  }\left(  t,t\right)  _{m_{\lambda
}\left(  i\right)}/(1-t)^{N}  $ where $\S_{\lambda}$ is the stabilizer subgroup
of $\lambda$ in $\S_{N}$ (the symmetric group on $N$ letters), and
$\phi\left(  \S_{N}\right)  =\left(  t,t\right)  _{N}/(1-t)^{N}.$

For application of proposition \ref{hookM}, we find a convenient formula for $M_{\lambda^{-}%
}\left(  0\right)  $ for partitions $\lambda$. 

\begin{proposition}
\label{revhook}For a partition $\lambda\in\mathbb{N}_{0}^{N}$%
\[
M_{\lambda^{-}}\left(  0\right)  =\left(  -1\right)  ^{\left\vert
\lambda\right\vert }\frac{\left(  t^{N};q,t\right)  _{\lambda}}{h_{q,t}\left(
\lambda,t\right)  }\frac{\phi\left(  \mathcal{S}_{\lambda}\right)  }%
{\phi\left(  \mathcal{S}_{N}\right)  }.
\]
\end{proposition}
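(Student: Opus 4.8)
The plan is to read the value off Proposition~\ref{hookM} and then reduce everything to a purely combinatorial identity. Since $(\lambda^-)^+=\lambda$ and $|\lambda^-|=|\lambda|$, applying Proposition~\ref{hookM} to the vector $v=\lambda^-$ gives immediately
\[
M_{\lambda^-}(0)=(-1)^{|\lambda|}\,\frac{(t^Nq;q,t)_\lambda}{h_{q,t}(\lambda^-,qt)} ,
\]
so the proposition is equivalent to
\[
h_{q,t}(\lambda^-,qt)=h_{q,t}(\lambda,t)\cdot\frac{(t^Nq;q,t)_\lambda}{(t^N;q,t)_\lambda}\cdot\frac{\phi(\S_N)}{\phi(\S_\lambda)} ,
\]
in which $q$ enters only through the Pochhammer ratio. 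I would first dispatch the two ``easy'' factors: shifting the $q$-Pochhammer symbols in $(a;q,t)_\lambda=\prod_i(at^{1-i},q)_{\lambda[i]}$ yields $\frac{(t^Nq;q,t)_\lambda}{(t^N;q,t)_\lambda}=\prod_{i=1}^{N}\frac{1-q^{\lambda[i]}t^{N+1-i}}{1-t^{N+1-i}}$ (the factor being $1$ when $\lambda[i]=0$), and by definition $\frac{\phi(\S_N)}{\phi(\S_\lambda)}=\binom{N}{m_\lambda}_t=(t,t)_N\big/\prod_{a\ge0}(t,t)_{m_\lambda(a)}$, with $m_\lambda(0)=N-\ell(\lambda)$.

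The heart of the matter is the evaluation of $h_{q,t}(\lambda^-,qt)$. Put $m=\ell(\lambda)$, so that $\lambda^-[i]=0$ for $i\le N-m$ and $\lambda^-[N+1-i]=\lambda[i]$ for $1\le i\le m$. Under this flip of rows a cell $(i,j)$ of $\lambda$ is carried to a cell of $\lambda^-$ with unchanged arm-length $\lambda[i]-j$, so only the leg changes; a direct count from the definition of $\leg$ gives, for $1\le j\le\lambda[i]$,
\[
\leg_{\lambda^-}(N+1-i,j)=(N-m)\,\delta_{j,1}+\#\{i':\ i<i'\le m,\ j-1\le\lambda[i']<\lambda[i]\}+\#\{i':\ i'<i,\ \lambda[i']=\lambda[i]\}.
\]

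Then I would group the positive rows of $\lambda$ into blocks of equal parts: a value $L$ occurring $f=m_\lambda(L)$ times, say in rows $a,\dots,a+f-1$. For $i$ running over such a block, the ``tail counts'' $C_j:=\#\{i':\ a+f\le i'\le m,\ \lambda[i']\ge j\}$ are independent of $i$, whereas the two ``within-block'' counts $\#\{i'>i:\lambda[i']=L\}$ (which appears in $\leg_\lambda$) and $\#\{i'<i:\lambda[i']=L\}$ (which appears in $\leg_{\lambda^-}$) each run over $0,1,\dots,f-1$. Collecting the factors attached to the cells of the block, one finds
\[
h_{q,t}(\lambda,t)\big|_{\mbox{block}}=(t,t)_{f}\,R,\qquad h_{q,t}(\lambda^-,qt)\big|_{\mbox{block}}=\Big(\prod_{i=a}^{a+f-1}\bigl(1-q^{L}t^{N+1-i}\bigr)\Big)R,
\]
where $R=\prod_{s=0}^{f-1}\prod_{j=1}^{L-1}\bigl(1-q^{L-j}t^{s+C_j+1}\bigr)$ is common to both: the interior columns $1\le j\le L-1$ produce exactly $R$ on each side once one makes the substitution $s\mapsto f-1-s$ in one of the two products, while the leftover factors are the last column $j=L$ of $\lambda$ (which collapses to $(t,t)_f$, since there $q^{L-j}=1$ and $C_L=0$) on one side, and the first column $j=1$ of $\lambda^-$ (which, because of the term $(N-m)\delta_{j,1}$, collapses to $\prod_{i}(1-q^{L}t^{N+1-i})$) on the other.

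Multiplying these block ratios over all distinct positive parts $L$ of $\lambda$ (the zero rows carry no cells) gives
\[
\frac{h_{q,t}(\lambda^-,qt)}{h_{q,t}(\lambda,t)}=\frac{\prod_{i=1}^{m}\bigl(1-q^{\lambda[i]}t^{N+1-i}\bigr)}{\prod_{L>0}(t,t)_{m_\lambda(L)}},
\]
and a short check, using $\prod_{i=1}^{m}(1-t^{N+1-i})=(t,t)_N/(t,t)_{N-m}$ together with the formula for $\binom{N}{m_\lambda}_t$ above, identifies this with $\frac{(t^Nq;q,t)_\lambda}{(t^N;q,t)_\lambda}\cdot\frac{\phi(\S_N)}{\phi(\S_\lambda)}$, which finishes the proof. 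The main obstacle is entirely one of bookkeeping: computing $\leg_{\lambda^-}$ correctly and arranging the block-by-block cancellation so that precisely the ``first-column'' factors of $\lambda^-$ and the ``column-height'' factor $(t,t)_f$ of $\lambda$ survive; the substitution $s\mapsto f-1-s$ on each block of equal rows is the device that kills the interior factors. (One could instead start from the classical hook--content formula for the principal specialization $P_\lambda(\langle 0\rangle)$ and pass through~(\ref{MS02E<0>}) and~(\ref{M2MS}), but that route forces one to track the constant $(\star)$, so the argument above is cleaner.)
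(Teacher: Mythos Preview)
Your proof is correct and follows essentially the same approach as the paper's. Both start from Proposition~\ref{hookM} and reduce to the identity $\dfrac{h_{q,t}(\lambda^-,qt)}{h_{q,t}(\lambda,t)}=\dfrac{\prod_{i=1}^{\ell(\lambda)}(1-q^{\lambda[i]}t^{N+1-i})}{\prod_{L>0}(t,t)_{m_\lambda(L)}}$, established via a column-shift matching of interior cells so that only the first column of $\lambda^-$ and the last cell of each row of $\lambda$ survive; the paper encodes the within-block reindexing through the permutation $c=r_{\lambda^-}^{-1}$ (giving $\leg_{\lambda^-}(c_i,j+1)=\leg_\lambda(i,j)$), while you carry out the same cancellation block-by-block with the substitution $s\mapsto f-1-s$.
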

\begin{proof}
From Proposition \ref{hookM} $M_{\lambda^{-}}\left(  0\right)  =\left(
-1\right)  ^{\left\vert \lambda\right\vert }\dfrac{\left(  t^{N}q;q,t\right)
_{\lambda}}{h_{q,t}\left(  \lambda^{-},qt\right)  }$ so we need to evaluate
the ratios $\dfrac{\left(  t^{N}q;q,t\right)  _{\lambda}}{\left(
t^{N};q,t\right)  _{\lambda}}$ and $\dfrac{h_{q,t}\left(  \lambda
^{-},qt\right)  }{h_{q,t}\left(  \lambda,t\right)  }$. It is easy to see that%
\[
\dfrac{\left(  t^{N}q;q,t\right)  _{\lambda}}{\left(  t^{N};q,t\right)
_{\lambda}}=\prod_{i=1}^{N}\frac{1-t^{N+1-i}q^{\lambda\left[
i\right]  }}{1-t^{N+1-i}}.
\]
If the multiplicity of a particular $\lambda\left[  i\right]  $ is $1$ (in
$\lambda$) then the leg-length of cell $\left(  N+1-i,j+1\right)  $ in
$\lambda^{-}$ is the same as that of the cell $\left(  i,j\right)  $ in
$\lambda$ for $1\leq j<\lambda\left[  i\right]  $. To account for
multiplicities let $c$ be the inverse of $r_{\lambda^{-}}$, that is
$r_{\lambda^{-}}\left[  c_{i}\right]  =i$ for $1\leq i\leq N$. Then
$\lambda^{-}\left[  c_{i}\right]  =\lambda\left[  i\right]  $ (for example
suppose $\lambda\left[  1\right]  >\lambda\left[  2\right]  =\lambda\left[
3\right]  >\lambda\left[  4\right]  $, then $c_{1}=N,c_{2}=N-2,c_{3}=N-1$). As
before
\[
\leg_{\lambda^{-}}\left(c_{i},j+1\right)  =\leg_\lambda\left(i,j\right)  ,1\leq
j<\lambda\left[  i\right]  .
\]
The factor in $h_{q,t}\left(  \lambda^{-},qt\right)  $ at the cell $\left(
c_{i},j+1\right)  $ is $\left(  1-t^{\leg_\lambda\left(i,j\right)
+1}q^{\lambda\left[  i\right]  -j}\right)  $ which is the same as the factor
at the cell $\left(  i,j\right)  $ in $h_{q,t}\left(  \lambda,t\right)  $.
Thus $\dfrac{h_{q,t}\left(  \lambda^{-},qt\right)  }{h_{q,t}\left(
\lambda,t\right)  }$ is the product of the factors at the cells $\left(
c_{i},1\right)  $ in $h_{q,t}\left(  \lambda^{-},qt\right)  $ divided by the
product of the factors at the cells $\left(  i,\lambda\left[  i\right]
\right)  $ in $h_{q,t}\left(  \lambda,t\right)  $, for $1\leq i\leq \ell(\lambda)$, since
there are no cells for the zero parts. The factor at $\left(  c_{i},1\right)
$ is$\left(  1-t^{N+1-i}q^{\lambda\left[  i\right]  }\right)  $. Suppose
$\lambda\left[  i\right]  $ has multiplicity $k$ (that is, $\lambda\left[
i-1\right]  >\lambda\left[  i\right]  =\ldots=\lambda\left[  i+k-1\right]
>\lambda\left[  i+k\right]  $) then $\leg_\lambda\left(i+l-1,\lambda\left[
i\right]  \right)  =k-l$ for $1\leq l\leq k$, and the cells $\left(
i,\lambda\left[  i\right]  \right)  ,\ldots,\left(  i+k-1,\lambda\left[
i\right]  \right)  $ contribute $\prod\nolimits_{l=1}^{k}\left(
1-t^{l}\right)  =\left(  t,t\right)  _{k}$. Thus%
\[
\dfrac{h_{q,t}\left(  \lambda^{-},qt\right)  }{h_{q,t}\left(  \lambda
,t\right)  }=\frac{\prod\nolimits_{i=1}^{\ell(\lambda)}\left(  1-t^{N+1-i}q^{\lambda
\left[  i\right]  }\right)  }{\prod\nolimits_{i=1}^{\lambda\left[  1\right]
}\left(  t,t\right)  _{m_{\lambda}\left(  i\right)  }}.
\]
Combining the ratios we find that
\begin{eqnarray*}
\frac{\left(  t^{N};q,t\right)  _{\lambda}}{h_{q,t}\left(  \lambda,t\right)  }
&  =\dfrac{\left(  t^{N}q;q,t\right) _{\lambda}}{h_{q,t}\left(  \lambda
^{-},qt\right)  }\displaystyle\prod\nolimits_{i=1}^{N}\frac{1-t^{N+1-i}}{1-t^{N+1-i}%
q^{\lambda\left[  i\right]  }}\frac{\prod\nolimits_{i=1}^{\ell(\lambda)}\left(
1-t^{N+1-i}q^{\lambda\left[  i\right]  }\right)  }{\prod\nolimits_{i=1}%
^{\lambda\left[  1\right]  }\left(  t,t\right)  _{m_{\lambda}\left(  i\right)
}}\\
&  =\dfrac{\left(  t^{N}q;q,t\right)  _{\lambda}}{h_{q,t}\left(  \lambda
^{-},qt\right)  }\frac{\displaystyle\left(  t,t\right)  _{N}}{\displaystyle\prod\nolimits_{i=1}%
^{\lambda\left[  1\right]  }\left(  t,t\right)  _{m_{\lambda}\left(  i\right)
}\prod\nolimits_{i=\ell(\lambda)+1}^{N}\left(  1-t^{N+1-i}\right)  }\\
&  =\dfrac{\left(  t^{N}q;q,t\right)  _{\lambda}}{h_{q,t}\left(  \lambda
^{-},qt\right)  }\binom{N}{m_{\lambda}}_{t},
\end{eqnarray*}
since $m_{\lambda}\left(  0\right)  =N-\ell(\lambda)$.
\end{proof}
As a direct consequence of Proposition \ref{revhook} and equality 
(\ref{P<0>2M0}), we obtain
\begin{corollary}
\[P_\lambda(\langle 0\rangle)=(\star)\frac{\left(  t^{N};q,t\right)  _{\lambda}}{h_{q,t}\left(
\lambda,t\right)  }.\]
\end{corollary}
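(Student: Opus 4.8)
The plan is simply to combine the two facts just established. Equation~(\ref{P<0>2M0}) records, up to the monomial ambiguity $(\star)$, that
\[
P_\lambda(\langle 0\rangle)=(\star)\,\frac{\phi_t(\S_N)}{\phi_t(\S_\lambda)}\,M_{\lambda^-}(0),
\]
and Proposition~\ref{revhook} evaluates the constant term as $M_{\lambda^-}(0)=(-1)^{|\lambda|}\frac{(t^N;q,t)_\lambda}{h_{q,t}(\lambda,t)}\,\frac{\phi(\S_\lambda)}{\phi(\S_N)}$. So the entire content of the Corollary is a substitution.

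First I would plug the formula of Proposition~\ref{revhook} into the right-hand side of~(\ref{P<0>2M0}). Both $\phi$ and $\phi_t$ denote the same $t$-Poincar\'e series $\phi_t(\S_\mu)=\prod_i(t,t)_{m_\mu(i)}/(1-t)^N$, so the factor $\frac{\phi_t(\S_N)}{\phi_t(\S_\lambda)}$ produced by the symmetrization step is exactly the reciprocal of the factor $\frac{\phi(\S_\lambda)}{\phi(\S_N)}$ appearing in Proposition~\ref{revhook}; these cancel identically. What remains is $(\star)(-1)^{|\lambda|}\frac{(t^N;q,t)_\lambda}{h_{q,t}(\lambda,t)}$, and since $(\star)$ by convention absorbs the sign $(-1)^{|\lambda|}$ together with any powers of $q$ and $t$, this is precisely the claimed $(\star)\frac{(t^N;q,t)_\lambda}{h_{q,t}(\lambda,t)}$.

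There is no genuine obstacle here: all of the substance was handled upstream --- the Yang--Baxter recurrences~(\ref{IntertwinM0})--(\ref{RaisM0}), the hook evaluation of Proposition~\ref{hookM}, and the delicate leg-length comparison between the diagrams of $\lambda$ and $\lambda^-$ carried out in the proof of Proposition~\ref{revhook}. The Corollary is the routine bookkeeping step that pushes Proposition~\ref{revhook} through the symmetrization identity~(\ref{P<0>2M0}). If one wishes to be scrupulous about the $(\star)$ convention, the only point to verify is that nothing beyond the sign $(-1)^{|\lambda|}$ and powers of $q,t$ is discarded --- and indeed the Poincar\'e-series ratio cancels exactly, so nothing is lost.
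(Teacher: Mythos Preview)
Your proposal is correct and follows exactly the paper's own argument: the paper states the corollary \emph{``As a direct consequence of Proposition~\ref{revhook} and equality~(\ref{P<0>2M0})''}, which is precisely the substitution you carry out, with the Poincar\'e-series ratios cancelling and the sign absorbed into $(\star)$.
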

\section{Subrectangular Macdonald polynomials}
\subsection{The denominator of sub-rectangular Macdonald polynomials}
The following lemma shows that if $\lambda\subseteq [m^k,0^{N-k}]$ with $2k\leq N$ then $P_\lambda$ does not have a pole at $q^{m-1}t^{N-k+1}=1$ and $q^{m-1\over d}t^{N-k+1\over d}\neq1$ where $d>1$ is a common factor of $m-1$ and $N-k+1$: 
\begin{lemma}\label{DenPrect}
Let $\lambda\subseteq [m^k,0^{N-k}]$ be a partition with $2k\leq N$. Then $(1-q^{m-1}t^{N-k+1})$ is not a factor of any the denominators of coefficients of $x^v$ in $P_\lambda$.
\end{lemma}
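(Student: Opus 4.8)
The plan is to reduce everything to a divisibility statement about the hook product $c_\lambda(q,t)$. Recall that $J_\lambda=c_\lambda(q,t)\,P_\lambda$ with $c_\lambda(q,t)=h_{q,t}(\lambda,t)=\prod_{(i,j)\in\lambda}\bigl(1-q^{\arm_\lambda(i,j)}t^{\leg_\lambda(i,j)+1}\bigr)$, and that $J_\lambda$ has coefficients in $\Z[q,t]$ (Knop). Hence the coefficient of $x^v$ in $P_\lambda$ equals $\bigl([x^v]J_\lambda\bigr)/c_\lambda(q,t)$, so, once written as a reduced fraction in the UFD $\C[q^{\pm1},t^{\pm1}]$, its denominator divides $c_\lambda(q,t)$. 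It therefore suffices to prove that $1-q^{m-1}t^{N-k+1}$ does not divide $c_\lambda(q,t)$; then, being a non-unit that does not divide $c_\lambda$, it cannot divide any divisor of $c_\lambda$, in particular no denominator.

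First I would record the two numerical inputs. Since $\lambda\subseteq[m^k,0^{N-k}]$, the diagram of $\lambda$ has at most $k$ parts, so $\leg_\lambda(i,j)\le k-1$ for every cell $(i,j)\in\lambda$; consequently every factor of $c_\lambda(q,t)$ has the shape $1-q^{a}t^{b}$ with $1\le b\le k$. On the other hand $2k\le N$ gives $k\le N-k$, hence $b\le k<N-k+1$. Thus the target factor $1-q^{m-1}t^{N-k+1}$ carries an exponent of $t$ strictly larger than that of any factor occurring in $c_\lambda(q,t)$.

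To finish I would exhibit a single point of the zero set of $1-q^{m-1}t^{N-k+1}$ at which $c_\lambda$ does not vanish; this is enough, because if $1-q^{m-1}t^{N-k+1}$ divided $c_\lambda$ then $c_\lambda$ would vanish identically on $\{q^{m-1}t^{N-k+1}=1\}$. Take $q_0=1$ and $t_0$ a primitive $(N-k+1)$-th root of unity, so that $q_0^{m-1}t_0^{N-k+1}=1$. Then $c_\lambda(q_0,t_0)=\prod_{(i,j)\in\lambda}\bigl(1-t_0^{\leg_\lambda(i,j)+1}\bigr)\ne0$, since each exponent $\leg_\lambda(i,j)+1$ lies in $\{1,\dots,k\}$ and is therefore a positive integer strictly smaller than the order $N-k+1$ of $t_0$, so no factor can vanish. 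Hence $1-q^{m-1}t^{N-k+1}\nmid c_\lambda(q,t)$, and the lemma follows.

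There is really no hard step here: the whole content is the remark that every hook factor of $c_\lambda$ has $t$-exponent at most $k$, which is below $N-k+1$ by the hypothesis $2k\le N$. The two points to be careful with are that one must use the bound on the \emph{leg}-lengths (not the arm-lengths) of a diagram with at most $k$ rows, and that $1-q^{m-1}t^{N-k+1}$ need not be irreducible, which is why one argues via a non-vanishing point rather than by matching irreducible factors. Note that the genericity hypothesis $q^{(m-1)/d}t^{(N-k+1)/d}\ne1$ for proper common factors $d$ plays no role in this lemma; it will be needed only later, when the polynomials themselves are specialized at $q^{m-1}t^{N-k+1}=1$.
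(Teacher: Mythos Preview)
Your proof is correct and follows the same reduction as the paper: both show that $1-q^{m-1}t^{N-k+1}$ cannot divide the hook product $c_\lambda(q,t)$, using the bound $\leg_\lambda(i,j)+1\le k<N-k+1$ which comes from $\lambda$ having at most $k$ rows and the hypothesis $2k\le N$.

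The one tactical difference is worth noting. The paper argues by singling out the factors of $c_\lambda$ whose $q$-exponent is exactly $m-1$ (these force $\lambda_i=m$, $j=1$) and then bounds their $t$-exponent by $k$. You instead bound the $t$-exponent of \emph{every} factor by $k$ and conclude via evaluation at the point $(q_0,t_0)=(1,\zeta)$ with $\zeta$ a primitive $(N-k+1)$-th root of unity. Your route is slightly more robust: because $1-q^{m-1}t^{N-k+1}$ is not in general irreducible in $\C[q^{\pm1},t^{\pm1}]$, checking that it does not literally appear among the hook factors is not quite enough by itself, and your point-evaluation argument closes that gap cleanly without any discussion of cyclotomic factorizations. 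The paper's version is implicitly relying on the same phenomenon but leaves the irreducibility issue tacit.
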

\begin{proof}
It suffices to prove that the integral version $J_\lambda$ does not vanish when $1-q^{m-1}t^{N-k+1}=0$ and $1-q^{m-1\over d}t^{N-k+1\over d}\neq 0$ where $d>1$ divides $m-1$ and $N-k+1$.
In other words, we have to show that $c_\lambda(q,t)$ does not have $1-q^{m-1}t^{N-k+1}$ as a factor.
Suppose $\lambda=[m^{k'},\lambda_{k'+1},\dots,\lambda_{k''},0^{N-k''}]$ with $k'\leq k''\leq k$ and $\lambda_{k''}>0$ . Then the only factors with the relevant power of $q$ that occur in $c_\lambda$ are $(1-q^{\lambda_i-j}t^{\lambda'_j-i+1})$ for $\lambda_i=m$ (so $i=1..k'$) and $j=1$ (so $\lambda'_j=k''$). Hence, for such a pair $(i,j)$ one has $\lambda'_j-i+1\leq k\leq N-k$. This proves the result. 
\end{proof}

\subsection{The $(q,t)$-binomial $\left(m^k\atop \lambda\right)$}
%

Lassalle \cite{LASS} gave an explicit formula for the $(q,t)$-binomial $\left(m^k\atop \lambda\right)$ for $\lambda\subseteq m^k$:
\begin{equation}\label{qtbinmk}
\left(m^k\atop \lambda\right)=\prod_{(i,j)\in\lambda}t^{i-k}{\left(t^{i-1}-q^{j-1}t^k\right)
\left(1-q^{m-j+1}t^{i-1}\right)\over \left(1-q^{\arm_\lambda(i,j)}t^{1+\leg_\lambda(i,j)}\right)
 \left(1-q^{1+\arm_\lambda(i,j)}t^{\leg_\lambda(i,j)}\right)}
\end{equation}
Using $(q,t)$-hook products, the formula reads
\begin{equation}\label{qtbinmk2}\binom{m^{k}}{\lambda}=\left(  -1\right)  ^{\left\vert \lambda\right\vert
}t^{3n\left(  \lambda\right)  -\left(  k-1\right)  \left\vert \lambda
\right\vert }q^{m\left\vert \lambda\right\vert -n\left(  \lambda^{\prime
}\right)  }\frac{\left(  t^{k};q,t\right)  _{\lambda}\left(  q^{-m}%
;q,t\right)  _{\lambda}}{h_{q,t}\left(  \lambda,q\right)  h_{q,t}\left(
\lambda,t\right)  }.
\end{equation}

We deduce
\begin{lemma}\label{denqtbin}
Let $\lambda\subsetneq m^k$ be a strict subrectangular partition with $2k\leq N$. The denominator of the reduced fraction $\left(m^k\atop \lambda\right)$ has no factor $1-q^{m-1}t^{N-k+1}$.
\end{lemma}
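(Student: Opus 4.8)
The plan is to read the denominator straight off the closed form (\ref{qtbinmk2}) and then to show that the ``top'' cyclotomic constituent of $1-q^{m-1}t^{N-k+1}$ simply cannot occur in either $(q,t)$-hook product, because its $t$-degree is too big once $2k\le N$; the mechanism is the same one used in the proof of Lemma \ref{DenPrect}. First I would note that, by (\ref{qtbinmk2}), up to a sign and a (Laurent) monomial in $q$ and $t$ one has $\binom{m^{k}}{\lambda}=P/\bigl(h_{q,t}(\lambda,q)\,h_{q,t}(\lambda,t)\bigr)$ with $P\in\mathbb{Q}[q,t]$, so the denominator of the reduced fraction in the UFD $\mathbb{Q}[q,t]$ divides $q^{\ast}t^{\ast}h_{q,t}(\lambda,q)h_{q,t}(\lambda,t)$; since powers of $q$ and of $t$ are coprime to $1-q^{m-1}t^{N-k+1}$, it suffices to prove that $1-q^{m-1}t^{N-k+1}\nmid h_{q,t}(\lambda,q)h_{q,t}(\lambda,t)$.

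To do this I would pass to a suitable specialization. Put $g=\gcd(m-1,N-k+1)$, write $m-1=ga_{0}$ and $N-k+1=gb_{0}$ with $\gcd(a_{0},b_{0})=1$, fix a primitive $g$-th root of unity $\zeta$ and integers $u,v$ with $ua_{0}+vb_{0}=1$, and consider the ring homomorphism $\phi\colon\mathbb{Q}[q,t]\to\mathbb{Q}(\zeta)[s,s^{-1}]$ given by $q\mapsto\zeta^{u}s^{b_{0}}$, $t\mapsto\zeta^{v}s^{-a_{0}}$. Then $\phi(1-q^{m-1}t^{N-k+1})=1-\zeta^{g}=0$, whereas the claim is that $\phi$ sends every factor of $h_{q,t}(\lambda,q)h_{q,t}(\lambda,t)$ to a nonzero element of the integral domain $\mathbb{Q}(\zeta)[s,s^{-1}]$. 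Indeed, a factor of $h_{q,t}(\lambda,t)$ (resp.\ of $h_{q,t}(\lambda,q)$) has the form $1-q^{a}t^{b}$ with $(a,b)=(\lambda_{i}-j,\ \lambda'_{j}-i+1)$ (resp.\ $(a,b)=(\lambda_{i}-j+1,\ \lambda'_{j}-i)$) for a cell $(i,j)\in\lambda$; in either case $(a,b)\neq(0,0)$, and because $\lambda\subseteq m^{k}$ forces $\lambda'_{j}\le k$ we get $b\le k$. Its image $1-\zeta^{ua+vb}s^{ab_{0}-a_{0}b}$ is zero only if $ab_{0}=a_{0}b$ and $\zeta^{ua+vb}=1$; the first equality forces $(a,b)=c(a_{0},b_{0})$ for a positive integer $c$, hence $b=cb_{0}$, and then $b\le k\le N-k<N-k+1=gb_{0}$ gives $1\le c<g$, so $\zeta^{ua+vb}=\zeta^{c(ua_{0}+vb_{0})}=\zeta^{c}\neq1$. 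Thus $\phi$ kills no factor, $\phi\bigl(h_{q,t}(\lambda,q)h_{q,t}(\lambda,t)\bigr)\neq0$, and therefore $1-q^{m-1}t^{N-k+1}$ cannot divide $h_{q,t}(\lambda,q)h_{q,t}(\lambda,t)$.

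The point that needs care — and which explains why the statement is phrased with the whole factor $1-q^{m-1}t^{N-k+1}$ and not with $1-q^{(m-1)/d}t^{(N-k+1)/d}$ for the proper divisors $d\mid g$ — is that $1-q^{m-1}t^{N-k+1}=\prod_{d\mid g}\Phi_{d}(q^{a_{0}}t^{b_{0}})$ is in general reducible, and the hook products may perfectly well contain some of its lower cyclotomic constituents; the homomorphism $\phi$ is engineered to detect exactly the top constituent $\Phi_{g}(q^{a_{0}}t^{b_{0}})$, which is the one forced out by the inequality $2k\le N$. I expect the only real labour to be the bookkeeping of the arm- and leg-length bounds in the two hook products: once those are recorded, the single chain of inequalities $b\le k\le N-k<N-k+1$ finishes everything, exactly as in Lemma \ref{DenPrect}.
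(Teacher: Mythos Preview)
Your proof is correct and follows the same approach as the paper: both read the denominator off the closed form (\ref{qtbinmk})/(\ref{qtbinmk2}) and observe that every factor $1-q^{a}t^{b}$ in the hook products $h_{q,t}(\lambda,q)h_{q,t}(\lambda,t)$ has $b\le k<N-k+1$. The paper simply records this bound and stops; your specialization homomorphism $\phi$ makes explicit the passage from this exponent bound to the actual non-divisibility by $1-q^{m-1}t^{N-k+1}$, a step that is not entirely trivial when $\gcd(m-1,N-k+1)>1$ and which the paper leaves implicit.
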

\begin{proof}
We examine the factor of the denominator of equality (\ref{qtbinmk}):
\[
D_\lambda:=\prod_{(i,j)\in\lambda} \left(1-q^{\arm_\lambda(i,j)}t^{1+\leg_\lambda(i,j)}\right)
 \left(1-q^{1+\arm_\lambda(i,j)}t^{\leg_\lambda(i,j)}\right).
\]
Since $\lambda$ is a partition, there are only terms of the type $1-q^*t^{\lambda'_j-i+1}$ and $1-q^*t^{\lambda'_j-i}$ in $D_\lambda$.
When $\lambda\subsetneq m^k$, we have $\lambda'_j-i+1<\lambda'_j+1\leq k+1$. And so $k\leq N-k$ implies $\lambda'_j-i+1<N-k+1$. This proves the result.
\end{proof}
Note 
\begin{equation}\label{mkmk}
\left(m^k\atop m^k\right)=\left(m^k\atop 0\right)=1.
\end{equation}
\subsection{Principal specialization for subrectangular partitions}
We consider certain poles of $M_{\lambda^{-}}\left(  0\right)  $ for
$\lambda\subseteq\left[  m^{k},0^{N-k}\right]  $ with $2k\leq N$. In particular,
examine the occurrences of $\left(  1-q^{m-1}t^{N-k+1}\right)  $ in $\left(
t^{N};q,t\right)  _{\lambda}$ and $h_{q,t}\left(  \lambda,t\right)  $. Since
the maximum leg-length of any cell in $\lambda\subset\left[  m^{k}%
,0^{N-k}\right]  $ is $k-1$ only factors of the form $\left(  1-t^{a}%
q^{b}\right)  $ with $1\leq a\leq k<N-k+1$ and $0\leq b\leq m-1$ can appear in
$h_{q,t}\left(  \lambda,t\right)  $. Further
\[
\left(  t^{N};q,t\right)  _{\lambda}=\prod\limits_{i=1}^{k}\prod
\limits_{j=1}^{\lambda\left[  i\right]  }\left(  1-t^{N-i+1}q^{j-1}\right)  .
\]
The factor $\left(  1-q^{m-1}t^{N-k+1}\right)  $ appears in this product
exactly when $\lambda\left[  k\right]  =m$, that is, $\lambda=\left[
m^{k},0^{N-k}\right]  $. Thus $M_{\lambda^{-}}\left(  0\right)  $ has a zero
at $q^{m-1}t^{N-k+1}=1$ for $\lambda=\left[  m^{k},0^{N-k}\right]  $ and no
zeros or poles there when $\lambda\subsetneq\left[  m^{k},0^{N-k}\right] $.
We summarize this in the following lemma
\begin{lemma}\label{DenomP0}
Let $\lambda\subseteq[m^k,0^{N-k}]$ with $2k\leq N$ then the reduced fractions 
$M_{\lambda^-}(0)$ and $P_{\lambda}(\langle 0\rangle)$ have $\left(  1-q^{m-1}t^{N-k+1}\right)$ as a factor if and only if $\lambda=[m^k,0^{N-k}]$.\\
Equivalently, if $(1-q^{m-1}t^{N-k+1})=0$ and for any $d>1$ dividing $m-1$ and $N-k+1$, $(1-q^{m-1\over d}t^{N-k+1\over d})\neq 0$ then
\[
{P_{m^k}(\langle 0\rangle)\over P_{\lambda}(\langle 0\rangle)}=0.
\]
\end{lemma}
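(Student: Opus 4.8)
The plan is to read the statement off the two closed formulas already at hand. By Proposition~\ref{revhook} and its Corollary, both $M_{\lambda^-}(0)$ and $P_\lambda(\langle0\rangle)$ are equal, up to the factor $(\star)$ (a sign times a monomial in $q,t$) and, for $M_{\lambda^-}(0)$ only, the $t$-multinomial ratio $\frac{\phi(\mathcal S_\lambda)}{\phi(\mathcal S_N)}=\prod_{i\ge0}(t,t)_{m_\lambda(i)}/(t,t)_N$, to the fraction $\frac{(t^N;q,t)_\lambda}{h_{q,t}(\lambda,t)}$. So the entire question is: among the linear $(q,t)$-binomials occurring in the numerator $(t^N;q,t)_\lambda$, the denominator $h_{q,t}(\lambda,t)$, and the factors $1-t^j$ $(1\le j\le N)$ of the $\phi$-ratio, where can $1-q^{m-1}t^{N-k+1}$ sit? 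I would base the bookkeeping on the arithmetic observation underlying Lemma~\ref{DenPrect}: for integers $a,b\ge0$ with $b\ge1$, the binomial $1-q^at^b$ vanishes identically on the \emph{primitively singular} locus $q^{m-1}t^{N-k+1}=1$, $q^{(m-1)/d}t^{(N-k+1)/d}\ne1$ for every $d>1$ dividing $\gcd(m-1,N-k+1)$, precisely when $(a,b)$ is a positive integer multiple of $(m-1,N-k+1)$.

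Granting that observation, the denominator is easy: every factor of $h_{q,t}(\lambda,t)=\prod_{(i,j)\in\lambda}(1-q^{\arm_\lambda(i,j)}t^{1+\leg_\lambda(i,j)})$ is a $1-q^at^b$ with $0\le a\le m-1$ and $1\le b\le 1+\leg_\lambda(1,1)\le k$, and since $2k\le N$ gives $k<N-k+1$, no such $(a,b)$ can be a positive multiple of $(m-1,N-k+1)$ (that would force $b\ge N-k+1$); hence $h_{q,t}(\lambda,t)$ has no zero on the primitively singular locus for any $\lambda\subseteq[m^k,0^{N-k}]$, which is exactly Lemma~\ref{DenPrect}. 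The same kind of bound (now with $a=0$) shows the factors $1-t^j$ of the $\phi$-ratio do not vanish there either (for $m\ge2$). For the numerator I would write $(t^N;q,t)_\lambda=\prod_{i=1}^{\ell(\lambda)}\prod_{j=1}^{\lambda[i]}(1-t^{N-i+1}q^{j-1})$ and note that a factor with exponent vector $(j-1,N-i+1)$ is a positive multiple of $(m-1,N-k+1)$ only if that multiple is $1$ — indeed $j-1\le m-1$ while $N-i+1\ge N-k+1$ — which forces $j=m$ and $i=k$, hence $\ell(\lambda)\ge k$ and $\lambda[k]\ge m$, i.e. $\lambda=[m^k,0^{N-k}]$; in that case the factor is literally $1-q^{m-1}t^{N-k+1}$. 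Putting the two together, $M_{\lambda^-}(0)$ and $P_\lambda(\langle0\rangle)$ acquire a vanishing numerator against a nonvanishing denominator on the primitively singular locus exactly when $\lambda=[m^k,0^{N-k}]$, which is the asserted ``if and only if''.

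Finally the displayed reformulation: on the primitively singular locus $P_{m^k}(\langle0\rangle)$ vanishes (its numerator carries the factor $1-q^{m-1}t^{N-k+1}$, its denominator $h_{q,t}(m^k,t)$ and the unit $(\star)$ do not vanish), while for $\lambda\subsetneq[m^k,0^{N-k}]$ neither the numerator nor the denominator of $P_\lambda(\langle0\rangle)$ vanishes there, so $P_\lambda(\langle0\rangle)$ is finite and nonzero; hence the quotient $P_{m^k}(\langle0\rangle)/P_\lambda(\langle0\rangle)$ is $0$. The step requiring real care — the heart of the argument, though Lemma~\ref{DenPrect} already contains it — is the vanishing criterion for $1-q^at^b$. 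I would prove it by writing $m-1=d\mu$ and $N-k+1=d\nu$ with $d=\gcd(m-1,N-k+1)$ and $\gcd(\mu,\nu)=1$, completing $(\mu,\nu)$ to a $\mathbb Z$-basis $\{(\mu,\nu),(c_1,c_2)\}$ of $\mathbb Z^2$, and observing that along the locus $q^{\mu}t^{\nu}$ is a fixed primitive $d$-th root of unity $\zeta$ while $q^{c_1}t^{c_2}$ ranges over all of $\mathbb C^*$; writing $(a,b)=\alpha(\mu,\nu)+\beta(c_1,c_2)$, the monomial $q^at^b=\zeta^{\alpha}(q^{c_1}t^{c_2})^{\beta}$ is identically $1$ on the locus iff $\beta=0$ and $d\mid\alpha$, i.e. iff $(a,b)\in\mathbb Z\cdot(m-1,N-k+1)$, and $b\ge1$ then makes the multiple positive. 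Everything else is substitution into the two formulas above.
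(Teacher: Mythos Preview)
Your argument is correct and follows essentially the same route as the paper: use the closed formula $M_{\lambda^-}(0)=(\star)\,\dfrac{(t^N;q,t)_\lambda}{h_{q,t}(\lambda,t)}\,\dfrac{\phi(\mathcal S_\lambda)}{\phi(\mathcal S_N)}$ from Proposition~\ref{revhook}, bound the $t$-exponent in the hook product by $k<N-k+1$ so the denominator cannot carry the relevant factor, and inspect the numerator $\prod_{i,j}(1-t^{N-i+1}q^{j-1})$ to see that the factor $(1-q^{m-1}t^{N-k+1})$ occurs precisely when $i=k,j=m$, i.e.\ $\lambda=[m^k,0^{N-k}]$. Your write-up is in fact a bit more careful than the paper's: you explicitly dispose of the $\phi$-ratio factors $1-t^j$ (which the paper passes over in silence) and you state and prove the precise arithmetic criterion for when $1-q^at^b$ vanishes on the primitively singular locus, whereas the paper only checks for the literal factor $(1-q^{m-1}t^{N-k+1})$ and tacitly relies on the exponent bounds to exclude higher multiples.
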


Note we can write another proof using eq. (\ref{Lasc}) (see appendix \ref{withDum}).

\section{Rectangular Macdonald polynomials}
\subsection{Clustering properties of $MS_{m^k}$\label{secclust}}

In this section, we will use finite alphabets with different sizes. 
\begin{notation}
For simplicity, 
when we use a partition $\lambda\in\mathbb{N}^{N}$, we mean that the underlying alphabet is $\{x_1,\dots,x_N\}$ and the symmetric shifted monic
Macdonald polynomial will be denoted by $MS_{\lambda}^{\left(  N\right)  }=MS_\lambda(x_1,\dots,x_N)$. In the same way, for
$v\in\mathbb{N}^{N}$ the nonsymmetric shifted monic Macdonald polynomial
will be denoted $M_{v}^{\left(  N\right)  }$.\\
 The length of $v\in\mathbb{N}%
^{N}$ is $\ell\left(  v\right)  :=\max\left\{  i:v\left[  i\right]
>0\right\}  $. For $k\geq \ell(v)$ we will denote also $v^{(k)}:=[v_1,\dots,v_k]$.
\end{notation}

The following proposition shows that if $v\in \mathbb N^N$ with $\ell(v)\leq k$ then the Macdonald  polynomials $M_v^{(N)}$ and $M_{v^{(k)}}^{(k)}$ are clearly related.
\begin{proposition}The following assertions hold:
\begin{enumerate} 
\item
Suppose $v\in\mathbb{N}_{0}^{N}$ satisfies $\ell\left(  v\right)  \leq k$ for
some $k<N$, then
\[
M_{v}\left(  x_{1}t^{N-k},\ldots,x_{k}t^{N-k}%
,t^{N-k-1},\ldots,t,1\right)  =t^{\left(  N-k\right)  \left\vert v\right\vert
}M_{v^{(k)}}^{(k)}  .
\]
\item Suppose a partition $\lambda\in\mathbb{N}_{0}^{N}$ satisfies $\ell\left(
\lambda\right)  \leq k$ for some $k<N$, then
\[
MS_{\lambda}^{\left(  N\right)  }\left(  x_{1}t^{N-k},\ldots,x_{k}%
t^{N-k},t^{N-k-1},\ldots,t,1\right)  =t^{\left(  N-k\right)  \left\vert
v\right\vert }MS_{\lambda}^{\left(  k\right)  }\left(  x\right)  .
\]
\end{enumerate}
\end{proposition}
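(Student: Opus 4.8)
The plan is: prove both parts via the interpolation (vanishing) characterizations of the shifted Macdonald polynomials — view the left-hand side as an unknown polynomial in $x_1,\dots,x_k$, check that it satisfies exactly the vanishing conditions and normalization that pin down the $k$-variable polynomial on the right, and invoke uniqueness (it suffices to work with generic $q,t$, the asserted identity being one of rational functions). The common technical core is a compatibility of spectral vectors with zero-padding: for $w\in\mathbb N^k$ put $\widetilde w=[w_1,\dots,w_k,0^{N-k}]\in\mathbb N^N$; then
\[
\left(\langle w\rangle^{(k)}[1]\,t^{N-k},\dots,\langle w\rangle^{(k)}[k]\,t^{N-k},\,t^{N-k-1},\dots,t,1\right)=\langle\widetilde w\rangle^{(N)}.
\]
This is immediate from the definitions of $\langle\,\cdot\,\rangle$ and of the rank function: appending zeros at the end of $w$ leaves the colegs of positions $1,\dots,k$ unchanged, so $r_{\widetilde w}[i]=r^{(k)}_w[i]$ for $i\le k$ (whence the uniform factor $t^{N-k}$ on the first $k$ coordinates), while $r_{\widetilde w}[i]=i$ for $i>k$, making those coordinates $t^{N-i}$, i.e. the last $N-k$ entries of $\langle 0^N\rangle$. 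This rank/coleg bookkeeping is the one step where real care is needed; the rest is degree accounting.

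For part (1) put $\Phi(x_1,\dots,x_k):=M_v\left(x_1t^{N-k},\dots,x_kt^{N-k},t^{N-k-1},\dots,t,1\right)$. Every monomial $x^u$ of $M_v$ has $|u|\le|v|$, and the substitution only scales $x_1,\dots,x_k$ and specializes the remaining variables, so $\Phi$ has degree $\le|v|$ in $x_1,\dots,x_k$. By the compatibility lemma, $\Phi(\langle w\rangle^{(k)})=M_v(\langle\widetilde w\rangle^{(N)})$ for every $w\in\mathbb N^k$; since $\ell(v)\le k$ forces $v=\widetilde{v^{(k)}}$, we have $\widetilde w=v\iff w=v^{(k)}$, and by (\ref{vanish}) together with its extension $M_v(\langle u\rangle)=0$ whenever $v^+\not\subseteq u^+$ (from (\ref{qtbin[]}) and (\ref{0qtbin[]}), which in particular covers all $u$ with $|u|<|v|$) we obtain $M_v(\langle\widetilde w\rangle^{(N)})=0$ for all $w\ne v^{(k)}$. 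Hence $\Phi$ vanishes at every spectral point $\langle w\rangle^{(k)}$ with $w\ne v^{(k)}$, a fortiori at those with $|w|\le|v|$. A monomial $x^u$ of $M_v$ contributes a scalar multiple of $x_1^{u_1}\cdots x_k^{u_k}$ to $\Phi$; only $x^v$ contributes $x_1^{v_1}\cdots x_k^{v_k}$ (equality of the first $k$ exponents gives $u_1+\dots+u_k=|v|$, hence $|u|\ge|v|$, hence $|u|=|v|$ and a zero tail, so $u=v$), and its contribution is $q^{-n'(v)}t^{(N-k)|v|}$; moreover, because $\ell(v)\le k$, any $u\preceq v$ occurring in $M_v$ has $[u_1,\dots,u_k]\preceq v^{(k)}$ in the $\mathbb N^k$ dominance order, so $x^{v^{(k)}}=x_1^{v_1}\cdots x_k^{v_k}$ is the dominance-maximal monomial of $\Phi$. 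Since $n'(v)=n'(v^{(k)})$, this means $\Phi=q^{-n'(v^{(k)})}t^{(N-k)|v|}x^{v^{(k)}}+\sum_{w\prec v^{(k)}}(\cdots)$ has precisely the shape and leading coefficient of $t^{(N-k)|v|}M^{(k)}_{v^{(k)}}$, and the interpolation characterization of $M^{(k)}_{v^{(k)}}$ — the unique polynomial of degree $\le|v|$ in $k$ variables with leading term $q^{-n'(v^{(k)})}x^{v^{(k)}}$ vanishing at $\langle w\rangle^{(k)}$ for all $w$ with $|w|\le|v|$, $w\ne v^{(k)}$ — forces $\Phi=t^{(N-k)|v|}M^{(k)}_{v^{(k)}}$.

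Part (2) runs in exactly the same way with $M_v\mapsto MS_\lambda$ and $v^{(k)}$ replaced by the partition $\lambda^{(k)}=[\lambda_1,\dots,\lambda_k]$ (the displayed exponent $t^{(N-k)|v|}$ should read $t^{(N-k)|\lambda|}$). Setting $\Psi(x_1,\dots,x_k):=MS^{(N)}_\lambda\left(x_1t^{N-k},\dots,x_kt^{N-k},t^{N-k-1},\dots,t,1\right)$: it is symmetric in $x_1,\dots,x_k$ (a partial specialization of a symmetric polynomial) and of degree $\le|\lambda|$; the compatibility lemma applied to partitions $\mu$ in at most $k$ parts gives $\Psi(\langle\mu\rangle^{(k)})=MS^{(N)}_\lambda(\langle\widetilde\mu\rangle^{(N)})$, which vanishes for $\widetilde\mu\ne\lambda$, i.e. for $\mu\ne\lambda^{(k)}$, by (\ref{symvanish}) and its extension $MS_\lambda(\langle\mu\rangle)=0$ for $\lambda\not\subseteq\mu$ (a consequence of (\ref{0qtbin[]}) and (\ref{MS2M}), which handles $|\mu|<|\lambda|$); and by (\ref{P2MS}) (the top homogeneous component of $MS_\lambda$ is $q^{-n'(\lambda)}P_\lambda$, with $P_\lambda$ monic) the coefficient of $x_1^{\lambda_1}\cdots x_k^{\lambda_k}$ in $\Psi$ is $q^{-n'(\lambda)}t^{(N-k)|\lambda|}$, all monomials of $\Psi$ being $\preceq\lambda^{(k)}$. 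The interpolation characterization of $MS^{(k)}_\lambda$, read in $k$ variables via (\ref{symvanish}) and (\ref{P2MS}), then gives $\Psi=t^{(N-k)|\lambda|}MS^{(k)}_\lambda$. Apart from the compatibility lemma, the point demanding attention is that the vanishing conditions produced are exactly those appearing in the interpolation characterizations — essentially a dimension/nondegeneracy count for interpolation in $k$ variables, which I would state explicitly before running either argument.
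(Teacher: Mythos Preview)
Your proof is correct and follows essentially the same strategy as the paper: both arguments rest on the interpolation (vanishing) characterization of the shifted Macdonald polynomials, checking that the specialized $N$-variable polynomial satisfies the defining vanishing conditions and leading-term normalization of the $k$-variable polynomial, then invoking uniqueness. Your version is simply more explicit---you spell out the spectral-vector compatibility $\langle\widetilde w\rangle^{(N)}=(t^{N-k}\langle w\rangle^{(k)},t^{N-k-1},\dots,1)$ and the uniqueness of the contribution to the monomial $x^{v^{(k)}}$, whereas the paper compresses this to ``both sides have the same coefficient of $x^v$ and satisfy the same vanishing conditions''; your appeal to Knop's extra-vanishing (\ref{0qtbin[]}) to cover points $\langle\widetilde w\rangle$ with $|\widetilde w|<|v|$ is a harmless redundancy, since the standard interpolation characterization already requires vanishing at those points.
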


\begin{proof}
We prove first (1):
Both sides have the same coefficient of $x^{v}$ and satisfy the same vanishing
conditions. More precisely, the both sides vanish for $x=\left\langle u\right\rangle $, $u\in\mathbb{N}%
_{0}^{k},\left\vert u\right\vert \leq\left\vert v\right\vert ,u\neq v$, and
$\left\langle u\right\rangle \left[  j\right]  =q^{u\left[  j\right]
}t^{k-r_{u}\left[  i\right]  }$, for $1\leq j\leq k$; and
the multiplicative coefficient is obtained considering the coefficient of the dominant term in the left hand side and the right hand side.\\
To prove (2) remark the left-hand side is symmetric in $\left(  x_{1},\ldots,x_{k}\right)  $. Hence
an obvious modification of the above argument applies.
\end{proof}

\begin{example}\rm
For instance consider $v=[2,0]$. The polynomial $M_{[2,0]}(x_1,x_2)$  vanishes for $[x_1,x_2]=\langle 00\rangle=[t,1]$, $[x_1,x_2]=\langle 10\rangle=[qt,1]$, $[x_1,x_2]=\langle 01\rangle=[1,qt]$, $[x_1,x_2]=\langle 11\rangle=[qt,q]$, $[x_1,x_2]=\langle 02\rangle=[1,q^2t]$ whilst $M_{[20]}(\langle 20\rangle)\neq 0$ (recall that $\langle20\rangle=[q^2t,1]$).\\
Hence, the polynomial $P(x_1)=M_{[2,0]}(x_1t,1)$ vanishes for $[x_1]=[1]=\langle 0\rangle$ and $[x_1]=[q]=\langle 1\rangle$ whilst $P(\langle 2\rangle)=M_{[2,0]}(q^2t,1)\neq 0$. This agrees with the definition of $M_{[2]}(x_1)$ up to a multiplicative factor.
\end{example}

For almost rectangular partitions $v_{m,k}:=[m^{N-k},(m+1)^k]$ we prove that the corresponding Macdonald polynomials nicely factorize:

\begin{proposition}
For $m\geq0$ and $0\leq k<N$  we have%
\[
M_{v_{mk}}^{\left(  N\right)  }=\left(  -1\right)  ^{Nm+k}q^{m\left(
k+\left(  m-1\right)  N/2\right)  }\prod\limits_{i=1}^{N-k}\left(
x_{i},q^{-1}\right)  _{m}\prod\limits_{i=N-k+1}^{N}\left(  x_{i}%
,q^{-1}\right)  _{m+1}.
\]

\end{proposition}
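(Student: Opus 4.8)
The plan is to prove the factorization by induction on $|v_{mk}| = Nm+k$, building the vectors $v_{mk}$ up from $0^N$ via the elementary operations available on the Yang--Baxter graph, namely the intertwiner $T_i + \frac{1-t}{1-\langle v\rangle[i+1]/\langle v\rangle[i]}$ of \eqref{IntertwinM} and the raising operator $M_{v\Phi} = M_v\tau(x_N-1)$ of \eqref{RaisM}. First I would check the base case $m=0$, $k=0$: then $v_{00}=0^N$ and both sides reduce to $M_{0^N}=1$ (the empty products on the right and the constant $(-1)^0 q^0$ agree). More generally for $m=0$ and arbitrary $k$, $v_{0,k}=[0^{N-k},1^k]$ and the claimed right-hand side is $\prod_{i=N-k+1}^N (x_i,q^{-1})_1 = \prod_{i=N-k+1}^N(1-x_i)$; one verifies directly that this symmetric-in-the-last-$k$-variables polynomial vanishes at the required spectral points and has the correct leading coefficient, or alternatively one gets it from $0^N$ by applying $\Phi$ a total of $k$ times interspersed with the sorting intertwiners.

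The heart of the induction is a step that passes from $v_{m,k}$ to $v_{m,k+1}$ when $k<N-1$ (this raises one more coordinate from $m$ to $m+1$, keeping the degree increasing by one), and a step that passes from $v_{m,N-1}$-type data to $v_{m+1,0}=[(m+1)^N]$ (which wraps around and increments everything, i.e.\ is exactly an application of $\Phi$ followed by resorting). The cleanest route is: from the sorted vector $[m^{N-k},(m+1)^k]$, apply $\Phi$ once to get $[m^{N-k-1},(m+1)^k,m+1]=[m^{N-k-1},(m+1)^{k+1}]$ up to reordering --- but $\Phi$ acts as $\tau(x_N-1)$, so $M_{v\Phi}=M_v\tau(x_N-1)$, and one checks that applying $\tau(x_N-1)$ to the product formula for $v_{m,k}$ produces exactly the product formula for the new vector: $\tau$ cyclically shifts the variables (with the $q^{-1}$ on the wrapped variable), turning $\prod_{i=1}^{N-k}(x_i,q^{-1})_m\prod_{i=N-k+1}^N(x_i,q^{-1})_{m+1}$ into $\prod_{i=1}^{N-k-1}(x_i,q^{-1})_m \prod_{i=N-k}^{N-1}(x_i,q^{-1})_{m+1}\cdot(x_N/q,q^{-1})_m$, and then multiplying by $(x_N-1)$ converts $(x_N/q,q^{-1})_m = \prod_{j=0}^{m-1}(1-x_N q^{-1-j})$ together with the extra $(1-x_N)$ factor into $(x_N,q^{-1})_{m+1}=\prod_{j=0}^{m}(1-x_Nq^{-j})$ --- exactly the right exponent $m+1$ in the new last slot. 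Then a resorting intertwiner (or the observation that the displayed product is already in the canonical order after relabeling) finishes that step, tracking the scalar $(-1)$ and the power of $q$ picked up each time. Iterating this over all $N$ slots takes $[(m+1)^{?},m^{?}]$ back to $[(m+1)^N]$ and bumps $m\mapsto m+1$; the bookkeeping of the prefactor $(-1)^{Nm+k}q^{m(k+(m-1)N/2)}$ is exactly the accumulated $(-1)$'s from the $-M_v(0)$-type signs in \eqref{RaisM0} and the $q$'s from the $\tau$-shifts.

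An alternative and perhaps more transparent approach, which I would at least sketch as a cross-check, is to verify the claimed product $F(x) := (-1)^{Nm+k}q^{m(k+(m-1)N/2)}\prod_{i=1}^{N-k}(x_i,q^{-1})_m\prod_{i=N-k+1}^N(x_i,q^{-1})_{m+1}$ directly against the interpolation characterization \eqref{vanish}: $F$ is a polynomial whose monomial of top degree is $x^{v_{mk}}$ with coefficient $q^{-n'(v_{mk})}$ (one computes $n'(v_{mk}) = \sum \arm = $ the relevant triangular sum and matches it with $m(k+(m-1)N/2)$), and $F(\langle u\rangle)=0$ for all $u$ with $|u|\le |v_{mk}|$, $u\ne v_{mk}$. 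The vanishing is checked slot-by-slot: $\langle u\rangle[i] = q^{u[i]}t^{N-r_u[i]}$, and for the factor $(x_i,q^{-1})_m$ or $(x_i,q^{-1})_{m+1}$ to vanish at $x_i = \langle u\rangle[i]$ one needs $q^{u[i]}t^{N-r_u[i]} = q^{j}$ for some $j$ in the appropriate range, which (given the generic nature of $q,t$) forces $r_u[i]=N$ and $u[i]$ in range; a counting argument then shows every admissible $u\ne v_{mk}$ with $|u|\le|v_{mk}|$ hits such a slot. The main obstacle I anticipate is precisely this combinatorial vanishing count: one must show that there is no $u$ with $|u| \le |v_{mk}|$, $u\ne v_{mk}$, escaping all the linear factors --- equivalently that $v_{mk}$ is the unique such vector, which is where the almost-rectangular shape and the ranges $[0,m-1]$ resp.\ $[0,m]$ of the two blocks of factors must be used carefully, and where subtleties about the extended dominance order (the remark after \eqref{0qtbin[]} that $x^u$ occurs in $M_v$ only if $u^+\subseteq v^+$) become relevant. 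Given the Yang--Baxter machinery is already set up in the paper, I expect the authors take the inductive route via $\Phi$ and the sorting intertwiners, so I would present that as the primary argument and keep the interpolation check as a sanity verification of the scalar prefactor.
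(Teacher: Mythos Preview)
Your inductive approach via the raising operator is exactly what the paper does, and your computation of $M_{v_{m,k}}\tau(x_N-1)$ matches theirs. The one simplification you are missing (and hedging around) is that no sorting intertwiners are ever needed: since $v_{m,k}=[m^{N-k},(m+1)^k]$ has $v_{m,k}[1]=m$, applying $\Phi$ gives $[m^{N-k-1},(m+1)^k,m+1]=v_{m,k+1}$ directly, already in the correct order, and similarly $v_{m,N-1}\Phi=v_{m+1,0}$; so the entire chain $v_{0,0}\to v_{0,1}\to\cdots\to v_{m,k}$ is obtained by iterated $\Phi$ alone, and the $T_i$-intertwiners of \eqref{IntertwinM} play no role. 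Once you drop the resorting step, your argument and the paper's are identical (base case $v_{0,0}$, then $M_{v\Phi}=M_v\tau(x_N-1)$ together with $(x_N-1)(q^{-1}x_N,q^{-1})_m=-(x_N,q^{-1})_{m+1}$).
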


\begin{proof}
Recall $u.\Phi=\left[  u\left[  2\right]  ,\ldots,u\left[  N-1\right]
,u\left[  1\right]  +1\right]  $, thus $v_{m,k}.\Phi=v_{m,k+1}$ for $0\leq
k<N$ and $v_{n,N-1}.\Phi=v_{m+1,0}$. The claim is obvious for $v_{0,0}$.
Suppose the claim is true for some $\left(  m,k\right)  $, then%
\[
M_{v_{m,k}}^{\left(  N\right)  }\tau=q^{m\left(  k+\left(  m-1\right)
N/2\right)  }\left(  -1\right)  ^{Nm+k}\prod\limits_{i=1}^{N-k-1}\left(
x_{i},q^{-1}\right)  _{m}\prod\limits_{i=N-k}^{N-1}\left(  x_{i}%
,q^{-1}\right)_{m+1}\left(  q^{-1}x_{N},q^{-1}\right)_{m}.
\]
Multiply both sides by $\left(  x_{N}-1\right)  $ to get an expression for
$M_{v_{m,k+1}}^{\left(  N\right)  }$ (or $v_{m+1,0}$); observe $\left(
x_{N}-1\right)  \left(  q^{-1}x_{N},q^{-1}\right)_{m}=-\left(  x_{N}%
,q^{-1}\right)_{m+1}$. This completes the inductive proof.
\end{proof}

As a straightforward consequence, setting $k=0$ in the previous proposition, $M_{v_{n,0}}$ is symmetric in $x_1,\dots,x_N$ and then we find:
\begin{corollary}
For $n\geq1$ $MS_{v_{n,0}}^{\left(  k\right)  }=q^{kn\left(  n-1\right)
/2}\left(  -1\right)  ^{kn}\prod_{i=1}^{k}\left(  x_{i},q^{-1}\right)  _{n}$;
note $v_{n,0}=\left[  n^{k}\right] .$
\end{corollary}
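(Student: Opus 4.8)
The plan is to read the statement off from the previous Proposition: for a constant rectangular label the symmetric shifted Macdonald polynomial is simply the symmetrization of the (already symmetric) nonsymmetric one, so the two coincide.

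First I would apply the previous Proposition in $k$ variables, with its exponent parameter $m$ equal to $n$ and its column-count parameter (the $k$ of $v_{m,k}$) equal to $0$. Since then $v_{n,0}=[n^{k}]$ and the second product in that Proposition is empty, this gives immediately
\[
M_{v_{n,0}}^{(k)}=(-1)^{kn}q^{kn(n-1)/2}\prod_{i=1}^{k}\left(x_{i},q^{-1}\right)_{n},
\]
which is exactly the right-hand side asserted for $MS_{v_{n,0}}^{(k)}$. Hence it suffices to prove $MS_{v_{n,0}}^{(k)}=M_{v_{n,0}}^{(k)}$.

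To do this I would use that $M_{v_{n,0}}^{(k)}$ is symmetric in $x_{1},\dots,x_{k}$ --- this is visible from the displayed product, and was already remarked above. If $f$ is symmetric in $x_{i}$ and $x_{i+1}$, then $fs_{i}=f$, so the difference-quotient term of $T_{i}$ in (\ref{defT_i}) kills $f$ and $fT_{i}=tf$; iterating along a reduced word gives $fT_{\sigma}=t^{\ell(\sigma)}f$ for every $\sigma\in\S_{k}$, hence $f\,\mathcal S_{k}=\big(\sum_{\sigma\in\S_{k}}t^{\ell(\sigma)}\big)f=\phi_{t}(\S_{k})\,f$. Now feed $\lambda=[n^{k}]$ into the defining relation (\ref{M2MS}): since $[n^{k}]$ is constant we have $\lambda^{-}=[n^{k}]$ and the stabilizer of $\lambda$ in $\S_{k}$ is all of $\S_{k}$, so $\phi_{t}(\S_{\lambda})=\phi_{t}(\S_{k})$ and
\[
\phi_{t}(\S_{k})\,MS_{[n^{k}]}^{(k)}=M_{[n^{k}]}^{(k)}\,\mathcal S_{k}=\phi_{t}(\S_{k})\,M_{[n^{k}]}^{(k)} .
\]
Cancelling $\phi_{t}(\S_{k})$ gives $MS_{v_{n,0}}^{(k)}=M_{v_{n,0}}^{(k)}$, which with the first display is the claim.

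No step here is really an obstacle; the one thing to be careful about is the normalization bookkeeping --- that the (monic) symmetric shifted polynomial $MS_{\lambda}$ is $\phi_{t}(\S_{\lambda})^{-1}$ times the symmetrization of the (monic) $M_{\lambda^{-}}$, so that the single factor $\phi_{t}(\S_{k})=\phi_{t}(\S_{[n^{k}]})$ occurs on both sides and cancels without leaving any stray power of $q$ or $t$. As a sanity check the leading coefficient of $\prod_{i=1}^{k}(x_{i},q^{-1})_{n}$ is $(-1)^{kn}q^{-kn(n-1)/2}$, which cancels the prefactor $(-1)^{kn}q^{kn(n-1)/2}$, as it must for a monic polynomial.
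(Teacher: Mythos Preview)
Your argument is correct and is exactly the route the paper takes: it records the Corollary as a ``straightforward consequence'' of the previous Proposition by noting that for $k=0$ (in the Proposition's indexing) the polynomial $M_{v_{n,0}}$ is already symmetric, and you have simply spelled out why symmetry forces $MS_{[n^{k}]}^{(k)}=M_{[n^{k}]}^{(k)}$ via $f\mathcal S_{k}=\phi_{t}(\S_{k})f$ and the normalization~(\ref{M2MS}). The only caveat is your closing ``sanity check'': the paper's stated normalization for $M_{v}$ has leading coefficient $q^{-n'(v)}$ rather than $1$, so the monicity remark is not literally consistent with that convention (the paper itself is a bit loose here), but this is cosmetic and does not affect your proof.
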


Let $\X=\{x_1,\dots,x_N\}$ and $\Y=\{y_1,\dots,y_k\}$ be two alphabets. We will denote by $R(\X,\Y)=\prod_{i=1}^N\prod_{j=1}^k(x_i-y_j)$ the resultant of $\X$ with respect to $\Y$.\\
Let also 
$$\mathcal D_{N,k,m}^{y}(x_1,\dots,x_k):=R([x_1,\dots,x_k],[yt^{N-k}q^0,\dots,yt^{N-k}q^{m-1}]).$$
We have:
\begin{proposition}\label{MS2D}
Let $\lambda=\left[  m^{k},0^{N-k}\right]  $ for some $1\leq k<N$ and $m\geq1$, and $y\neq 0$
then%
\[
MS_{\lambda}^{\left(  N\right)  }\left(  {x_{1}\over y},\ldots,{x_{k}\over y},t^{N-k-1}%
,\ldots,1\right)  =y^{-km}\mathcal D_{N,k,m}^y(x_1,\dots,x_k).
\]
\end{proposition}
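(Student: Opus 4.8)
The plan is to push the $N$-variable evaluation down to the rectangular Macdonald polynomial in $k$ variables, for which a closed product form is already available, and then to reconcile the resulting powers of $y$, $q$ and $t$.

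First I would apply part (2) of the preceding proposition (relating $MS_{\lambda}^{(N)}$ to $MS_{\lambda}^{(k)}$) with $\lambda=[m^{k},0^{N-k}]$; here $\ell(\lambda)=k<N$, so the hypothesis is met and the truncated label is $[m^{k}]$. Writing the evaluation point $(x_{1}/y,\ldots,x_{k}/y,t^{N-k-1},\ldots,1)$ as $(w_{1}t^{N-k},\ldots,w_{k}t^{N-k},t^{N-k-1},\ldots,t,1)$ with $w_{i}=x_{i}/(yt^{N-k})$, and using $|\lambda|=km$, that proposition yields
\[
MS_{\lambda}^{(N)}\!\left(\frac{x_{1}}{y},\ldots,\frac{x_{k}}{y},t^{N-k-1},\ldots,1\right)=t^{(N-k)km}\,MS_{[m^{k}]}^{(k)}(w_{1},\ldots,w_{k}).
\]

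Next I would substitute the closed form $MS_{[m^{k}]}^{(k)}(w)=q^{km(m-1)/2}(-1)^{km}\prod_{i=1}^{k}(w_{i},q^{-1})_{m}$ from the corollary above (recall $v_{n,0}=[n^{k}]$, taken with $n=m$). The only computation is to rewrite each factor $(w_{i},q^{-1})_{m}=\prod_{j=0}^{m-1}(1-w_{i}q^{-j})$ so that the factors $x_{i}-yt^{N-k}q^{j}$ of $\mathcal D_{N,k,m}^{y}$ appear: from $w_{i}=x_{i}/(yt^{N-k})$ one has $1-w_{i}q^{-j}=-q^{-j}y^{-1}t^{-(N-k)}\bigl(x_{i}-yt^{N-k}q^{j}\bigr)$, so extracting the $km$ scalars from $\prod_{i=1}^{k}(w_{i},q^{-1})_{m}$ contributes exactly $(-1)^{km}$, $y^{-km}$, $t^{-km(N-k)}$ and $q^{-km(m-1)/2}$, times $\prod_{i=1}^{k}\prod_{j=0}^{m-1}(x_{i}-yt^{N-k}q^{j})=\mathcal D_{N,k,m}^{y}(x_{1},\ldots,x_{k})$. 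Collecting: the two factors $(-1)^{km}$ cancel, the two powers $q^{\pm km(m-1)/2}$ cancel, and $t^{(N-k)km}\cdot t^{-km(N-k)}=1$; what remains is exactly $y^{-km}\mathcal D_{N,k,m}^{y}(x_{1},\ldots,x_{k})$, as claimed.

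There is essentially no obstacle here: the whole argument is an identity of rational functions in $q,t$ with no specialization of the parameters, and the only care needed is the bookkeeping of the exponents of $q$, $t$ and the sign $(-1)^{km}$, which is why I would organize it as: move the scalar out of each $q$-Pochhammer factor, and then collect terms. An alternative proof would start instead from the interpolation characterization of $MS_{\lambda}^{(N)}$, verifying that after the substitution $y^{-km}\mathcal D_{N,k,m}^{y}$ has the correct leading coefficient and vanishes at the required spectral points $\langle u\rangle$; but the route through the preceding proposition and the $k$-variable corollary is considerably shorter.
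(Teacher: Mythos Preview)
Your proof is correct and follows essentially the same route as the paper: reduce from $N$ to $k$ variables via part (2) of the preceding proposition, insert the closed product form for $MS_{[m^{k}]}^{(k)}$ from the corollary, and then unpack each $q$-Pochhammer factor to extract $\mathcal D_{N,k,m}^{y}$. The paper's proof is just a more compressed version of your bookkeeping.
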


\begin{proof}
By the previous propositions%
\[
MS_{\lambda}^{\left(  N\right)  }\left(  {x_{1}\over y},\ldots,{x_{k}\over y},t^{N-k-1}%
,\ldots,1\right)  =t^{k\left(  N-k\right)  m}q^{\frac{m}{2}k\left(
m-1\right)  }\left(  -1\right)  ^{km}\prod_{i=1}^{k}\left(  {x_{i}\over y
t^{N-k}},q^{-1}\right)  _{m},
\]
and%
\begin{eqnarray*}
\left(  {x_{i}\over yt^{N-k}},q^{-1}\right)  _{m}  &  =&\prod_{j=0}^{m-1}\left(
1-{x_{i}\over yt^{N-k}q^{j}}\right) \\
&  =&\left(  -1\right)  ^{m}q^{-m\left(  m-1\right)  /2}t^{-m\left(
N-k\right)  }y^{-m}\prod_{j=0}^{m-1}\left(  x_{i}-yt^{N-k}q^{j}\right)  ,
\end{eqnarray*}
for $1\leq i\leq k$.
\end{proof}

\subsection{Homogeneity of $MS_{m^k}$ at $1-q^{m-1}t^{N-k+1}=0$\label{homogen}}\ \\

Let us describe first the solution set $Z$ of the conditions 
$$q^{m-1}t^{N-k+1}=1\mbox{ and }
q^{\left(  m-1\right)  /a}t^{\left(  N-k+1\right)  /a}\neq1$$ for any common
divisor $a$ of $m-1$ and $N-k+1$ with $a>1$.\\
 Let $d:=\gcd\left(m-1,N-k+1\right)  ,m_{0}=\frac{m-1}{d},n_{0}=\frac{N-k+1}{d}$. \\
Recall the
Euler $\phi$-function, $\phi\left(  d\right)  =\#K_{d}$ where $$K_{d}:=\left\{
k:1\leq k<d,\gcd\left(  k,d\right)  =1\right\}  .$$
 We claim that $Z$ has
$\phi\left(  d\right)  $ disjoint connected components as a subset of
$\mathbb{C}^{2}$. Let $z=re^{2\pi i\theta}\in\mathbb{C}$ with $r>0$ and
$\theta\in\mathbb{R}$, and set%
\begin{eqnarray*}
q &  =&z^{n_{0}},t=\omega z^{-m_{0}},\\
\omega &  =&e^{2i\pi\psi},
\end{eqnarray*}
then $q^{m_{0}}t^{n_{0}}=\omega^{n_{0}}$ and it is required that
$\omega^{n_{0}d}=1$ and $\left(  \omega^{n_{0}}\right)  ^{j}\neq1$ for $1\leq
j<d$. (if $d=1$ the latter restriction becomes void). For determining
connected components it suffices to let $r=1$ and project onto the $2$-torus
realized as the unit square $\left\{  \left(  \alpha,\beta\right)
:0\leq\alpha,\beta\leq1\right\}  $ with identifications $\left(
\alpha+1,\beta\right)  \equiv\left(  \alpha,\beta\right)  $ and $\left(
\alpha,\beta+1\right)  \equiv\left(  \alpha,\beta\right)  $, with the mapping
$\left(  \alpha,\beta\right)  \longmapsto\left(  e^{2\pi i\alpha},e^{2\pi
i\beta}\right)  $. The pre-image of $\left(  q,t\right)  $ is
\[
\left\{  \left(  n_{0}\theta,-m_{0}\theta+\psi\right)  :\theta\in
\mathbb{R}\right\}  .
\]
We require $\psi n_{0}d\in\mathbb{Z}$ and $\psi n_{0}j\notin\mathbb{Z}$ for
$1\leq j<d$, that is $\psi=\frac{k}{n_{0}d}$ for some $k$ and $\frac{kj}%
{d}\notin\mathbb{Z}$ (when $d=1$ the value $k=0$ is permissible). This is
equivalent to $\gcd\left(  k,d\right)  =1$. By the identification assume
$0\leq\psi<1$. So there are $\#\widetilde{K}$ solutions for $\psi$ where
$\widetilde{K}:=\left\{  k:1\leq k\leq n_{0}d,\gcd\left(  k,d\right)
=1\right\}  $, and $\#\widetilde{K}=n_{0}\phi\left(  d\right)  $. For
$k\in\widetilde{K}$ let $Z_{k}=\left\{  \left(  n_{0}\theta,-m_{0}\theta
+\frac{k}{n_{0}d}\right)  :0\leq\theta<\frac{1}{n_{0}}\right\}  $. Then
$Z_{k}$ meets $Z_{k^{\prime}}$ where $k^{\prime}=\left(  k-m_{0}d\right)
\operatorname{\ mod\ }\left(  n_{0}d\right)  $ (the first coordinate is periodic
for $\theta\mapsto\theta+$ $\frac{1}{n_{0}}$; and where $0\leq
a\operatorname{\ mod\ }b<b$ for $a\in\mathbb{Z},b\geq1$). This is continued to
show that $Z_{k}$ is connected to $Z_{k^{\prime}}$ with $k^{\prime}=\left(
k-m_{0}dj\right)  \operatorname{\ mod\ }\left(  n_{0}d\right)  ,1\leq j<n_{0}$.
Consider the set
\[
\widetilde{K}^{\prime}:=\left\{  \left(  k-m_{0}dj\right)  \operatorname{\ mod\ }%
\left(  n_{0}d\right)  ,0\leq j<n_{0},k\in K_{d}\right\}  .
\]
We claim this set coincides with $\widetilde{K}$. If $k,k^{\prime}\in K_{d}$
and $k\neq k^{\prime}$ then $\left(  k-m_{0}dj\right)  \neq\left(  k^{\prime
}-m_{0}dj^{\prime}\right)  \operatorname{\ mod\ }\left(  n_{0}d\right)  $ for any
$j,j^{\prime}$ (else $k=k^{\prime}\operatorname{\ mod\ }d$); suppose $0\leq
j<j^{\prime}<n_{0}$ then $\left(  k-m_{0}dj\right)  \neq\left(  k-m_{0}%
dj^{\prime}\right)  \operatorname{\ mod\ }\left(  n_{0}d\right)  $ or else
$m_{0}d\left(  j-j^{\prime}\right)  =n_{0}dl$ for some $l$, which is
impossible for $1\leq j-j^{\prime}<n_{0}$. Thus $\widetilde{K}^{\prime}%
\subset\widetilde{K}$ and $\#\widetilde{K}^{\prime}=\#\widetilde{K}$ and hence
$\widetilde{K}^{\prime}=\widetilde{K}$. Thus $Z$ is the disjoint union of the
connected sets $\bigcup\left\{  Z_{k+jd}:0\leq j<n_{0}\right\}  \times\left\{
r:r>0\right\}  $ for $k\in K_{d}$.\\
\begin{example}\rm
\begin{enumerate}
\item First examine the solution of $q^at^b=1$ where $\gcd(a,b)=1$ on the example $a=4$ and $b=7$.
Topologically  we can collapse the problem to the torus $[q=\exp(2i\pi \alpha),t=\exp(2i\pi \beta)]$ , $0\leq  \alpha,\beta \leq 1$. On the unit square this becomes $7\alpha+4\beta=0\ \mod\ 1$, or $\beta=-\frac74\alpha$, interpreted periodically, period 1 see fig.\ref{sgp47}.
\begin{figure}[h]
\begin{center}
\includegraphics[angle=-90]{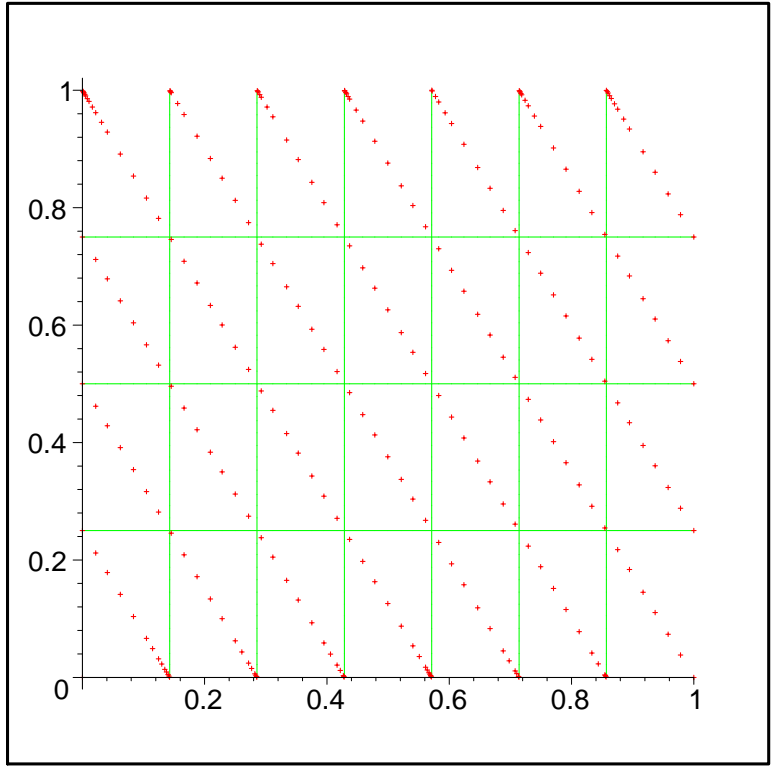}
\end{center}
\caption{Solution of $q^at^b=1$ for $a=4$ and $b=7$.\label{sgp47}}
\end{figure}
\item When $\gcd(a,b)>1$, for instance $a=8$ and $b=12$, there are several connected solutions. In the example given in fig.\ref{sgp128}, there are two solutions $2\alpha+3\beta=\frac14$ and $2\alpha+3\beta=\frac34$.
\begin{figure}[h]
\begin{center}
\includegraphics[angle=-90]{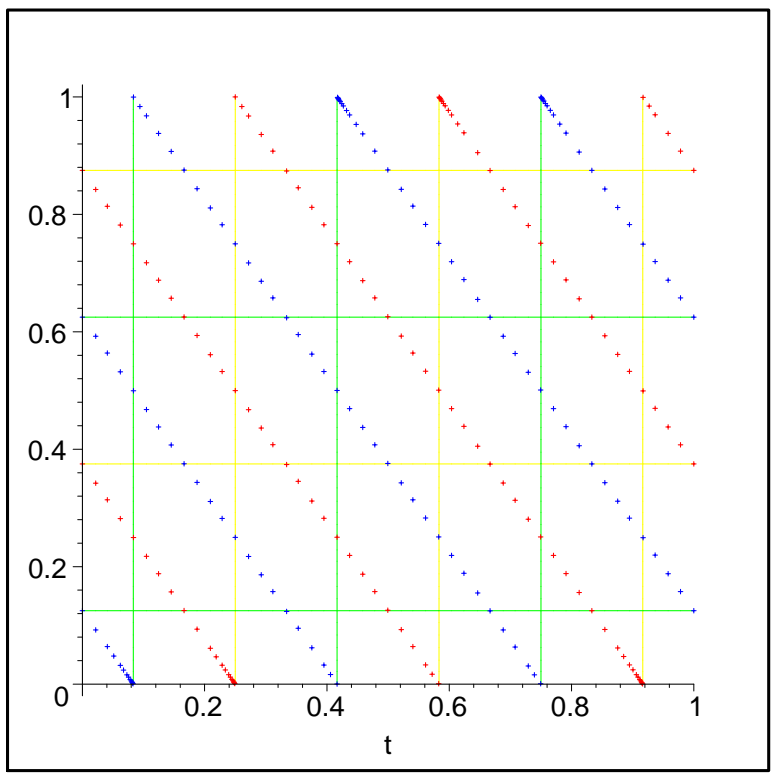}
\end{center}
\caption{Solution of $q^at^b=1$ for $a=12$ and $b=8$.\label{sgp128}}
\end{figure}
\end{enumerate}
\end{example}

\begin{proposition}\label{MS=P}
Consider $q=z^a$ and $t=\omega z^b$ a specialization of the parameters such that $1-q^{m-1}t^{N-k+1}=0$ and $1-q^{m-1\over d}t^{N-k+1\over d}\neq 0$ if $d>1$ divides $m-1$ and $N-k+1$. Then,
\[
MS_{m^k}=P_{m^k}
\]
\end{proposition}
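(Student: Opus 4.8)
The plan is to prove that the shifted polynomial $MS_{m^k}$ collapses to its top-degree homogeneous component at the specialization in question, and then to identify that component. Recall from eq.~(\ref{P2MS}) that $q^{-n'(m^k)}P_{m^k}$ is the leading homogeneous part of $MS_{m^k}$, so it suffices to show that all lower-degree homogeneous components vanish under the given specialization of $(q,t)$; one then matches constants (for $\lambda=m^k$ one has $n'(m^k)=0$, so the leading coefficient is already $1$, consistent with $P_{m^k}$ being monic). The natural tool is Lemma~\ref{MS2P}, which expands $MS_{m^k}$ in the homogeneous basis $P_\mu$:
\[
MS_{m^k}=\sum_{\mu\subseteq m^k}\frac{\tau_{m^k}}{\tau_\mu}\binom{m^k}{\mu}_{q^{-1},t^{-1}}\frac{P_{m^k}(\langle 0\rangle)}{P_\mu(\langle 0\rangle)}P_\mu,
\]
where the sum is restricted to $\mu\subseteq m^k$ by eq.~(\ref{0qtbin[]}) and $\mu\ne m^k$ terms are precisely the lower-degree ones.

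The key step is then to show that for every \emph{strict} subrectangular partition $\mu\subsetneq m^k$ the coefficient in this expansion vanishes at the specialization. I would argue this factor by factor. First, by Lemma~\ref{DenomP0}, the ratio $P_{m^k}(\langle 0\rangle)/P_\mu(\langle 0\rangle)$ has a genuine zero at $1-q^{m-1}t^{N-k+1}=0$ (the numerator carries the factor $1-q^{m-1}t^{N-k+1}$ exactly when the partition is the full rectangle, and for $\mu\subsetneq m^k$ neither numerator nor denominator has a zero or pole there, under the coprimality hypothesis on $d$). Second, I must check that this zero is not cancelled by a pole coming from the other two factors: by Lemma~\ref{DenPrect}, $P_\mu$ (hence its coefficients) has no pole at $1-q^{m-1}t^{N-k+1}=0$ for $\mu\subseteq m^k$ with $2k\le N$; and by Lemma~\ref{denqtbin}, the reduced binomial $\binom{m^k}{\mu}$ has no factor $1-q^{m-1}t^{N-k+1}$ in its denominator for $\mu\subsetneq m^k$. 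Of course here the binomial is evaluated at $(q^{-1},t^{-1})$, so I would note that $1-q^{m-1}t^{N-k+1}=0$ is equivalent to $1-q^{-(m-1)}t^{-(N-k+1)}=0$ and that the denominators of $\binom{m^k}{\mu}_{q^{-1},t^{-1}}$ are obtained from those of $\binom{m^k}{\mu}_{q,t}$ by $q\mapsto q^{-1}$, $t\mapsto t^{-1}$ up to invertible monomial factors, so Lemma~\ref{denqtbin} still applies; finally $\tau_{m^k}/\tau_\mu$ is a monomial in $q,t$, hence contributes no pole. Consequently every $\mu\subsetneq m^k$ term acquires at least a simple zero, and the whole sum collapses to the $\mu=m^k$ term, which is $P_{m^k}$ since $\tau_{m^k}/\tau_{m^k}=1$, $\binom{m^k}{m^k}=1$ by eq.~(\ref{mkmk}), and the specialization-ratio is $1$.

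The main obstacle I anticipate is bookkeeping rather than conceptual: making the "zero is not cancelled by a pole" argument airtight requires confidently tracking which factors of the form $1-q^\ast t^\ast$ can appear in each piece, and verifying that passing from $(q,t)$ to $(q^{-1},t^{-1})$ in the binomial does not reintroduce the bad factor. A subtlety worth stating explicitly is that the coefficients involved are rational functions of $(q,t)$ that are \emph{regular} in a neighbourhood of the whole connected component $Z_k$ of the specialization locus (using the analysis of $Z$ just given), so that evaluating at a generic point of $Z$ is legitimate and the vanishing there forces vanishing identically on that component; this is where the coprimality condition $1-q^{(m-1)/d}t^{(N-k+1)/d}\ne1$ is genuinely used, to keep the relevant numerators and denominators from developing unexpected zeros. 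Once all three "no-pole" lemmas are in hand, the proof is essentially a one-line deduction from Lemma~\ref{MS2P}.
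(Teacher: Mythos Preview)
Your approach is essentially identical to the paper's: both use Lemma~\ref{MS2P} to expand $MS_{m^k}$ in the $P_\mu$ basis, then invoke Lemma~\ref{DenomP0} for the vanishing of $P_{m^k}(\langle 0\rangle)/P_\mu(\langle 0\rangle)$ and Lemmas~\ref{DenPrect} and~\ref{denqtbin} to rule out compensating poles. One small slip: $n'(m^k)=k\binom{m}{2}\neq 0$ in general, but this parenthetical aside is irrelevant since your actual argument reads the $\mu=m^k$ coefficient directly from Lemma~\ref{MS2P} and eq.~(\ref{mkmk}).
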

\begin{proof}
From Lemma \ref{MS2P}, one has
\[{MS}_{m^k}=P_{m^k}+\sum_{\mu\subsetneq m^k}{\tau_\mu\over \tau_{m^k}}\left(m^k\atop\mu\right)_{q^{-1},t^{-1}}{P_{m^k}(\langle 0\rangle)\over {P}_\mu(\langle 0\rangle)}{P}_\mu.\]
From Lemma \ref{DenomP0}, ${P_{m^k}(\langle 0\rangle)\over {P}_\mu(\langle 0\rangle)}=0$ for $\mu\subsetneq m^k$. Furthermore, from Lemmas \ref{DenPrect} and \ref{denqtbin} there are no poles at $q=z^a$ and $t=\omega z^b$ in 
$P_\mu$ and $\left(m^k\atop \mu\right)_{q^{-1},t^{-1}}$ (otherwise the denominators of the two polynomials have $(1-q^{m-1}t^{N-k+1})$ as a factor which contradicts the lemmas). The result follows.\end{proof}

\section{Proof of a conjecture of Forrester}
\subsection{Clustering properties of $P_{m^k}$}
In the study of fractional quantum Hall states, Bernevig and Haldane \cite{BH} identified certain clustering conditions on rectangular Jack polynomials. These can be interpreted as factorization properties under certain specialization of the variables. In this context, Baratta and Forrester conjectured \cite{BF} a more general identity involving a rectangular Macdonald polynomial: 
\begin{equation}\label{BF}
P_{m^k}(y,yq^{\frac1\alpha},\dots,yq^{N-k-1\over\alpha},x_{N-k},\dots,x_N)=\mathcal D^{y}_{N,k,m}(x_1,\dots,x_k),
\end{equation}
where $N\geq 2k$, $\alpha=-\frac{N-k+1}{m-1}$ and $\gcd(N-k+1,m-1)=1$.\\
Propositions \ref{MS=P} and \ref{MS2D} allow us to write a more general formula.
\begin{theorem}\label{GenBF}
Let $y\in\mathbb N$. Consider $q=z^a$ and $t=\omega z^b$ a specialization of the parameters such that $1-q^{m-1}t^{N-k+1}=0$ and $1-q^{m-1\over d}t^{N-k+1\over d}\neq 0$ if $d>1$ divides $m-1$ and $N-k+1$. Then,
\[
P_{m^k}(x_1,\dots,x_k,yt^{N-k-1},\dots,yt,y)={\mathcal D}_{N,k,m}^y(x_1,\dots,x_k).
\]
\end{theorem}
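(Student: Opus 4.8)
The plan is to obtain Theorem~\ref{GenBF} as an essentially formal consequence of the two structural facts just established, namely Proposition~\ref{MS=P} (the shifted polynomial collapses to its leading homogeneous term on the singular locus $Z$) and Proposition~\ref{MS2D} (the rectangular shifted symmetric polynomial factorizes into the resultant $\mathcal D_{N,k,m}^{y}$), together with the homogeneity of $P_{m^k}$. All of the genuinely analytic content has already been absorbed into the preparatory lemmas: Lemmas~\ref{DenPrect}, \ref{denqtbin} and~\ref{DenomP0} guarantee that neither $P_{m^k}$, nor $P_\mu$ for $\mu\subsetneq m^k$, nor the binomial coefficients $\binom{m^k}{\mu}_{q^{-1},t^{-1}}$ acquire a pole on $Z$ (the locus $1-q^{m-1}t^{N-k+1}=0$ with $1-q^{(m-1)/d}t^{(N-k+1)/d}\neq0$ for every common divisor $d>1$ of $m-1$ and $N-k+1$), and that the quotient $P_{m^k}(\langle0\rangle)/P_\mu(\langle0\rangle)$ actually vanishes there. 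So from this point on the argument is just bookkeeping.

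Concretely, I would proceed as follows. First, invoke Proposition~\ref{MS=P}: on $Z$ one has the identity of $N$-variable polynomials $MS_{m^k}=P_{m^k}$. Next, observe that Proposition~\ref{MS2D} is an identity of rational functions in $q,t$ — its proof only uses the explicit product form of $MS_{[m^k]}^{(k)}$ and the variable-reduction proposition, both valid generically — so it may be specialized to $Z$; applied with $\lambda=[m^k,0^{N-k}]$ and $y\neq0$ it reads
\[
MS_{m^k}^{(N)}\!\left(\frac{x_1}{y},\dots,\frac{x_k}{y},t^{N-k-1},\dots,t,1\right)=y^{-km}\,\mathcal D_{N,k,m}^{y}(x_1,\dots,x_k).
\]
Substituting $MS_{m^k}=P_{m^k}$ on the left, and using that $P_{m^k}$ is homogeneous of degree $km$ in its $N$ arguments — so that multiplying all $N$ arguments by $y$ gives $P_{m^k}^{(N)}(x_1/y,\dots,x_k/y,t^{N-k-1},\dots,1)=y^{-km}P_{m^k}^{(N)}(x_1,\dots,x_k,yt^{N-k-1},\dots,yt,y)$ — I cancel the common factor $y^{-km}$ and am left with
\[
P_{m^k}(x_1,\dots,x_k,yt^{N-k-1},\dots,yt,y)=\mathcal D_{N,k,m}^{y}(x_1,\dots,x_k),
\]
which is exactly the assertion. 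To recover the conjecture~(\ref{BF}) of Baratta and Forrester one restricts to the branch $d=1$ (that is, $\gcd(N-k+1,m-1)=1$), sets $\alpha=-\frac{N-k+1}{m-1}$ so that $t=q^{1/\alpha}$ and hence $q^{m-1}t^{N-k+1}=1$, and reorders the $N$ arguments using the symmetry of $P_{m^k}$.

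I do not expect a real obstacle at this stage, precisely because all the difficulty has been pushed into Propositions~\ref{MS=P} and~\ref{MS2D}. The one point that deserves an explicit word is the legitimacy of evaluating, on $Z$, the a~priori rational-function identity of Lemma~\ref{MS2P} (the Okounkov-type expansion of $MS_{m^k}$ in the $P_\mu$): one needs to know that every term other than $\mu=m^k$ drops out because $P_{m^k}(\langle0\rangle)/P_\mu(\langle0\rangle)=0$ there, while $P_\mu$ and $\binom{m^k}{\mu}_{q^{-1},t^{-1}}$ remain finite — and this is exactly what Lemmas~\ref{DenPrect}, \ref{denqtbin} and~\ref{DenomP0} supply and what underlies Proposition~\ref{MS=P}. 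Granting those, Theorem~\ref{GenBF} follows by nothing more than a change of variables and the homogeneity of $P_{m^k}$.
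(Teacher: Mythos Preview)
Your proposal is correct and follows essentially the same route as the paper's proof: combine Proposition~\ref{MS=P} ($MS_{m^k}=P_{m^k}$ on the singular locus) with Proposition~\ref{MS2D} (the resultant factorization of $MS_{m^k}$) and use the homogeneity of $P_{m^k}$ to clear the $y^{-km}$. The only thing you omit is the case $y=0$, which the paper dispatches in one sentence by observing that both sides are polynomial in $y$, so the identity established for $y\neq 0$ extends.
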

\begin{proof}
Propositions \ref{MS=P} and \ref{MS2D} shows the formula for $y\neq 0$. Since the coefficients of $x^v$ in the left hand side and right hand side are polynomial in $y$, the result remains true for $y=0$.
\end{proof}

\begin{example}\rm
Let us illustrate this result with few examples:
\begin{enumerate}
\item Set $q=t^{-3}$ then
\[
P_{[2,2,0,0]}(x_1,x_2,yt,y)=(x_1-\frac yt)(x_2-\frac yt)(x_1-yt^2)(x_2-yt^2)
\]
\item Let $t=-q^{-1}$, one has:
\[
P_{[3,0]}(x_1,y)=(x_1+\frac yq)(x_1+y)(x_1+qy).
\]
whilst the result does not hold for $t=q^{-1}$:
\[
P_{[3,0]}(x_1,y)=\left(x_1^2-\left(1+\frac1q\right)^2yx_1+y^2\right)(x_1+y).
\]
\item Let $\omega=\exp\left\{2\pi i\over3\right\}$ and $t=\omega q^{-1}$. We have
\[\begin{array}{rcl}
P_{[4,4,0,0]}(x_1,x_2,ty,y)&=&
(x_1-\omega^2yq^{-2})(x_2-\omega^2yq^{-2})(x_1-\omega^2yq^{-1})(x_2-\omega^2yq^{-1})
\times\\&&\times(x_1-\omega^2y)(x_2-\omega^2y)(x_1-\omega^2yq)(x_2-\omega^2yq),
\end{array}\]
whilst the result does not hold for $t=q^{-1}$, since there is a pole in $P_{[4,4,0,0]}$:
\[
\mathrm{Numer}(P_{[4,4,0,0}(x_1,x_2,ty,y))|_{t=q^{-1}}=3{y^4\over q^6}(1-q^2)(1-q^3)^2(1-q^4)x_1^2x_2^2,
\] where $\mathrm{Numer}$ denotes the numerator of the reduced fraction.
\end{enumerate}
\end{example}
\subsection{The Jack polynomial $P^{\left(N-k+1\over 1-m\right)}_{m^k}$}
The Jack polynomial $P^{(\alpha)}_\lambda$ is usually recovered from the Macdonald polynomial $P_{\lambda}(x_1,\dots,x_N;q,t)$ by setting $q=t^\alpha$ and taking the limit $t\rightarrow 1$.\\
So as a consequence of Theorem \ref{GenBF} we obtain
\begin{corollary}
Let $N,\ k\geq 1$ and $m\geq 2$  verifying $N\geq 2k$ and $\gcd(N-k+1,m-1)=1$. Then we have
\[
P_{m^k}^{\left(N-k+1\over 1-m\right)}\left(x_1,\dots,x_k,\overbrace{y,\dots,y}^{\times N-k}\right)=\prod_{i=1}^k(x_i-y)^m.
\]
\end{corollary}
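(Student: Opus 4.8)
The plan is to derive this corollary as the Jack degeneration ($t\to1$) of Theorem~\ref{GenBF}. Recall that $P^{(\alpha)}_\lambda$ is obtained from $P_\lambda(x;q,t)$ by putting $q=t^\alpha$ and letting $t\to1$. For $\alpha=\frac{N-k+1}{1-m}=-\frac{N-k+1}{m-1}$ the curve $q=t^\alpha$ is conveniently parametrized by
\[
q=z^{N-k+1},\qquad t=z^{-(m-1)},
\]
i.e. by $a=N-k+1$, $b=-(m-1)$, $\omega=1$ in the notation of Theorem~\ref{GenBF}; along it one has $q^{m-1}t^{N-k+1}=z^{(m-1)(N-k+1)}z^{-(m-1)(N-k+1)}=1$ identically. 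Since $\gcd(N-k+1,m-1)=1$ there is no integer $d>1$ dividing both $m-1$ and $N-k+1$, so the secondary condition $1-q^{(m-1)/d}t^{(N-k+1)/d}\neq0$ of Theorem~\ref{GenBF} is vacuous and its hypotheses hold for every $z\neq1$ near $1$. (This is exactly why the hypothesis $\gcd(N-k+1,m-1)=1$ is needed: were the gcd $>1$, that secondary condition would fail all along $q=t^\alpha$.)

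Applying Theorem~\ref{GenBF} with this specialization gives, for $z$ near $1$,
\[
P_{m^k}\bigl(x_1,\dots,x_k,yt^{N-k-1},\dots,yt,y\bigr)=\prod_{i=1}^k\prod_{j=0}^{m-1}\bigl(x_i-yt^{N-k}q^j\bigr).
\]
Now let $z\to1$. On the right-hand side $q\to1$ and $t\to1$, so each factor tends to $x_i-y$ and the whole product tends to $\prod_{i=1}^k(x_i-y)^m$. On the left-hand side the arguments $yt^{N-k-1},\dots,yt,y$ all tend to $y$, while $P_{m^k}(x;q,t)\to P^{(\alpha)}_{m^k}(x)$; hence the left-hand side tends to $P^{(\alpha)}_{m^k}(x_1,\dots,x_k,\overbrace{y,\dots,y}^{\times N-k})$. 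This is the desired identity, and since both sides are polynomial in $y$ there is no loss in allowing arbitrary $y$.

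The one delicate point — and the main, if minor, obstacle — is to justify that the limit $z\to1$ may be taken, i.e. that $P_{m^k}(x;q,t)$ acquires no pole as $z\to1$ along the curve. This is precisely Lemma~\ref{DenPrect} applied to $\lambda=m^k\subseteq[m^k,0^{N-k}]$ with $2k\leq N$: the factor $1-q^{m-1}t^{N-k+1}$, which is the only denominator factor vanishing on this curve, does not occur in any denominator of a coefficient of $P_{m^k}$. Consequently the coefficients of $P_{m^k}$, restricted to the curve, are rational functions of $z$ regular at $z=1$; their limits exist and, by the standard Macdonald-to-Jack specialization, equal the corresponding coefficients of $P^{(\alpha)}_{m^k}$. (Equivalently, the displayed equality already exhibits the relevant specialization of $P_{m^k}$ as a polynomial in $z$, hence continuous at $z=1$.) Combining the two limits completes the proof; the hypotheses $m\geq2$, $N\geq2k$, $\gcd(N-k+1,m-1)=1$ are exactly those inherited from Theorem~\ref{GenBF} together with the requirement that $\alpha$ be a finite negative rational.
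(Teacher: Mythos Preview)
Your proof is correct and follows exactly the route the paper intends: derive the corollary from Theorem~\ref{GenBF} by setting $q=t^{\alpha}$ with $\alpha=\frac{N-k+1}{1-m}$ and taking the limit $t\to1$ (the paper itself records no argument beyond ``as a consequence of Theorem~\ref{GenBF}''). Your added discussion of the parametrization, of why $\gcd(N-k+1,m-1)=1$ makes the secondary hypothesis vacuous, and of regularity at $z=1$ is a welcome elaboration of what the paper leaves implicit; just note that the appeal to Lemma~\ref{DenPrect} is really about factors vanishing \emph{identically} along the curve, while continuity at the single point $z=1$ is the standard Macdonald-to-Jack limit you also invoke.
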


See Kakei \emph{et al.} \cite{Kakei} for related results on shifted Jack polynomials.
\subsection{Clustering properties of nonsymmetric Macdonald polynomials}
As before we use $\left(  \ast\right)  $ to indicate a term $q^{a}t^{b}$ with
$a,b\in\mathbb{Z}$ when its value can be ignored in the context of analyzing
zeros and poles. We restate a result of Lascoux \emph{et al.}\cite[Corollary 4.3 from Sahi's binomial
formula]{LRW}. For $u\in\mathbb{N}_{0}^{N}$ let $B_{u}:=\left\{  v\in\mathbb{N}%
_{0}^{N}:%
\left[u\atop v\right]\neq0,v\neq u\right\}  $. As stated before $v\in B_{u}$ implies $v^{+}\subset
u^{+}$ and $\left\vert v\right\vert <\left\vert u\right\vert $, but there are
important additional consequences as we will show.

\begin{theorem}
For $u\in\mathbb{N}_{0}^{N}$%
\[
M_{u}\left(  x\right)  =E_{u}\left(  x\right)  +\sum_{v\in B_{u}}\left(
\ast\right)
\left[u\atop v\right]_{q^{-1},t^{-1}}\frac{E_{u}\left(  \left\langle 0\right\rangle \right)
}{E_{v}\left(  \left\langle 0\right\rangle \right)  }E_{v}\left(  x\right)  .
\]

\end{theorem}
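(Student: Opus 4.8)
The plan is to derive this binomial formula for the nonsymmetric shifted Macdonald polynomials $M_u$ in terms of the homogeneous $E_v$ as a straightforward analogue of the symmetric case already handled in Lemma \ref{MS2P}, using the interpolation/vanishing characterization of $M_u$. Recall that $M_u$ is characterized (up to the normalization of its leading term, which is $q^{-n'(u)}$ for $x^u$) by the vanishing conditions $M_u(\langle w\rangle)=0$ for $|u|\le|w|$, $w\neq u$, together with the refined statement (from Sahi's binomial formula, cited as \cite[Cor.~4.3]{LRW} combined with Knop \cite{Knop2}) that in fact $x^v$ occurs in $M_u$ only when $v^+\subseteq u^+$. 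So the set $B_u$ indexes exactly the homogeneous polynomials $E_v$ that can appear below the top term.

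First I would expand $M_u$ in the basis $\{E_v\}$. Since $E_v=x^v+\sum_{w\prec v}\alpha_{w,v}x^w$ is unitriangular with respect to the monomial basis and $M_u=q^{-n'(u)}x^u+(\text{lower})$ with all lower terms supported on $v$ with $v^+\subsetneq u^+$, we may write
\[
M_u(x)=q^{-n'(u)}E_u(x)+\sum_{v\in B_u}\beta_{u,v}E_v(x)
\]
for suitable scalars $\beta_{u,v}$ (the coefficient of $E_u$ is forced by matching the coefficient of $x^u$, since only $E_u$ among the $E_v$, $v\in B_u\cup\{u\}$, contains $x^u$ — recall $M_u$ has leading coefficient $q^{-n'(u)}$ while $E_u$ is monic). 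To identify $\beta_{u,v}$, I would evaluate both sides at the spectral point $\langle 0\rangle$ of the zero vector — or more precisely iterate the evaluation at spectral vectors of vectors $w$ with $w^+\subsetneq u^+$, exploiting the vanishing property $M_u(\langle w\rangle)=0$ — in exactly the manner used by Lascoux, Rains and Warnaar. The cleanest route is to recognize that this is literally the content of \cite[Corollary~4.3]{LRW}: the generalized binomial coefficients $\left[u\atop v\right]=\mathrm M_v(\langle u\rangle)/\mathrm M_v(\langle v\rangle)$ control the transition, and after tracking the passage between the normalizations $M$, $\mathrm M$, $\mathrm E$, $E$ (each a known monomial $(\star)=q^at^b$ times the monic version, via \eqref{rmE_v}, \eqref{E2M}, and the definition of $\mathrm M_v$), together with the reciprocity that replaces $(q,t)$ by $(q^{-1},t^{-1})$ in the binomial coefficient and produces the ratio $E_u(\langle 0\rangle)/E_v(\langle 0\rangle)$ of principal specializations (computable by the Corollary following Proposition \ref{hookM}), one arrives at exactly $\beta_{u,v}=(\ast)\left[u\atop v\right]_{q^{-1},t^{-1}}E_u(\langle 0\rangle)/E_v(\langle 0\rangle)$.

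In fact the slickest proof takes the symmetric identity of Lemma \ref{MS2P} and specializes/desymmetrizes: applying \eqref{E2M} and \eqref{P2MS}-type limits is not quite enough since we need the nonsymmetric statement, so instead I would run the same argument that proves Lemma \ref{MS2P} but at the nonsymmetric level, replacing Okounkov's symmetric binomial formula by Sahi's nonsymmetric binomial formula \cite{Sahi}. Concretely: Sahi's formula expands $\mathrm M_u(ax)$ in the $\mathrm M'_v$ with coefficients $a^{|v|}\left[u\atop v\right]_{q^{-1},t^{-1}}\mathrm M_u(a\langle 0\rangle)/\mathrm M_v(a\langle 0\rangle)$; taking $x\mapsto x/a$ and letting $a\to 0$ kills the nonleading homogeneous components of each $\mathrm M'_v$, leaving $\mathrm E_v$ (this is the nonsymmetric analogue of the limit used just before Lemma \ref{MS2P}, via \eqref{E2M}); then rescaling back from $\mathrm M,\mathrm E$ to $M,E$ and folding all the $q,t$-powers from \eqref{rmE_v}, the definition of $\mathrm M$, and $M_v(0)=\tau_v E_v(\langle 0\rangle)$ (eq. \eqref{M02E<0>}) into the symbol $(\ast)$ gives the displayed formula.

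\textbf{Main obstacle.} The real work is bookkeeping: keeping track of the chain of normalizations ($M\leftrightarrow \mathrm M\leftrightarrow \mathrm E\leftrightarrow E$) and the $(q,t)\mapsto(q^{-1},t^{-1})$ reciprocity so that all the spurious $q^at^b$ factors genuinely collapse into the $(\ast)$ that the statement permits us to ignore, and verifying that the limit $a\to 0$ in Sahi's formula may be taken termwise (only finitely many $v\in B_u$ contribute, since $B_u$ is finite — $v^+\subseteq u^+$ — so this is immediate, but it must be noted). I do not expect any genuinely hard estimate; the content is entirely in correctly invoking Sahi's binomial formula (equivalently \cite[Cor.~4.3]{LRW}) and matching constants.
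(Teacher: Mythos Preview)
Your proposal is correct and aligns with the paper's treatment: the paper does not give a proof at all but simply introduces this theorem as a restatement of \cite[Corollary~4.3]{LRW}, which in turn follows from Sahi's nonsymmetric binomial formula \cite{Sahi}. Your sketch---apply Sahi's formula, take the $a\to 0$ limit to pass from $\mathrm M'_v$ to $\mathrm E_v$ via \eqref{E2M}, and absorb the normalization constants into $(\ast)$ using \eqref{M02E<0>}---is exactly the route by which that corollary is obtained, so you have supplied the argument that the paper delegates to the reference.
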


We need to use this formula for $u=\left[  m^{k},0^{N-k}\right]  $ with
$2k\leq N$ and $\left(  q,t\right)  $ in a neighborhood $\Omega$ of the set of
solutions of $q^{m}t^{N-k+1}=1$ and $q^{a}t^{b}\neq1$ for $1\leq a\leq m$ and
$0\leq b\leq N-k+1$, except for $a=m$ and $b=N-k+1$. (We note here that if
$q,t\in\mathbb{C}$ and neither $q$ nor $t$ are roots of unity then
$q^{m}t^{N-k+1}=1$ and $q^{a}t^{b}=1$ imply that $a=\frac{n}{d}m,b=\frac{n}%
{d}\left(  N-k+1\right)  $ where $d=\gcd\left(  m,N-k+1\right)  $ and
$n\in\mathbb{Z}$; roughly the equations imply $q^{am}t^{bm}=1=q^{am}%
t^{a\left(  N-k+1\right)  }$, etc.) From Knop \cite[Prop. 5.3]{Knop2} the
coefficients of $h_{q,t}\left(  v,qt\right)  M_{v}\left(  x\right)  $ are in
$\mathbb{Z}\left[  q,q^{-1},t,t^{-1}\right]  $ for any $v\in\mathbb{N}_{0}%
^{N}$ (the same multiplier works for $E_{v}\left(  x\right)  $). Furthermore
$M_{v}\left(  \left\langle v\right\rangle \right)  =\left(  \ast\right)
h_{q,t}\left(  v,q\right)  $ (from Lascoux \emph{et al.}\cite[p.8]{LRW}), and $E_{v}\left(
\left\langle 0\right\rangle \right)  =\left(  \ast\right)  \frac{\left(
t^{N}q,q,t\right)  _{v^{+}}}{h_{q,t}\left(  v,qt\right)  }$ by Proposition
\ref{hookM}. Thus we consider $h_{q,t}\left(  v,qt\right)  $ and
$h_{q,t}\left(  v,q\right)  $ for $v\in B_{u}$. We showed that $v^{+}\subset
u$ and $\left\vert v\right\vert <\left\vert u\right\vert $ implies $\left(
t^{N}q,q,t\right)  _{v^{+}}\neq0$ for $\left(  q,t\right)  \in\Omega$.

\begin{proposition}
Suppose $u\in\mathbb{N}_{0}^{N}$ and $\ell\left(  u\right)  \leq k<N$ then
$v\in B_{u}$ implies $\ell\left(  v\right)  \leq k$.
\end{proposition}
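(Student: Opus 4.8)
The plan is to argue by contraposition: it suffices to show that $\ell(u)\le k<\ell(v)$ forces $M_{v}^{(N)}(\langle u\rangle)=0$, since then $\left[u\atop v\right]=M_{v}^{(N)}(\langle u\rangle)/M_{v}^{(N)}(\langle v\rangle)=0$ (the denominator is $(\ast)\,h_{q,t}(v,q)$, a nonzero rational function of $q,t$), and hence $v\notin B_{u}$. The combinatorial input is that when $\ell(u)\le k$ the tail of the spectral vector is a pure staircase: for $i>k$ one has $u[i]=0$, whence $\coleg_{u}(i)=i-1$, $r_{u}[i]=i$, and $\langle u\rangle[i]=t^{N-i}$. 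Writing $j:=\ell(v)>k$, this means that the entries of $\langle u\rangle$ in positions $j,j+1,\dots,N$ are exactly $t^{N-j},t^{N-j-1},\dots,t,1$. I keep $q,t$ generic throughout.

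The heart of the matter is the following lemma: \emph{for every $a\in\mathbb{N}_{0}^{j}$ with $a[j]\ne0$ the polynomial $\phi(x_{1},\dots,x_{j-1}):=M_{a}^{(j)}(x_{1},\dots,x_{j-1},1)$ vanishes identically.} To prove it, note that for any $w'\in\mathbb{N}_{0}^{j-1}$ with $|w'|\le|a|$ the vector $w=(w',0)\in\mathbb{N}_{0}^{j}$ satisfies $|w|\le|a|$ and $w\ne a$ (the latter because $a[j]\ne0$), so $M_{a}^{(j)}(\langle w\rangle)=0$; and the same rank computation as above gives $\langle w\rangle[j]=1$ together with $\langle w\rangle[i]=t\,\langle w'\rangle^{(j-1)}[i]$ for $i<j$, where $\langle w'\rangle^{(j-1)}$ denotes the spectral vector in $j-1$ variables. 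Thus $\phi$, a polynomial of total degree $\le|a|$ in $j-1$ variables, vanishes at the points $t\,\langle w'\rangle^{(j-1)}$ for \emph{all} $w'$ with $|w'|\le|a|$. Rescaling the variables by $t$ and expanding in the basis $\{M_{w'}^{(j-1)}:|w'|\le|a|\}$ of the space of polynomials of degree $\le|a|$, pick a $w'$ of minimal $|w'|$ with nonzero coefficient: evaluating at $\langle w'\rangle$ kills every other surviving term, by the triangular vanishing $M_{w''}^{(j-1)}(\langle w'\rangle)=0$ valid when $|w'|\le|w''|$ and $w'\ne w''$, and leaves that coefficient times $M_{w'}^{(j-1)}(\langle w'\rangle)\ne0$ --- a contradiction. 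Hence $\phi\equiv0$.

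Now suppose $\ell(u)\le k<j=\ell(v)$. If $j<N$, apply part (1) of the preceding proposition with $k$ replaced by $j$ (legitimate since $\ell(v)=j$): for all $x_{1},\dots,x_{j}$,
\[
M_{v}^{(N)}\bigl(x_{1}t^{N-j},\dots,x_{j}t^{N-j},t^{N-j-1},\dots,t,1\bigr)=t^{(N-j)|v|}\,M_{v^{(j)}}^{(j)}(x_{1},\dots,x_{j}).
\]
Setting $x_{j}=1$ turns the right-hand side into $t^{(N-j)|v|}M_{v^{(j)}}^{(j)}(x_{1},\dots,x_{j-1},1)$, which is identically zero by the lemma since $v^{(j)}[j]=v[j]\ne0$. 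On the left-hand side this is the evaluation of $M_{v}^{(N)}$ at the point whose first $j-1$ entries are $x_{1}t^{N-j},\dots,x_{j-1}t^{N-j}$ and whose last $N-j+1$ entries are $t^{N-j},t^{N-j-1},\dots,t,1$; since those are exactly the entries of $\langle u\rangle$ in positions $j,\dots,N$, choosing $x_{i}=\langle u\rangle[i]\,t^{j-N}$ gives $M_{v}^{(N)}(\langle u\rangle)=0$. If instead $j=N$, apply the lemma directly to $a=v$ and use $\langle u\rangle[N]=1$. Either way $M_{v}^{(N)}(\langle u\rangle)=0$, which completes the contrapositive. I expect the main obstacle to be the lemma, specifically the unisolvence step --- that a polynomial of degree $\le|a|$ in $j-1$ variables vanishing at all the spectral nodes $\langle w'\rangle$ with $|w'|\le|a|$ must be zero --- which relies on the basis property of the interpolation Macdonald polynomials, together with the rank-function bookkeeping that matches the specialized variables to the staircase tail of $\langle u\rangle$.
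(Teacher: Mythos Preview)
Your argument is correct in substance but takes a genuinely different route from the paper. The paper invokes Knop's \emph{extra-vanishing theorem} directly: if $M_v(\langle u\rangle)\neq 0$ then there exists $w\in\mathfrak{S}_N$ with certain inequalities between $v[i]$ and $u[w(i)]$; since $u[j]=0$ for $j>k$ this forces $w(j)=j$ and hence $v[j]=0$ for $j>k$. That is a five-line argument once one quotes Knop's criterion. Your approach instead stays inside the elementary vanishing property (\ref{vanish}) together with the restriction proposition from Section~\ref{secclust}: you reduce to the lemma that $M_a^{(j)}(x_1,\dots,x_{j-1},1)\equiv 0$ whenever $a[j]\neq 0$, and prove that by unisolvence on the nodes $\langle w'\rangle^{(j-1)}$. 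This buys you a self-contained proof that does not rely on the combinatorial permutation criterion, at the cost of more bookkeeping.

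One slip in the unisolvence step: the vanishing (\ref{vanish}) reads $M_{w''}(\langle w'\rangle)=0$ when $|w''|\leq |w'|$ and $w''\neq w'$, not when $|w'|\leq |w''|$ as you wrote. Accordingly you should pick $w'$ of \emph{maximal} $|w'|$ among those with nonzero coefficient; then every other surviving $w''$ has $|w''|\leq |w'|$ and the triangularity applies. With that correction (or equivalently, by inducting downward on degree) your lemma goes through, and the rest of the reduction---matching the staircase tail $t^{N-j},\dots,t,1$ of $x^{(\beta)}$ to the last entries of $\langle u\rangle$ via $r_u[i]=i$ for $i>k$---is fine.
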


\begin{proof}
We use the \textquotedblleft extra-vanishing\textquotedblright\ theorem of
Knop \cite[Thm. 4.5]{Knop2} applied to $v$ with $M_{v}\left(  \left\langle
u\right\rangle \right)  \neq0$. This implies that there is a permutation
$w\in\mathfrak{S}_{N}$ such that for any $i$ with $1\leq i\leq N$ the
inequality $i\geq w\left(  i\right)  $ implies $v\left[  i\right]  \leq
u\left[  w\left(  i\right)  \right]  $ otherwise $v\left[  i\right]  <u\left[
w\left(  i\right)  \right]  $. By hypothesis $u\left[  j\right]  =0$ for any
$j>k$  hence $w^{-1}\left(  j\right)  \geq j$ (since $v\left[  w^{-1}\left(
j\right)  \right]  <0$ is impossible). Thus $w^{-1}\left(  j\right)  \geq j$
for each $j>k$ and by downward induction $w\left(  j\right)  =j$ for $k<j\leq
N$. By applying Knop's condition it follows that $v\left[  j\right]  \leq
u\left[  w\left(  j\right)  \right]  =u\left[  j\right]  =0$ for $j>k$.
\end{proof}

\begin{corollary}
If $u=\left[  m^{k},0^{N-k}\right]  $ and $v\in B_{u}$ then $h_{q,t}\left(
v,qt\right)  \neq0$ and $h_{q,t}\left(  v,q\right)  \neq0$ for $\left(
q,t\right)  \in\Omega$.
\end{corollary}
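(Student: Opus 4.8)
The plan is to bound the arm‑ and leg‑lengths of the cells of any $v\in B_u$, so that every factor of $h_{q,t}\left(v,q\right)$ and of $h_{q,t}\left(v,qt\right)$ has the shape $1-q^at^b$ with $\left(a,b\right)$ lying strictly inside the rectangle on which the definition of $\Omega$ guarantees $q^at^b\neq1$, and in particular avoiding the exceptional pair $\left(m,N-k+1\right)$.

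First I would record what is already available for $v\in B_u$ with $u=\left[m^k,0^{N-k}\right]$ and $2k\leq N$: from the remarks preceding the corollary, $v^{+}\subset m^k$ (hence $v\left[i\right]\leq m$ for every $i$ and $\left\vert v\right\vert<mk$), and from the preceding proposition $\ell\left(v\right)\leq k$. The arm‑length of a cell $\left(i,j\right)\in v$ is $\arm_v\left(i,j\right)=v\left[i\right]-j$ with $1\leq j\leq v\left[i\right]\leq m$, so $0\leq\arm_v\left(i,j\right)\leq m-1$. For the leg‑length I would count directly from the definition: a cell $\left(i,j\right)\in v$ forces $v\left[i\right]\geq1$, hence $i\leq\ell\left(v\right)\leq k$; the set $\left\{k'<i:j\leq v\left[k'\right]+1\leq v\left[i\right]\right\}$ has at most $i-1$ elements, and the set $\left\{i<k':j\leq v\left[k'\right]\leq v\left[i\right]\right\}$ forces $v\left[k'\right]\geq1$, hence $k'\leq\ell\left(v\right)\leq k$, so it has at most $k-i$ elements. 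Adding these gives $0\leq\leg_v\left(i,j\right)\leq k-1$; in particular interspersed zero parts of $v$ cannot inflate the leg‑length.

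Then I would read off the factors. Every factor of $h_{q,t}\left(v,q\right)$ is $1-q^{1+\arm_v\left(i,j\right)}t^{\leg_v\left(i,j\right)}=1-q^at^b$ with $1\leq a\leq m$ and $0\leq b\leq k-1$, and every factor of $h_{q,t}\left(v,qt\right)$ is $1-q^{1+\arm_v\left(i,j\right)}t^{1+\leg_v\left(i,j\right)}=1-q^at^b$ with $1\leq a\leq m$ and $1\leq b\leq k$. Since $2k\leq N$ we have $k\leq N-k<N-k+1$, so in both cases $b<N-k+1$, and therefore $\left(a,b\right)\neq\left(m,N-k+1\right)$. By the defining property of $\Omega$, namely $q^at^b\neq1$ for $\left(q,t\right)\in\Omega$ whenever $1\leq a\leq m$, $0\leq b\leq N-k+1$ and $\left(a,b\right)\neq\left(m,N-k+1\right)$, none of these factors vanishes on $\Omega$, hence $h_{q,t}\left(v,q\right)\neq0$ and $h_{q,t}\left(v,qt\right)\neq0$ there.

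All the ingredients are already in place, so I do not expect a genuine obstacle. The one point deserving care is the leg‑length bookkeeping for a non‑partition vector $v$, which is precisely where the hypothesis $\ell\left(v\right)\leq k$ supplied by the preceding proposition is used, together with the observation that $2k\leq N$ is exactly what keeps $b$ below $N-k+1$ in the $h_{q,t}\left(v,qt\right)$ case (the tightest instance being a cell with $\leg_v=k-1$).
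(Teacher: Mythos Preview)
Your proof is correct and follows essentially the same approach as the paper: bound the arm- and leg-lengths so that every factor $1-q^at^b$ in the two hook products has $1\leq a\leq m$ and $0\leq b\leq k<N-k+1$, hence avoids the exceptional pair $(m,N-k+1)$ and does not vanish on $\Omega$. The paper compresses this into a single sentence, while you spell out the leg-length count for a non-partition $v$ via $\ell(v)\leq k$; this extra care is appropriate and matches the paper's parenthetical remark that ``the leg-lengths are bounded by $k-1$''.
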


\begin{proof}
The factors in the hook products are of the form $1-q^{a}t^{b}$ with $1\leq
a\leq m$ and $0\leq b\leq k<N-k+1$ (the leg-lengths are bounded by $k-1$).
\end{proof}

This shows that $%
\left[m^k\atop v\right]_{q^{-1},t^{-1}}$ has no poles for $v\in B_{m^{k}}$ and $\left(  q,t\right)
\in\Omega$. The last step is to show $E_{m^{k}}\left(  \left\langle
0\right\rangle \right)  =0$ for $q^{m}t^{N-k+1}=1$ and $\left(  q,t\right)
\in\Omega$, and indeed
\[
\left(  t^{N}q;q,t\right)  _{m^{k}}=\prod_{i=1}^{k}\prod_{j=1}^{m}\left(
1-q^{j}t^{N-i+1}\right)  ;
\]
the factor with $i=k,j=m$ is $\left(  1-q^{m}t^{N-k+1}\right)  $.

\begin{proposition}\label{factE}
Suppose $u=\left[  m^{k},0^{N-k}\right],  $ $2k\leq N$, $q^{m}t^{N-k+1}=1$
and $q^{m/a}t^{\left(  N-k+1\right)  /a}\neq1$ for $a>1$ being a common
divisor of $m$ and $N-k+1$, then
\begin{eqnarray*}
M_{u}\left(  x\right)    & =&E_{u}\left(  x\right)  ,\\
E_{u}\left(  x_{1},\ldots,x_{k},yt^{N-k+1}\ldots,yt,y\right)    &
=&\mathcal{D}_{N,k,m}^{y}\left(  x_{1},\ldots,x_{k}\right)  .
\end{eqnarray*}

\end{proposition}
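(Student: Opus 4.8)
The plan is to push the case $u=\left[  m^{k},0^{N-k}\right]  $ with $2k\le N$ through the Knop--Sahi binomial formula restated just above, using the regularity estimates already in place, and then to read off the factorization from the explicit rectangular formulas proved earlier.

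\textbf{Step 1 ($M_{u}=E_{u}$).} I would first regard the displayed expansion
\[
M_{u}\left(  x\right)  =E_{u}\left(  x\right)  +\sum_{v\in B_{u}}\left(
\ast\right)  \left[u\atop v\right]_{q^{-1},t^{-1}}\frac{E_{u}\left(
\langle 0\rangle\right)  }{E_{v}\left(  \langle 0\rangle\right)  }E_{v}\left(  x\right)
\]
as an identity of rational functions in $\left(  q,t\right)  $ and check that every term is regular at the specialization under consideration, so that it may be evaluated there. The multiplier $h_{q,t}\left(  u,qt\right)  $ is a product of factors $1-q^{a}t^{b}$ with $1\le a\le m$ and $1\le b\le k<N-k+1$, hence nonzero, so $M_{u}$ and $E_{u}$ are regular (by the cited integrality of $h_{q,t}\left(  v,qt\right)  M_{v}$, which also works for $E_{v}$). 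For $v\in B_{u}$ the Proposition above gives $\ell\left(  v\right)  \le k$, whence by the Corollary $h_{q,t}\left(  v,qt\right)  \neq0$; thus $E_{v}$ and $\left[u\atop v\right]_{q^{-1},t^{-1}}$ are regular, and since $v^{+}\subsetneq u$ and $\left\vert v\right\vert <\left\vert u\right\vert $ force $\left(  t^{N}q;q,t\right)  _{v^{+}}\neq0$, the value $E_{v}\left(  \langle 0\rangle\right)  =\left(  \ast\right)  \left(  t^{N}q;q,t\right)  _{v^{+}}/h_{q,t}\left(  v,qt\right)  $ is finite and nonzero. Finally, by Proposition \ref{hookM}, $E_{u}\left(  \langle 0\rangle\right)  =\left(  \ast\right)  \left(  t^{N}q;q,t\right)  _{m^{k}}/h_{q,t}\left(  m^{k},qt\right)  $, and the numerator contains the factor $1-q^{m}t^{N-k+1}$ (coming from the cell $i=k$, $j=m$), which vanishes, while the denominator does not. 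Hence every summand on the right carries the factor $E_{u}\left(  \langle 0\rangle\right)  =0$, and $M_{u}=E_{u}$ at these parameters.

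\textbf{Step 2 (the factorization).} Since $E_{u}$ is homogeneous of degree $km$, I would write
\[
E_{u}\left(  x_{1},\ldots,x_{k},yt^{N-k-1},\ldots,yt,y\right)  =y^{km}E_{u}\left(  \frac{x_{1}}{y},\ldots,\frac{x_{k}}{y},t^{N-k-1},\ldots,t,1\right)  ,
\]
replace $E_{u}$ by $M_{u}$ using Step 1, and apply the Proposition relating $M_{v}^{(N)}$ to $M_{v^{(k)}}^{(k)}$ with $v=u$ and $\ell\left(  u\right)  =k$, after the substitution $x_{i}\mapsto x_{i}/\left(  yt^{N-k}\right)  $ in the first $k$ slots. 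This gives
\[
E_{u}\left(  x_{1},\ldots,x_{k},yt^{N-k-1},\ldots,y\right)  =y^{km}t^{(N-k)km}M_{\left[  m^{k}\right]  }^{(k)}\left(  \frac{x_{1}}{yt^{N-k}},\ldots,\frac{x_{k}}{yt^{N-k}}\right)  .
\]
Then I would insert the explicit rectangular value $M_{\left[  m^{k}\right]  }^{(k)}\left(  z\right)  =\left(  -1\right)  ^{km}q^{km\left(  m-1\right)  /2}\prod_{i=1}^{k}\left(  z_{i},q^{-1}\right)  _{m}$ (the Corollary above, equivalently the Proposition on the $v_{m,k}$ taken with $k$ variables and a full rectangle), use $\left(  z_{i},q^{-1}\right)  _{m}=\prod_{j=0}^{m-1}\left(  1-z_{i}q^{-j}\right)  $ together with $1-\frac{x_{i}}{yt^{N-k}q^{j}}=-\frac{1}{yt^{N-k}q^{j}}\left(  x_{i}-yt^{N-k}q^{j}\right)  $, and collect the prefactors: the two signs $\left(  -1\right)  ^{km}$ cancel, the two powers $q^{\pm km\left(  m-1\right)  /2}$ cancel (using $\sum_{j=0}^{m-1}j=m\left(  m-1\right)  /2$), and $y^{km}t^{(N-k)km}$ cancels $\left(  yt^{N-k}\right)  ^{-km}$. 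What remains is
\[
E_{u}\left(  x_{1},\ldots,x_{k},yt^{N-k-1},\ldots,y\right)  =\prod_{i=1}^{k}\prod_{j=0}^{m-1}\left(  x_{i}-yt^{N-k}q^{j}\right)  =\mathcal{D}_{N,k,m}^{y}\left(  x_{1},\ldots,x_{k}\right)  .
\]

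\textbf{Main obstacle.} The arithmetic of Step 2 is routine; the substance is in Step 1, and precisely in the two facts that were the point of the preceding lemmas: that every label $v\in B_{u}$ occurring in the binomial expansion satisfies $\ell\left(  v\right)  \le k$, so that its leg-lengths are $<N-k+1$ and the relevant hook products stay invertible at the singular specialization, and that $E_{m^{k}}\left(  \langle 0\rangle\right)  $ genuinely vanishes there rather than becoming an indeterminate $0/0$. Both estimates rest on the inequality $2k\le N$, which is why the hypothesis cannot be weakened.
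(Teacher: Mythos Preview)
Your proof is correct and follows essentially the same route as the paper, which simply says ``the proof is essentially the same as that of Proposition \ref{GenBF}''; you have filled in precisely the details that remark gestures at, using the nonsymmetric reduction $M_{u}^{(N)}\to M_{[m^k]}^{(k)}$ in place of the symmetric $MS_{\lambda}^{(N)}\to MS_{\lambda}^{(k)}$, and the explicit product formula for $M_{[m^k]}^{(k)}$. Two small points: in Step~1 the regularity of $\left[u\atop v\right]_{q^{-1},t^{-1}}$ uses $h_{q,t}(v,q)\neq 0$ (the denominator $M_v(\langle v\rangle)$) as well as $h_{q,t}(v,qt)\neq 0$, both of which the cited Corollary supplies; and your Step~2 assumes $y\neq 0$, so you should close with the polynomial-in-$y$ argument the paper uses after Theorem \ref{GenBF} to cover $y=0$.
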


\begin{proof}
The proof is essentially the same as that of Proposition \ref{GenBF}.
\end{proof}

We find that the polynomial $E_{u}\left(  x\right)  $ is singular for certain
values of $\left(  q,t\right)  $ exactly when it coincides with $M_{u}\left(
x\right)  $, and in that case there is a factorization result of clustering type.

\subsection{Highest weight and singular Macdonald polynomials}
In the case of special parameter values when $MS_{\lambda}=P_{\lambda}$ by
Proposition \ref{XixiD} it follows that $MS_{\lambda}\sum_{i=1}^{N}\Xi
_{i}=P_{\lambda}\sum_{i=1}^{N}\Xi_{i}=P_{\lambda}\sum_{i=1}^{N}\xi_{i}$ and
thus $P_{\lambda}\sum_{i=1}^{N}D_{i}=0$.
For rectangular partitions, this matches with a result of one of the authors (J.-G.L.) with Th. Jolic\oe ur \cite{JL} which involves the kernel of the operator $$L_+:=\sum_{i=1}^N\prod_{j=1\atop j\neq i}^N{tx_i-x_j\over x_i-x_j}{\partial\over\partial_qx_i}$$ with
\[
{\partial\over\partial_qx_i}f(x_1,\dots,x_N)={f(x_1,\dots,x_N)-f(x_1,\dots,x_{i-1},qx_i,x_{i+1},\dots x_N)\over x_i-qx_i}.
\]
Note from a result of Baker and Forrester \cite[eq. (5.9) p12]{BaFo}, the operator $L_+$ satisfies
\[
(1-q)L_+=\sum_{i=1}^ND_i.
\]

\begin{theorem} (Jolic\oe ur-Luque)\cite{JL}\\
If $N\geq 2k$ then the Macdonald polynomial $P_{m^k}(x_1,\dots,x_N;q,t)$ belongs to the kernel of $L_+$ for the specialization $(q,t)=\left(z^{k-1-N\over d},\omega z^{m-1\over d}\right)$ where $d=\gcd(m-1,N-k-1)$ and $\omega=\exp\left\{2i\pi(1+dn)\over m-1\right\}$ with $n\in\mathbb Z$. 
\end{theorem}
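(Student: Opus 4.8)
The plan is to derive the statement from the material already assembled, chiefly Propositions \ref{XixiD} and \ref{MS=P}. By the Baker--Forrester relation $(1-q)L_+=\sum_{i=1}^N D_i$, the assertion $P_{m^k}\in\ker L_+$ is, for $q\neq1$, equivalent to $P_{m^k}\sum_{i=1}^N D_i=0$. Proposition \ref{XixiD} rewrites $\sum_i D_i=\sum_i\Xi_i-\sum_i\xi_i$, so it suffices to show that $\sum_i\Xi_i$ and $\sum_i\xi_i$ send $P_{m^k}$ to the \emph{same} scalar multiple of itself. The decisive input is Proposition \ref{MS=P}: at the specializations considered there one has $MS_{m^k}=P_{m^k}$, so these two operators are being compared on one and the same polynomial, as in the discussion preceding the statement.

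First I would note that $P_{m^k}$, being a symmetric Macdonald polynomial, is an eigenfunction of every symmetric function of the Cherednik operators, so $P_{m^k}\sum_i\xi_i=\mu_\xi P_{m^k}$ for a scalar $\mu_\xi$, and that $P_{m^k}$ is homogeneous of degree $km$. Next I would use that $MS_{m^k}$ is an eigenfunction of $\sum_i\Xi_i$: it is the symmetrization of the simultaneous $\Xi_i$-eigenfunction $M_{(m^k)^-}$, and $\sum_i\Xi_i$ — equivalently, by Proposition \ref{XixiD}, $\sum_i D_i$ — commutes with the symmetrizer $\mathcal S_N$, this being the standard symmetry of the $q$-Dunkl operators (also apparent from the $\mathfrak S_N$-equivariant form of $L_+$). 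Hence, at a specialization where $MS_{m^k}=P_{m^k}$, we get $P_{m^k}\sum_i\Xi_i=\mu_\Xi P_{m^k}$ for some scalar $\mu_\Xi$. Since $\sum_i\xi_i$ preserves degree while each $D_i$ lowers it by one (because $D_N=(1-\xi_N)x_N^{-1}$ does and $D_i=tT_i^{-1}D_{i+1}T_i^{-1}$ with $T_i^{-1}$ degree-preserving), comparing the degree-$km$ parts of $\mu_\Xi P_{m^k}=\mu_\xi P_{m^k}+P_{m^k}\sum_i D_i$ forces $\mu_\Xi=\mu_\xi$, hence $P_{m^k}\sum_i D_i=0$, and dividing by $1-q\neq0$ gives $P_{m^k}L_+=0$. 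Proposition \ref{MS=P}, through Lemma \ref{DenPrect}, also certifies that $P_{m^k}$ has no pole at the specialization, so all of this takes place among genuine polynomials.

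The step that really needs care — and the main obstacle — is the bookkeeping identifying the specialization in the statement with one to which Proposition \ref{MS=P} applies: one must check that $(q,t)=\bigl(z^{(k-1-N)/d},\,\omega z^{(m-1)/d}\bigr)$, with the indicated $d$ and $\omega=\exp\{2\pi i(1+dn)/(m-1)\}$, lies in the set $Z$ of Section \ref{homogen}, i.e. that it satisfies $q^{m-1}t^{N-k+1}=1$ while $q^{(m-1)/a}t^{(N-k+1)/a}\neq1$ for every common divisor $a>1$ of $m-1$ and $N-k+1$. This amounts to matching the normalization of $z$ and the choice of root of unity $\omega$ used here (following \cite{JL}) with the parametrization of $Z$ given in Section \ref{homogen}, which is keyed to $\gcd(m-1,N-k+1)$ and to roots of unity indexed by residues coprime to it; reconciling these two descriptions of the same locus is the only delicate point. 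Once it is in place — and the harmless condition $q\neq1$ recorded — the theorem follows by assembling Propositions \ref{XixiD} and \ref{MS=P} with the degree argument above.
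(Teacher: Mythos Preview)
Your approach is essentially the one the paper takes: the discussion immediately preceding the theorem derives $P_\lambda\sum_i D_i=0$ from Proposition~\ref{XixiD} together with $MS_\lambda=P_\lambda$ (Proposition~\ref{MS=P}), and then invokes the Baker--Forrester identity $(1-q)L_+=\sum_i D_i$. The only small difference is that where you run a degree argument to force $\mu_\Xi=\mu_\xi$, the paper tacitly uses that $M_u$ and $E_u$ share the same spectral vector $\mathrm{Spec}_u$, so the eigenvalues of $\sum_i\Xi_i$ on $MS_\lambda$ and of $\sum_i\xi_i$ on $P_\lambda$ coincide outright; either observation closes the argument.
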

We observe similar phenomena for nonsymmetric Macdonald polynomials:
Note $E_{u},\ M_{u}$ have the same eigenvalues under the action of $\xi_{i}%
,\Xi_{i}$ respectively. If for certain parameters $\left(  q,t\right)  $
$M_{u}=E_{u}$ then $E_{u}D_{i}=0$ for $1\leq i\leq N$ (since $M_{u}\Xi
_{i}=E_{u}\Xi_{i}=E_{u}\xi_{i}+E_{u}D_{i}$).
   
According to Proposition \ref{factE}: Suppose $u=\left[  m^{k},0^{N-k}\right]  $ $2k\leq N$ and $q^{m}t^{N-k+1}=1$
and $q^{m/a}t^{\left(  N-k+1\right)  /a}\neq1$ for $a>1$ being a common
divisor of $m$ and $N-k+1$, then $E_{m^k0^{N-k}}D_i=0$ for each $1\leq i\leq N$.

\section{Conclusion and perspectives}
 One of the authors (J.-G.L.) with Thierry Jolicoeur\cite{JL} investigated other families of symmetric Macdonald polynomials belonging to the kernel of $L_+$ for certain specializations of the parameters. These polynomials are indexed by staircase partitions and numerical evidence shows that they have nice factorization properties. A staircase partition $\lambda=[((\beta+1)s+r)^k,(\beta s+r)^\ell,\dots,(s+r)^\ell,0^{{\ell+1\over s-1}}]$ is defined by five integer parameters $\beta, s, r, k$ and $\ell$ such that $k\leq \ell$ and $N={\ell +1\over s-1}r+\ell(\beta+1)+k\in\N$. The corresponding specialization is $(t,q)=(z^{s-1\over g},z^{-{l+1\over g}}\omega)$ where $g=\gcd(\ell +1,s-1)$ and $\omega=\exp\left\{2i\pi(1+dg)\over s-1\right\}$. The simplest polynomials $(r=k=0)$ can be factorized for any variables $x_1,\dots,x_N$ as a $(q,t)$-discriminant. When $g=1$, this is a consequence of  a theorem proved by one of the authors (J.-G.L) with A. Boussicault \cite[Theorem 3.2]{BL}. For instance,
\[
P_{420}(x_1,x_2,x_3;q=t^{-2},t)=(-)_t(x_1-tx_2)(x_1-tx_3)(x_2-tx_1)(x_2-tx_3)(x_3-tx_1)(x_3-tx_1),
\]
where $(-)_t$ denotes a factor depending only on $t$.
But it seems that a factorization  holds also for the other cases (for special values of the variables and the parameters). Let us give a few examples:\\ \\
$\begin{array}{rcl}
P_{630}(x_1,x_2,x_3;q=-t^{-1},t)&=&(-)_t(x_1+x_2)(x_1+x_3)(x_2+x_3)(x_1-tx_2)(x_1-tx_3)\\&&(x_2-tx_1)(x_2-tx_3)(x_3-tx_1)(x_3-tx_1),\end{array}
$\\ \\
$\begin{array}{r}
P_{53000}(x_1,x_2,y,yt,yt^2;q=t^{-2},t)=(-)_t(x_1-yt^3)(x_1-yt)(x_1-yt^{-1})\\(x_2-yt^3)(x_2-yt)(x_2-yt^{-1}),
\end{array}
$\\ \\
$\begin{array}{rcl}
P_{42200}(x_1,y_1,y_1t,y_2,y_2t;q=t^{-3},t)&=&(-)_t(x_1-y_1t^2)(x_1-y_2t^2)(x_1-y_1t)(x_1-y_2t)\\&&(y_1-ty_2)(y_1-t^2y_2)(y_2-ty_1)(y_2-t^2y_1),\end{array}
$\\ \\
$
\begin{array}{r}
P_{6400000}(x_1,y_1,y_1t,y_1t^2,y_2,y_2t,y_2t^2;q=t^{-2},t)=(-)_t(x_1-y_1t^3)(x_1-y_1t^3)(x_1-y_1t^{-1})\\
(x_1-y_2t^3)(x_1-y_2t^3)(x_1-y_2t^{-1})(y_1-ty_2)(y_1-t^3y_2)(y_2-ty_1)(y_2-t^3y_1),
\end{array}
$\\ \\
$
\begin{array}{r}
P_{750^{7}}(x_1,y_1,\dots,y_1t^4,y_2,\dots,y_2t^2;q=t^2,t)=(-)_t(y_1-x_1t^3)(y_1-y_2t^3)
(y_1-x_1t^5)\\(y_1-y_2t^5)(y_2-x_1t^3)(y_2-y_1t^3)P_{420}(x_1,y_1,y_2;q=t^{-2},t),\dots
\end{array}
$\\ \\
A correct formula (and of course a proof) remains to be found.\\
Note there are also analogous formulas for singular nonsymmetric Macdonald polynomials indexed by  staircase partitions.
Consider the following examples:\\ \\
$E_{210}(x_1,x_2,x_3;q=z^{-2},t=z)=(-)_t(tx_2-x_1)(tx_3-x_1)(tx_3-x_2)$\\ \\
$\begin{array}{r}
E_{630}(x_1,x_2,x_3;q=z^{-2},t=z^3)=(-)_z
\left({\it x_2}\,z-{\it x_3} \right)  \left( -z{\it x_3}+{\it 
x_2} \right)  \left( {\it x_2}-{z}^{3}{\it x_3} \right)  \left( {\it x_1}
\,z-{\it x_3} \right)\\
 \left( -z{\it x_3}+{\it x_1} \right)  \left( {\it 
x_1}-{z}^{3}{\it x_3} \right)  \left( {\it x_1}\,z-{\it x_2} \right) 
 \left( {\it x_1}-{\it x_2}\,z \right)  \left( {\it x_1}-{\it x_2}\,{z}^{3
}\right),
\end{array}
$\\ \\
$
E_{420}(x_1,x_2,x_3;q=-t,t)=(-)_t(x_2 + x_3) (-t x_3 + x_2) (x_3 + x_1) (-t x_3 + x_1) (x_1 + x_2) (x_1 - x_2 t),
$\\ \\
$\begin{array}{r}
E_{221100}(x_1,x_2,y_1,ty_1,y_2,ty_2;q=t^{-3},t)=(-)_t(y_1 - y_2 t^2 ) (y_1 - y_2 t) (x_2 - y_2 t^2 )\\ (x_2 - t^2  y_1) (x_1 - y_2 t^2 ) (x_1 - t^2  y_1),\end{array}
$\\ \\
$
\begin{array}{r}
E_{442200}(x_1,y_1,y_1z^{2},y_2,y_2z^{2};q=z^{-3},t=z^{2})=
(-)_z(y_1-y_2z^4)(y_1-y_2z)(y_1z-y_2)\\(y_1-y_2z^2)
(x_1-y_2z^4)(x_1-y_2z)(x_1-y_1z^4)(x_1-y_1z).
\end{array}
$\\ \\
There are also equations involving permutations of $(m^k,0^{N-k})$. For instance:\\ \\	
$
E_{0022}(t,1,x_3,x_4;q=t^{-3},t)=(-)_t(tx_4-y)(tx_3-y)x_4x_3.
$\\ \\
In appendix \ref{m^k00} we show a factorization formula for permutations of $m^k0^{N-k}$.\\
The connections with singular properties remain to be investigated. In particular, if $E_{u}$ is singular then it has the same spectral vector for
$\left\{  \Xi_{i}\right\}  $ as $M_{u}$. If the spectral vector is
nondegenerate (multiplicity $=1$) then $M_{u}=E_{u}$.

\appendix
\section{About notations \label{Not}}
For parameters $a,b,c$, we consider the operator $G_{a,b,c}^i$ acting on  polynomials (or Laurent polynomials) by%
\[
G_{a,b,c}^{i}=a+\frac{bx_{i}-cx_{i+1}}{x_{i}-x_{i+1}}\left(  1-s_{i}%
\right)  ,1\leq i<N.
\]

The braid relations $G^{i}G^{i+1}G^{i}=G^{i+1}G^{i}G^{i+1}$ hold for two
families of solutions: $\left(  a,b,c\right)  =\left(  t_{1},-t_{1}%
,t_{2}\right)  $ and $\left(  a,b,c\right)  =\left(  t_{1},t_{2}%
,-t_{1}\right)  $. In both cases the quadratic relation
\[
\left(  G_{a,b,c}^{i}-t_{1}\right)  \left(  G_{a,b,c}^{i}-t_{2}\right)  =0
\]
holds. The inverses are%
\begin{eqnarray*}
\left(  G_{t_{1},-t_{1},t_{2}}^{i}\right)  ^{-1}  & =G_{1/t_{1},1/t_{2}%
,-1/t_{1}}^{i},\\
\left(  G_{t_{1},t_{2},-t_{1}}^{i}\right)  ^{-1}  & =G_{1/t_{1},-1/t_{1}%
,1/t_{2},}^{i}.
\end{eqnarray*}
Some pertinent evaluations are $1G_{a,b,c}^{i}=a$, $x_{i}G_{t_{1},-t_{1}%
,t_{2}}^{i}=-t_{2}x_{i+1}$, $x_{i+1}G_{t_{1},t_{2},-t_{1}}^{i}=-t_{2}x_{i}$.

In our applications we require the quadratic relation $\left(  G_{a,b,c}%
^{i}-t\right)  \left(  G_{a,b,c}^{i}+1\right)  =0$, and the evaluation
$1G_{a,b,c}^{i}=t$, because $1$ is (trivially) symmetric, thus $a=t,b=-1$. The
version of $T_{i}$ used in one of our previous papers \cite{DL1} is $G_{t,-t,-1}^{i}$, but Lascoux \emph{et al.}\cite{LRW} uses
$G_{t,-1,-t}^{i}$ which equals $\left(  G_{1/t,-1/t,-1}^{i}\right)  ^{-1}$
(that is, $T_{i}\left(  1/t\right)  ^{-1}$ from our notations\cite{DL1}).

With parameters $\frac{1}{t},\frac{1}{q}$ in relation to our notations \cite{DL1} the following
hold for the Lascoux \emph{et al.} \cite{LRW} versions (and $T_{i}=G_{t,-1,-t}^{i}$, $T_{i}%
^{-1}=G_{1/t,-1/t,-1}$ ):

\begin{enumerate}
\item $\xi_{i}=t^{1-i}T_{i-1}\ldots T_{1}\tau T_{N-1}^{-1}\ldots T_{i}^{-1}$
(Cherednik operator)

\item $p\tau\left(  x\right)  =p\left(  x_{N}/q,x_{1},\ldots,x_{N-1}\right)
;$

\item $D_{N}=\left(  1-\xi_{N}\right)  x_{N}^{-1}$; a simple check for the
coefficient of $\xi_{N}$: apply the operator to the constant polynomial $1$,
since $1T_{i}=t$ we have $1\xi_{i}=t^{i-N}$ and it is necessary that
$1D_{N}=0$.

\item $D_{i}=tT_{i}^{-1}D_{i+1}T_{i}^{-1}$

\item $\Xi_{i}=t^{1-i}T_{i-1}\ldots T_{1}\tau\left(  1-\frac{1}{x_{N}}\right)
T_{N-1}^{-1}\ldots T_{i}^{-1}+\frac{1}{x_{i}}$; similarly check the constants
by verifying $1\Xi_{i}=t^{i-N}$; note $\left(  \frac{1}{x_{i+1}}\right)
T_{i}^{-1}=\frac{1}{x_{i}}$.
\end{enumerate}

\section{Alternative proof of Lemma \ref{DenomP0} \label{withDum}}
Let $\lambda\subset [m^k,0^{N-k}]$ with $2k\leq N$. We have $\lambda=[m^{k'},\lambda_{k'+1},\dots,\lambda_{k''},0^{N-k''}]$ with $k'\leq k''\leq k$  and $\lambda_{k'+1}<m$.\\
The vector $\langle \lambda^-\rangle^\infty$  with $\lambda\subsetneq m^k$ is
$$
[t^{N-k''-1},\dots,t,1,*,\dots,*,q^mt^{N-1},\dots,q^mt^{N-k'},qt^{N-k''-1},\dots,qt,q,\dots]
$$
So we have to enumerate the pairs $(i,j)$ with $i=1\dots k'$ and $j=1\dots N-k''$ verifying
$\langle \lambda^-\rangle^\infty_{N-k'+i}=q^mt^{N-i}$, $\langle \lambda^-\rangle^\infty_{N+j}=qt^{N-k''-j}$  and $ {\langle \lambda^-\rangle^\infty_{N-k'+i} \over \langle \lambda^-\rangle^\infty_{N+j}}=q^{m-1}t^{N-k+1}$.
Then,  we have to enumerate the pairs $(i,j)$ with $i=1\dots k'$, $j=1\dots N-k''$ such that $N-i-(N-k''-j)=N-k+1$. Equivalently, $j=N-k-k''+1+i$. Hence, when $i$ ranges $1\dots k'$, $j$ ranges $N-k-k''+2\dots N-k+k'-k''+1$. But since $k\leq N-k$ and $k''\leq k$ we have $2\leq k-k''+2\leq N-k-k''+2$. If $\lambda\subsetneq [m^k,0^{N-k}]$ then  $k'< k$ and we have $N-k+k'-k''\leq N-k''$.\\
To resume: when $\lambda\subsetneq [m^k,0^{N-k}]$, $j$ ranges $N-k-k''+2\dots N-k+k'-k'+1\subset 1\dots N-k''$. So there are $N-k+k'-k''+1-(N-k-k''+2)+1=k'$ pairs $(i,j)$ satisfying the property. It follows that the maximal power of $(1-q^{m-1}t^{N-k+1})$ in the denominator of equality (\ref{Lasc}) is $k'$. \\
But if $\lambda=[m^k,0^{N-k}]$ then $j$ ranges $N-2k+2\dots N-k$ ($k=k'=k''$); so the maximal power is $k'-1=k-1$.\\
Similarly, we study the numerator: we have to enumerate the pair $(i,j)$ with $i=1\dots k'$, $j=1\dots N-k''$ such that $N-i-(N-k''-j)+1=N-k+1$. Equivalently, $j=N-k-k''+i$. Hence, when $i$ ranges $1\dots k'$, $j$ ranges $N-k-k''+1\dots N-k+k'-k''$.
Hence, when $i$ ranges $1\dots k'$, $j$ ranges $N-k-k''+1\dots  N-k+k'-k''$. Again $N-k-k''+1\dots  N-k+k'-k''\subset 1\dots N-k''$. So there are $N-k+k'-k''-(N-k-k''+1)+1=k'$ pairs $(i,j)$ satisfying the property. As a consequence the maximal power of $(1-q^{m-1}t^{N-k+1})$ in the numerator of (\ref{Lasc}) equals $k'$.

If $\lambda\subsetneq [m^k,0^{N-k}]$, the numerator and the denominator simplify and we have no factor of the form $(1-q^{m-1}t^{N-k+1})$ in (\ref{Lasc}). On the other hand, when $\lambda=[m^k,0^{N-k}]$, after  simplification it remains a factor  $(1-q^{m-1}t^{N-k+1})$ in (\ref{Lasc}). This completes the proof.

\section{Expressions for $M_{u}$ where $u$ is a permutation of $[m^k0^{N-k}]$\label{m^k00}}

Fix $m\geq1$ and $k$ with $1\leq k\leq N$. As before we neglect scalar
multipliers $\left(  \ast\right)  $ of the form $q^{a}t^{b}$, and then adjust
the formulae using the monic properties. Define
\[
\Omega_{N,k}:=\left\{  \beta\in\mathbb{N}^{k}:1\leq\beta\left[  1\right]
<\beta\left[  2\right]  <\ldots<\beta\left[  k\right]  \leq N\right\}
\]
For $\beta\in\Omega_{N,k}$ define $u\left(  \beta\right)  \in\mathbb{N}^{N}$
by $u\left(  \beta\right)  \left[  \beta\left[  i\right]  \right]  =m$ for
$1\leq i\leq k$ otherwise $u\left(  \beta\right)  \left[  j\right]  =0$. Let
$\sigma\left(  \beta\right)  $ denote the set $\left\{  \beta\lbrack i]:1\leq
i\leq k\right\}  $. For $1\leq i\leq k$ define $\varepsilon_{i}$ by
$\varepsilon_{i}\left[  j\right]  =\delta_{ij}$.
\begin{remark}\rm\label{rembeta}
Note if $i=\beta[j]-1\not\in \sigma(\beta)$ for some $j$ then $u(\beta).s_i=u(\beta-\varepsilon_j)$. For instance, if $N=7$ and $\beta=[1,4,5,7]$ then $u(\beta)=[m,0,0,m,m,0,m]$ and
\[u(\beta)s_6=[m,0,0,m,m,m,0]=u([1,4,5,6])=u(\beta-\varepsilon_4).\]
\end{remark}
\begin{definition}
\label{defx_beta}For $1\leq i\leq N$ let $\chi_{\beta}\left(  i\right)
=N-i-\#\left\{  l:\beta\left[  l\right]  >i\right\}  $, (if $i=\beta\left[
l\right]  $ then $\chi_{\beta}\left(  i\right)  =N-i-k+l$). Also let
$x^{\left(  \beta\right)  }$ be the point given by%
\[
x_{i}^{\left(  \beta\right)  }=\left\{
\begin{array}
[c]{c}%
x_{i},i\in\sigma\left(  \beta\right) \\
t^{\chi_{\beta}\left(  i\right)  },i\notin\sigma\left(  \beta\right)
\end{array}
\right.  .
\]

\end{definition}

The meaning of $\chi_{\beta}$ is simply that $\chi_{\beta}\left(  i\right)
=N-r_{u\left(  \beta\right)  }\left[  i\right]  $ and $x_{i}^{\left(
\beta\right)  }=\left\langle u\left(  \beta\right)  \right\rangle \left[
i\right]  $ for $i\notin\sigma\left(  \beta\right)  $. The following is one of
the main results of this section. The proof consists of several lemmas.

\begin{theorem}
\label{Mux_fact}For $\beta\in\Omega_{N,k}$%
\[
M_{u\left(  \beta\right)  }\left(  x^{\left(  \beta\right)  }\right)
=\prod_{j=1}^{k}\left(  x_{\beta\left[  j\right]  }-t^{N+j-k-\beta\left[
j\right]  }\right)  \prod_{i=1}^{m-1}\left(  x_{\beta\left[  j\right]
}-t^{N-k}q^{i}\right)
\]

\end{theorem}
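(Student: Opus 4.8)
The plan is to prove the identity (up to the scalar $(\ast)$ permitted throughout this section, fixed at the end by the monic normalisation) by induction on $\beta$ along the edges of the Yang--Baxter graph, i.e.\ along admissible steps $\beta\mapsto\beta+\varepsilon_{j}$ inside $\Omega_{N,k}$; every $\beta\in\Omega_{N,k}$ is reached from the base point $[1,2,\dots,k]$ by such steps, since from any $\beta\neq[1,\dots,k]$ one may lower the first coordinate with $\beta[j]>j$. For the base case $\beta=[1,\dots,k]$ one has $u(\beta)=[m^{k},0^{N-k}]$ and $\chi_{\beta}(l)=N-l$ for $l>k$, so $x^{(\beta)}=(x_{1},\dots,x_{k},t^{N-k-1},\dots,t,1)$. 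By part~(1) of the proposition comparing $M^{(N)}_{v}$ at this point with $M^{(k)}_{v^{(k)}}$, the left side reduces (in rescaled variables $x_{i}\mapsto x_{i}t^{k-N}$) to $M^{(k)}_{[m^{k}]}$; since $M^{(k)}_{[m^{k}]}$ is symmetric, the corollary on $MS^{(k)}_{v_{m,0}}$ identifies it, up to $(\ast)$, with $\prod_{i=1}^{k}(x_{i}t^{k-N},q^{-1})_{m}$, and after rescaling and normalising this becomes $\prod_{i=1}^{k}\prod_{l=0}^{m-1}(x_{i}-t^{N-k}q^{l})$, the asserted product at $\beta=[1,\dots,k]$.

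For the inductive step fix an admissible $\beta\mapsto\beta'=\beta+\varepsilon_{j}$ and set $i=\beta[j]=\beta'[j]-1$. As $i\notin\sigma(\beta')$, Remark~\ref{rembeta} applied to $\beta'$ gives $u(\beta').s_{i}=u(\beta)$, and since $u(\beta')[i]=0<m=u(\beta')[i+1]$ the intertwining relation~(\ref{IntertwinM}) reads $M_{u(\beta)}=M_{u(\beta')}\bigl(T_{i}+\tfrac{1-t}{1-\langle u(\beta')\rangle[i+1]/\langle u(\beta')\rangle[i]}\bigr)$. Because $T_{i}$ satisfies $(T_{i}-t)(T_{i}+1)=0$, this intertwiner is invertible and one obtains
\[
M_{u(\beta')}=(\ast)\,M_{u(\beta)}\,(T_{i}+c'),\qquad c'=\frac{1-t}{\,1-\langle u(\beta)\rangle[i+1]/\langle u(\beta)\rangle[i]\,},
\]
the prefactor being an explicit rational function of $q,t$. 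The combinatorial fact that makes this usable, to be read off from Definition~\ref{defx_beta}, is that $\chi_{\beta'}(l)=\chi_{\beta}(l)$ for every $l\neq i$ while $\chi_{\beta'}(i)=\chi_{\beta}(i)-1=\chi_{\beta}(i+1)$; hence $x^{(\beta)}$ and $x^{(\beta')}$ agree off positions $i,i+1$, at $x^{(\beta')}$ position $i+1$ carries the $j$-th free variable (call it $z$) and position $i$ is frozen at $\rho:=t^{\chi_{\beta'}(i)}=t^{\chi_{\beta}(i+1)}$, and the transposed point $s_{i}\cdot x^{(\beta')}$, after the frozen substitutions, is exactly $x^{(\beta)}$ with its $j$-th variable renamed $z$.

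Expanding $T_{i}+c'=(t+c')+(s_{i}-1)\tfrac{tx_{i+1}-x_{i}}{x_{i+1}-x_{i}}$ and evaluating the displayed identity at $x^{(\beta')}$ gives
\[
M_{u(\beta')}(x^{(\beta')})=(\ast)\Bigl[(t+c')\,M_{u(\beta)}(x^{(\beta')})+\bigl(M_{u(\beta)}(x^{(\beta)})|_{x_{\beta[j]}=z}-M_{u(\beta)}(x^{(\beta')})\bigr)\tfrac{tz-\rho}{z-\rho}\Bigr],
\]
the middle term being supplied by the induction hypothesis, since $s_{i}\cdot x^{(\beta')}$ collapses to $x^{(\beta)}$. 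Thus the inductive step comes down entirely to evaluating the mixed quantity $M_{u(\beta)}(x^{(\beta')})$: the value of $M_{u(\beta)}$ at a point in which one of its $m$-columns, position $i=\beta[j]$, has been frozen at the $t$-power $\rho$ rather than at its true spectral value $q^{m}t^{\,N-j}$, while position $i+1$ has been released.

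I expect this to be the main obstacle. The induction hypothesis as stated does not see such mixed evaluations, so it must be strengthened: one wants a product formula for $M_{u(\beta)}(x^{(\beta^{\circ})})$ for all $\beta^{\circ}\in\Omega_{N,k}$ with $\beta^{\circ}\geq\beta$ coordinatewise (the diagonal $\beta^{\circ}=\beta$ being the theorem), established by a nested induction, and this is presumably what the ``several lemmas'' provide, using the vanishing property~(\ref{vanish}) together with Knop's extra-vanishing theorem~\cite{Knop2} to exhibit exactly the hyperplanes $x_{\beta^{\circ}[l]}=t^{\chi_{\beta^{\circ}}(\beta^{\circ}[l])}$ and $x_{\beta^{\circ}[l]}=t^{N-k}q^{s}$ ($1\leq s\leq m-1$) on which the relevant restriction of $M_{u(\beta)}$ must vanish. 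Once the mixed evaluation is identified with the natural product, the apparent pole at $z=\rho$ on the right-hand side above cancels, the surviving $z$-dependent factor becomes the single new linear form $z-t^{\,N+j-k-\beta'[j]}=z-t^{\chi_{\beta'}(i+1)}$ times $\prod_{s=1}^{m-1}(z-t^{N-k}q^{s})$, the factors for $l\neq j$ are carried along unchanged, and fixing $(\ast)$ from the coefficient of $\prod_{l}x_{\beta'[l]}^{m}$ (which equals the coefficient of $x^{u(\beta')}$ in $M_{u(\beta')}$) completes the induction.
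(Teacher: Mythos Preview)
Your overall strategy---induction along the Yang--Baxter graph, with Knop's extra-vanishing theorem supplying the missing evaluation---is the same as the paper's, but you have run the induction in the wrong direction, and this is why you hit an obstacle that you then cannot resolve.

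In your step $\beta\mapsto\beta'=\beta+\varepsilon_{j}$ the mixed evaluation $M_{u(\beta)}(x^{(\beta')})$ does \emph{not} vanish; already for $N=2$, $k=m=1$ one computes $M_{[1,0]}(1,z)=\frac{(1-t)(z-qt)}{1-qt}\neq0$. So the term you flag as ``the main obstacle'' really is one, and the proposed cure---a product formula for $M_{u(\beta)}(x^{(\beta^{\circ})})$ for every $\beta^{\circ}\geq\beta$---is neither stated precisely nor proved. Knop's extra-vanishing theorem gives vanishing of $M_{u(\gamma)}(\langle v\rangle)$ only when the $m$-support of $\gamma$ cannot be pushed into that of $v$ by a suitable permutation; in your situation $\sigma(\beta)\subset\sigma(\beta')\cup\{i\}$ with $i<\beta'[j]$, and the required permutation exists, so no vanishing is forced.

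The paper's fix is simply to reverse the induction: start at $\beta=[N-k+1,\dots,N]$ (where $M_{u(\beta)}$ is obtained from $M_{[(m-1)^{k},0^{N-k}]}$ by $k$ applications of $\Phi$ and factorises directly), and step $\beta\mapsto\beta-\varepsilon_{j}$. In that direction the forward intertwiner $M_{u(\beta-\varepsilon_{j})}=M_{u(\beta)}(T_{i}+c)$ is used without inversion, and the mixed evaluation that appears is $M_{u(\beta)}(x^{(\beta-\varepsilon_{j})})=M_{u(\beta)}(x^{(\beta)}.s_{i})$. This \emph{does} vanish, by Knop's extra-vanishing (Lemma~\ref{Mxi=0}): any $v$ with $v[l]\geq1$ for $l\in\sigma(\beta-\varepsilon_{j})$ and $v[l]=0$ otherwise fails Knop's permutation criterion relative to $u(\beta)$, so $M_{u(\beta)}(\langle v\rangle)=0$ for all such $v$, and a degree count forces the whole polynomial (in the free variables) to vanish. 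Then Lemma~\ref{Msiy} gives $M_{u(\beta-\varepsilon_{j})}(x^{(\beta-\varepsilon_{j})})=\frac{x_{i}-t\rho}{x_{i}-\rho}\,M_{u(\beta)}(x^{(\beta)})$ with $\rho=t^{\chi_{\beta}(i)}$, which is exactly the ratio $F_{\beta-\varepsilon_{j}}/F_{\beta}$, and the induction closes in one line with no strengthening needed.
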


\begin{lemma}
\label{Msiy}Suppose $v\in\mathbb{N}^{N}$, $v\left[  i\right]  <v\left[
i+1\right]  $ for some $i<N$ and $y$ is a point such that $M_{v}\left(
y\right)  =0$, then%
\[
M_{v.s_{i}}\left(  y\right)  =\left(  \frac{y_{i}-ty_{i+1}}{y_{i}-y_{i+1}%
}\right)  M_{v}\left(  y.s_{i}\right)  .
\]

\end{lemma}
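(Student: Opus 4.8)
The plan is to read the identity straight off the Yang--Baxter (intertwining) relation (\ref{IntertwinM}), after substituting the explicit expression (\ref{defT_i}) for $T_i$ and then evaluating everything at the point $y$, using the hypothesis $M_v(y)=0$ to annihilate every term that still carries a factor of $M_v$.

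Concretely, since $v[i]<v[i+1]$, relation (\ref{IntertwinM}) reads
\[
M_{v.s_i}=M_v\,T_i+\frac{1-t}{1-\frac{\langle v\rangle[i+1]}{\langle v\rangle[i]}}\,M_v ,
\]
and (\ref{defT_i}) gives $M_v\,T_i=t\,M_v+\bigl(M_v s_i-M_v\bigr)\dfrac{tx_{i+1}-x_i}{x_{i+1}-x_i}$, where the order of the factors is the one pinned down by requiring $x_{i+1}T_i=x_i$ as in (\ref{actT_i}) (apply $s_i$ first, then multiply by the rational function). Hence $M_{v.s_i}$ is the sum of the single term $\bigl(M_v s_i\bigr)\dfrac{tx_{i+1}-x_i}{x_{i+1}-x_i}$ together with three further terms, each of which is $M_v$ multiplied by a factor that is finite at $y$ (the scalar $t$, the scalar spectral coefficient, and the rational function $-\dfrac{tx_{i+1}-x_i}{x_{i+1}-x_i}$, the last under the harmless assumption $y_i\neq y_{i+1}$ implicit in the statement). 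Evaluating at $x=y$ and invoking $M_v(y)=0$, these three terms all vanish.

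What survives is $M_{v.s_i}(y)=\bigl(M_v s_i\bigr)(y)\cdot\dfrac{ty_{i+1}-y_i}{y_{i+1}-y_i}$. Since $s_i$ acts on the right by interchanging the variables $x_i$ and $x_{i+1}$, one has $\bigl(M_v s_i\bigr)(y)=M_v(y.s_i)$, and multiplying numerator and denominator of the prefactor by $-1$ rewrites it as $\dfrac{y_i-ty_{i+1}}{y_i-y_{i+1}}$, which is exactly the asserted formula. There is essentially no obstacle here: the lemma is a purely formal consequence of (\ref{IntertwinM}) and (\ref{defT_i}), the only delicate point being the bookkeeping of the side on which $s_i$ acts relative to the rational multiplier. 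Its purpose is to transport a vanishing/evaluation datum of $M_v$ from the label $v$ to the adjacent label $v.s_i$ along the Yang--Baxter graph, which is precisely the mechanism needed in the inductive proof of Theorem~\ref{Mux_fact}.
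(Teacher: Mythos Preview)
Your proof is correct and is exactly the argument the paper has in mind: the paper's own proof consists of the single line ``This follows from Definition \ref{defT_i} of $T_i$,'' and what you have written is precisely the unpacking of that sentence via (\ref{IntertwinM}) and (\ref{defT_i}). Your care with the right-action convention and the check via $x_{i+1}T_i=x_i$ is appropriate and the computation is clean.
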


\begin{proof}
This follows from Definition \ref{defT_i} of $T_{i}$.
\end{proof}

\begin{lemma}
\label{Mxi=0}Suppose $\beta\in\Omega_{N,k}$ and $\beta\left[  j\right]
-1\notin\sigma\left(  \beta\right)  $ for some $j$. Set $i=\beta\left[
j\right]  -1$ so that $u\left(  \beta\right)  \left[  i\right]  =0,u\left(
\beta\right)  \left[  i+1\right]  =m$ and $u\left(  \beta\right)
.s_{i}=u\left(  \beta-\varepsilon_{j}\right)  $ (see Remark \ref{rembeta}). Then%
\[
M_{u\left(  \beta\right)  }\left(  x^{\left(  \beta\right)  }.s_{i}\right)
=0.
\]

\end{lemma}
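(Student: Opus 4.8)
The plan is to read the vanishing off the interpolation characterisation of $M_{u(\beta)}$, after pinning down the point $x^{(\beta)}.s_{i}$. Write $\beta':=\beta-\varepsilon_{j}$, so that $u(\beta).s_{i}=u(\beta')$. Using $\chi_{\beta}(\ell)=N-r_{u(\beta)}[\ell]$ (Definition \ref{defx_beta}) and the explicit count defining $\chi_{\beta}$, I would first verify the two identities $\chi_{\beta}(\beta[j]-1)=\chi_{\beta}(\beta[j])=N-\beta[j]-k+j=\chi_{\beta'}(\beta[j])$ and $\chi_{\beta}(\ell)=\chi_{\beta'}(\ell)$ for every $\ell\neq\beta[j]-1$. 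These force $x^{(\beta)}.s_{i}$ and $x^{(\beta')}$ to agree in all coordinates except position $\beta[j]-1$, where the former places the free variable $x_{\beta[j]}$ and the latter the free variable $x_{\beta[j]-1}$; since neither of those two variables occurs at any other coordinate of either point, $M_{u(\beta)}(x^{(\beta)}.s_{i})$ and $M_{u(\beta)}(x^{(\beta')})$ are the same polynomial up to renaming one variable. Hence it suffices to show $M_{u(\beta)}(x^{(\beta')})\equiv0$ as a polynomial in $x_{\beta[1]},\dots,x_{\beta[k]}$; note it has degree $\le m$ in each of these $k$ variables, because every exponent vector $u$ occurring in $M_{u(\beta)}$ satisfies $u\preceq u(\beta)$ in the dominance order, which forces each $u[\ell]\le m$.

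Next I would specialise the free variables to spectral values. Given $(c_{1},\dots,c_{k})\in\mathbb{N}^{k}$, let $w$ be the vector with $w[\beta'[\ell]]=c_{\ell}$ for $1\le\ell\le k$ and all other coordinates $0$. Because the coordinates of $x^{(\beta')}$ at positions outside $\sigma(\beta')$ carry no power of $q$, and because a direct check from the definition of the rank function gives $r_{w}[\ell]=r_{u(\beta')}[\ell]$ for every $\ell\notin\sigma(\beta')$, the substitution $x_{\beta'[\ell]}\mapsto q^{c_{\ell}}t^{N-r_{w}[\beta'[\ell]]}$ turns $x^{(\beta')}$ into exactly $\langle w\rangle$. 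Now restrict to $(c_{1},\dots,c_{k})\in\{0,1,\dots,m\}^{k}$: if some $c_{\ell}<m$ then $u(\beta)^{+}=[m^{k}]\not\subseteq w^{+}$, so $M_{u(\beta)}(\langle w\rangle)=0$ by (\ref{0qtbin[]}); and if every $c_{\ell}=m$ then $w=u(\beta')\neq u(\beta)$ with $|w|=mk=|u(\beta)|$, so $M_{u(\beta)}(\langle w\rangle)=0$ by (\ref{vanish}). Thus $M_{u(\beta)}(x^{(\beta')})$ vanishes at all $(m+1)^{k}$ of the resulting points $Q(\vec c)$, $\vec c\in\{0,\dots,m\}^{k}$, where $Q(\vec c)_{\ell}=q^{c_{\ell}}t^{N-r_{w}[\beta'[\ell]]}$.

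It remains to conclude $M_{u(\beta)}(x^{(\beta')})\equiv0$ from the degree bound and these $(m+1)^{k}$ vanishings. The polynomial has exactly $(m+1)^{k}$ unknown coefficients, indexed by multidegrees $\vec d\in\{0,\dots,m\}^{k}$, and the vanishing conditions form the linear system with matrix $E_{\vec c,\vec d}=\prod_{\ell}Q(\vec c)_{\ell}^{\,d_{\ell}}=q^{\langle\vec c,\vec d\rangle}\,t^{\,b(\vec c,\vec d)}$ with $b(\vec c,\vec d)\in\mathbb{Z}$. In the Leibniz expansion of $\det E$, the term indexed by a permutation $\pi$ of $\{0,\dots,m\}^{k}$ carries $q$-degree $\sum_{\vec c}\langle\vec c,\pi(\vec c)\rangle$, which by the Cauchy–Schwarz inequality is strictly maximal for $\pi=\mathrm{id}$; hence the top $q$-degree part of $\det E$ is the single nonzero monomial produced by the identity permutation, so $\det E\not\equiv0$, $E$ is invertible, and $M_{u(\beta)}(x^{(\beta')})\equiv0$. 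I expect the step costing the most effort to be the second paragraph — checking that an arbitrary choice of exponents really does produce the spectral point of a vector supported on $\sigma(\beta-\varepsilon_{j})$ — which rests on the identity $\chi_{\beta}(\beta[j]-1)=\chi_{\beta'}(\beta[j])$ above and on the stability of the rank function under redistributing mass within $\sigma(\beta')$.
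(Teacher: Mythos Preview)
Your overall strategy --- specialise the free variables to spectral values, obtain enough vanishings, and interpolate --- is the same as the paper's. But there is a genuine gap in the second paragraph: the identity $r_{w}[\ell]=r_{u(\beta')}[\ell]$ for $\ell\notin\sigma(\beta')$ \emph{fails} whenever some $c_{r}=0$. Indeed, for $\ell\notin\sigma(\beta')$ one has $w[\ell]=0$ and
\[
r_{w}[\ell]=\ell+\#\{k>\ell:w[k]\geq1\},\qquad r_{u(\beta')}[\ell]=\ell+\#\{k>\ell:k\in\sigma(\beta')\},
\]
and these differ as soon as some $\beta'[r]>\ell$ has $c_{r}=0$. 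For a concrete instance take $N=4$, $k=2$, $\beta'=[1,3]$, so $x^{(\beta')}=(x_{1},t,x_{3},1)$; with $\vec c=(0,0)$ your substitution gives $(t^{3},t,t,1)$, which is not $\langle 0^{4}\rangle=(t^{3},t^{2},t,1)$ and is not a spectral vector of any $w$ at all (two coordinates would need the same rank). Consequently only the tuples $\vec c\in\{1,\dots,m\}^{k}$ yield genuine spectral points; that is $m^{k}$ vanishings, not the $(m+1)^{k}$ your interpolation needs.

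You cannot recover the missing points from (\ref{0qtbin[]}) and (\ref{vanish}) alone: if you shift to $\vec c\in\{1,\dots,m+1\}^{k}$, then for any $\vec c$ with all $c_{\ell}\geq m$ and some $c_{\ell}=m+1$ one has $[m^{k}]\subseteq w^{+}$ and $|w|>|u(\beta)|$, so neither condition applies. The paper closes this gap by invoking Knop's extra-vanishing theorem directly: it shows that $M_{u(\beta)}(\langle v\rangle)=0$ for \emph{every} $v$ supported exactly on $\sigma(\beta')$ with all entries $\geq1$, via the permutation-comparison criterion (the key point being that the induced permutation must fix $\{k+1,\dots,N\}\ni i+1$, forcing $u(\beta)[i+1]\leq v[i+1]=0$, a contradiction). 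This yields infinitely many vanishings and hence the identical vanishing. If you want to keep your finite interpolation and determinant argument (which is otherwise sound), you should replace your appeal to (\ref{0qtbin[]})/(\ref{vanish}) by this stronger Knop criterion and take $\vec c\in\{1,\dots,m+1\}^{k}$.
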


\begin{proof}
Let $v\in\mathbb{N}^{N}$ satisfy $v\left[  l\right]  \geq1$ for $l\in
\sigma\left(  \beta-\varepsilon_{j}\right)  $ and $v\left[  l\right]  =0$ for
$l\notin\sigma\left(  \beta-\varepsilon_{j}\right)  $ (in particular $v\left[
i\right]  \geq1$ and $v\left[  i+1\right]  =0$). By Knop's \textquotedblleft
extra-vanishing\textquotedblright\ theorem \cite[Thm. 4.5]{Knop2} $M_{u\left(
\beta\right)  }\left(  \left\langle v\right\rangle \right)  =0$. To prove by
contradiction suppose $M_{u\left(  \beta\right)  }\left(  \left\langle
v\right\rangle \right)  \neq0$ then there is a permutation $w\in
\mathfrak{S}_{N}$ such that for any $i$ with $1\leq i\leq N$ either $i\geq
w\left(  i\right)  $ and $u\left(  \beta\right)  \left[  i\right]  \leq
v\left[  w\left(  i\right)  \right]  $ or $i<w\left(  i\right)  $ and
$u\left(  \beta\right)  \left[  i\right]  <v\left[  w\left(  i\right)
\right]  $. Let $S_{1}=\left\{  1,\ldots,N\right\}  \backslash\sigma\left(
\beta\right)  $ and $S_{2}=\left\{  1,\ldots,N\right\}  \backslash
\sigma\left(  \beta-\varepsilon_{j}\right)  $. The sets $S_{1}$ and $S_{2}$
agree with the exception $i\in S_{1}\backslash S_{2}$ and $i+1\in
S_{2}\backslash S_{1}$. By construction $w\left(  l\right)  \in S_{2}$ implies
$l\in S_{1}$ and $l\geq w\left(  l\right)  $. Thus $w$ maps $S_{1}$ onto
$S_{2}$ and by induction (using $l\geq w\left(  l\right)  $) we see that
$w\left(  l\right)  =l$ for each $l\in S_{1}$ with $l<i$. But then $w\left(
i\right)  \geq i+1$, which is impossible.

Consider the vector $\left\langle v\right\rangle $; by construction%
\[
\left\langle v\right\rangle \left[  l\right]  =\left\{
\begin{array}
[c]{c}%
q^{v\left[  l\right]  }t^{N-r_{v}\left[  l\right]  },l\in\sigma\left(
\beta-\varepsilon_{j}\right)  \\
t^{\chi_{\beta}\left(  l\right)  },l\notin\sigma\left(  \beta-\varepsilon
_{j}\right)
\end{array}
\right.  .
\]
Note $1\leq r_{v}\left[  l\right]  \leq k$ for $l\in\sigma\left(
\beta-\varepsilon_{j}\right)  $, and $\left\langle v\right\rangle =x^{\left(
\beta\right)  }.s_{i}$ where each $x_{l}$ is of the form $q^{a}t^{b}$ with
$a\geq1$ and $N-k+1\leq b\leq N$. Now $M_{u}\left(  x^{\left(  \beta\right)
}.s_{i}\right)  $ is a polynomial in $k$ variables of degree $mk$ which
vanishes at infinitely many points of this form; by the uniqueness of the
vanishing property of shifted Macdonald polynomials $M_{u}\left(  x^{\left(
\beta\right)  }.s_{i}\right)  =0$ for all $x^{\left(  \beta\right)  }$.
\end{proof}

\begin{lemma}
Suppose $\beta=\left[  N-k+1,\ldots,N\right]  $ then%
\[
x^{\left(  \beta\right)  }=\left(  t^{N-k+1},\ldots,t,1,x_{N-k+1},\ldots
,x_{N}\right)
\]
and
\[
M_{u\left(  \beta\right)  }\left(  x^{\left(  \beta\right)  }\right)
=\prod_{i=N-k+1}^{N}\left(  x_{i}-1\right)  \prod_{j=1}^{m-1}\left(
x_{i}-t^{N-k}q^{j}\right)  .
\]

\end{lemma}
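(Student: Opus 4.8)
The first displayed identity will be a direct unwinding of Definition~\ref{defx_beta}: for $\beta=[N-k+1,\dots,N]$ and any $i\le N-k$ one has $\beta[l]>i$ for every $l$, hence $\#\{l:\beta[l]>i\}=k$ and $\chi_\beta(i)=N-k-i$; so the coordinates $1,\dots,N-k$ of $x^{(\beta)}$ carry $t^{N-k-1},\dots,t,1$ and those in $\sigma(\beta)=\{N-k+1,\dots,N\}$ carry the free variables $x_{N-k+1},\dots,x_N$. (When $k=N$ there is no tail, $u(\beta)=[m^N]$, and the second identity is exactly the product formula for the rectangular $M_{[m^N]}$ proved earlier, so I assume $k<N$ below.)

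For the second identity the idea is to reach $u(\beta)=[0^{N-k},m^k]$ from the \emph{reversed} rectangular vector $[(m-1)^k,0^{N-k}]$, whose support lies in the first $k$ coordinates, by the raising operator. A one-line check of $\Phi$ gives $[(m-1)^{k-j},0^{N-k},m^{j}]\Phi=[(m-1)^{k-j-1},0^{N-k},m^{j+1}]$ for $0\le j<k$, hence $[(m-1)^k,0^{N-k}]\Phi^{k}=[0^{N-k},m^k]$. Writing \eqref{RaisM} in pointwise form, $M_{v\Phi}(y)=(y_N-1)\,M_v(y_N/q,y_1,\dots,y_{N-1})$, and iterating it $k$ times, the leading linear factors peel off while the evaluation point is cyclically rotated, so that
\[
M_{[0^{N-k},m^k]}(y)=\Big(\prod_{i=N-k+1}^{N}(y_i-1)\Big)\,M_{[(m-1)^k,0^{N-k}]}\big(y_{N-k+1}/q,\dots,y_N/q,\,y_1,\dots,y_{N-k}\big).
\]

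Next I would specialize $y=x^{(\beta)}$: in the rotated argument of $M_{[(m-1)^k,0^{N-k}]}$ the first $k$ slots become $x_{N-k+1}/q,\dots,x_N/q$ and the last $N-k$ slots become $t^{N-k-1},\dots,t,1$, which is exactly the shape required by the restriction property for shifted nonsymmetric Macdonald polynomials with $\ell(v)\le k$ (the identity $M_v(z_1t^{N-k},\dots,z_kt^{N-k},t^{N-k-1},\dots,1)=(\ast)M^{(k)}_{v^{(k)}}(z)$), applied with $v=[(m-1)^k,0^{N-k}]$ and $z_i=x_{N-k+i}\,q^{-1}t^{k-N}$. This turns the right side into $(\ast)M^{(k)}_{[(m-1)^k]}\big(x_{N-k+1}q^{-1}t^{k-N},\dots,x_Nq^{-1}t^{k-N}\big)$. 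Substituting the explicit product formula for the rectangular shifted polynomial, $M^{(k)}_{[(m-1)^k]}(z)=(\ast)\prod_{i=1}^{k}(z_i,q^{-1})_{m-1}=(\ast)\prod_{i=1}^{k}\prod_{l=0}^{m-2}(z_i-q^{l})$ (the earlier proposition on $M_{v_{m,k}}$, specialized to $N\mapsto k$, $m\mapsto m-1$, $k\mapsto 0$), each factor becomes $(\ast)(x_{N-k+i}-t^{N-k}q^{l+1})$; reindexing $l+1=j\in\{1,\dots,m-1\}$ and multiplying back $\prod_{i>N-k}(x_i-1)$ yields
\[
M_{u(\beta)}(x^{(\beta)})=(\ast)\prod_{i=N-k+1}^{N}\Big((x_i-1)\prod_{j=1}^{m-1}(x_i-t^{N-k}q^{j})\Big),
\]
and the scalar $(\ast)$ is then read off from the coefficient of $\prod_{i>N-k}x_i^{m}$, which on the left is $q^{-n'(u(\beta))}=q^{-km(m-1)/2}$.

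Nothing here is deep once the factorization $[(m-1)^k,0^{N-k}]\Phi^{k}=[0^{N-k},m^k]$ is spotted; the one thing to be careful about is the cyclic rotation of the evaluation point during the $k$ raisings, so that the tail $(t^{N-k-1},\dots,1)$ ends up in the last $N-k$ coordinates and the restriction proposition can legitimately be invoked.
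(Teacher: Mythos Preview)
Your argument is correct and follows essentially the same route as the paper: both proofs reach $u(\beta)=[0^{N-k},m^k]$ from $v=[(m-1)^k,0^{N-k}]$ via $v\Phi^{k}$, iterate the raising relation \eqref{RaisM} to peel off the $\prod_{i>N-k}(x_i-1)$ factor while cyclically rotating the argument, then specialize to $x^{(\beta)}$ and invoke the results of Section~\ref{secclust} (restriction to $k$ variables plus the rectangular product formula for $M^{(k)}_{[(m-1)^k]}$). Your version simply makes the last step explicit where the paper says ``by the results of Section~\ref{secclust}'', and you additionally pin down the scalar via the leading coefficient, which the paper's own argument leaves as $(\ast)$; you also (silently) correct the typo $t^{N-k+1}$ to $t^{N-k-1}$ in the displayed $x^{(\beta)}$.
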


\begin{proof}
Let $v=\left[  \left(  m-1\right)  ^{k},0^{N-k}\right]  $. By the results of
Section \ref{secclust}
\[
M_{v}\left(  x_{1},\ldots,x_{k},t^{N-k-1},\ldots,1\right)  =\prod_{i=1}%
^{k}\prod_{j=0}^{m-2}\left(  x_{i}-t^{N-k}q^{j}\right)  ;
\]
if $m=1$ the product equals $1$. From $v.\Phi^{k}=u\left(  \beta\right)  $ it
follows that
\[
M_{u\left(  \beta\right)  }\left(  x\right)  =\left(  \ast\right)
M_{v}\left(  x_{N-k+1}/q,\ldots,x_{N}/q,x_{1},\ldots,x_{N-k}\right)
\prod_{i=N-k+1}^{N}\left(  x_{i}-1\right)  .
\]
Set $x_{i}=t^{N-k-i}$ for $1\leq i\leq N-k$, then%
\begin{eqnarray*}
M_{u\left(  \beta\right)  }\left(  x\right)   &  =\left(  \ast\right)
M_{v}\left(  x_{N-k+1}/q,\ldots,x_{N}/q,t^{N-k-1},\ldots,1\right)
\prod_{i=N-k+1}^{N}\left(  x_{i}-1\right) \\
&  =\left(  \ast\right)  \prod_{i=N-k+1}^{N}\left(  x_{i}-1\right)
\prod_{j=0}^{m-2}\left(  x_{i}/q-t^{N-k}q^{j}\right)  .
\end{eqnarray*}
This proves the Lemma.
\end{proof}

For $\beta\in\Omega_{N,k}$ define%
\[
F_{\beta}\left(  x\right)  :=\prod_{j=1}^{k}\left(  x_{\beta\left[  j\right]
}-t^{N+j-k-\beta\left[  j\right]  }\right)  .
\]
The special cases are $\beta=\left[  N-k+1,\ldots,N\right]  $, $F_{\beta
}=\prod_{i=N-k+1}^{N}\left(  x_{i}-1\right)  $ and $\beta=\left[
1,2,\ldots,k\right]  $, $F_{\beta}=\prod_{i=1}^{k}\left(  x_{i}-t^{N-k}%
\right)  $.

\begin{proof}
(of Theorem \ref{Mux_fact}): We use induction for steps of the form $u\left(
\beta\right)  \rightarrow u\left(  \beta\right)  .s_{i}$ applied to $u\left(
\beta\right)  $ with $u\left(  \beta\right)  \left[  i\right]  =0$ and
$u\left(  \beta\right)  \left[  i+1\right]  =m$. The previous Lemma provides
the starting point. Suppose for some $j,i$ that $\beta\left[  j\right]  =i+1$
and $i\notin\sigma\left(  \beta\right)  $, then with the identification
$x_{i}=x_{i+1}$ we have $x^{\left(  \beta\right)  }.s_{i}=x^{\left(
\beta^{\prime}\right)  }$, where $\beta^{\prime}=\beta-\varepsilon_{j}$. Then%
\[
F_{\beta^{\prime}}\left(  x^{\left(  \beta^{\prime}\right)  }\right)
=\frac{x_{i}-t^{N+j-k-i}}{x_{i}-t^{N+j-k-i-1}}F_{\beta}\left(  x^{\left(
\beta\right)  }\right)  ,
\]
because only the factor involving $x_{i}$ changes. Suppose that the claimed
formula holds for $\beta$, that is,
\[
M_{u\left(  \beta\right)  }\left(  x^{\left(  \beta\right)  }\right)
=F_{\beta}\left(  x^{\left(  \beta\right)  }\right)  \prod_{j=1}^{m-1}\left(
x_{\beta\left[  i\right]  }-t^{N-k}q^{j}\right)  .
\]
By Lemma \ref{Mxi=0} $M_{u\left(  \beta\right)  }\left(  x^{\left(
\beta\right)  }.s_{i}\right)  =0$ and by Lemma \ref{Msiy}%
\[
M_{u\left(  \beta\right)  .s_{i}}\left(  x^{\left(  \beta^{\prime}\right)
}\right)  =\left(  \frac{x_{i}^{\left(  \beta^{\prime}\right)  }%
-tx_{i+1}^{\left(  \beta^{\prime}\right)  }}{x_{i}^{\left(  \beta^{\prime
}\right)  }-x_{i+1}^{\left(  \beta^{\prime}\right)  }}\right)  M_{u\left(
\beta\right)  }\left(  x^{\left(  \beta\right)  }\right)  ,
\]
because $x^{\left(  \beta^{\prime}\right)  }.s_{i}=x^{\left(  \beta\right)  }%
$. Also $x_{i+1}^{\left(  \beta^{\prime}\right)  }=x_{i}^{\left(
\beta\right)  }=t^{\chi_{\beta}\left(  i\right)  }$ where $\chi_{\beta}\left(
i\right)  =N-i-\#\left\{  l:\beta\left[  l\right]  >i\right\}  =N-i-\left(
k-j+1\right)  $. Thus%
\begin{eqnarray*}
M_{u\left(  \beta\right)  .s_{i}}\left(  x^{\left(  \beta^{\prime}\right)
}\right)   &  =\left(\displaystyle  \frac{x_{i}-t^{N+j-i-k}}{x_{i}-t^{N+j-i-k-1}}\right)
M_{u\left(  \beta\right)  }\left(  x^{\left(  \beta\right)  }\right) \\
&  =\frac{F_{\beta^{\prime}}\left(  x^{\left(  \beta^{\prime}\right)
}\right)  }{F_{\beta}\left(  x^{\left(  \beta\right)  }\right)  }M_{u\left(
\beta\right)  }\left(  x^{\left(  \beta\right)  }\right)  ,
\end{eqnarray*}
and this proves the formula for $\beta-\varepsilon_{j}$.
\end{proof}

\begin{example}
Let $\beta=\left[  2,5,6,9\right]  \in\Omega_{10,4}$, then $u\left(
\beta\right)  =\left[  0m00mm00m0\right]  $,%
\begin{eqnarray*}
x^{\left(  \beta\right)  }  &  =\left(  t^{5},x_{2},t^{4},t^{3},x_{5}%
,x_{6},t^{2},t,x_{9},1\right)  ,\\
M_{u\left(  \beta\right)  }\left(  x^{\left(  \beta\right)  }\right)   &
=\left(  x_{2}-t^{5}\right)  \left(  x_{5}-t^{3}\right)  \left(  x_{6}%
-t^{3}\right)  \left(  x_{9}-t\right) \\
&  \times\prod_{j\in\left\{  2,5,6,9\right\}  }\prod_{i=1}^{m-1}\left(
x_{j}-t^{6}q^{i}\right)  .
\end{eqnarray*}

\end{example}

The factorization result for rectangular $E_{u}$ can be adapted to this
situation. Start with $u=\left[  m^{k},0^{N-k}\right]  $ with $2k\leq N$.

For the rest of this section assume $q^{m}t^{N-k+1}=1$ and no relation
$q^{a}t^{b}=1$ with $a<m$ or $b<N-k+1$ holds (for details see Section \ref{homogen}).

Then by Proposition \ref{factE} $M_{u}\left(  x\right)  =E_{u}\left(
x\right)  $. We claim this equality can be extended to $u\left(  \beta\right)
$ provided that $u\left(  \beta\right)  $ is a reverse lattice permutation;
this means that any substring $\left[  u\left(  \beta\right)  \left[
j\right]  ,u\left(  \beta\right)  \left[  j+1\right]  ,\ldots,u\left(
\beta\right)  \left[  N\right]  \right]  $ contains at least as many $0$'s as
$m$'s \cite[p.313]{Sta}. This condition is equivalent to
\begin{equation}
\beta\left[  j\right]  \leq N-2k+2j-1,1\leq j\leq k.\label{revlat}%
\end{equation}
As above suppose $\beta\left[  j\right]  =i+1,\beta\left[  j-1\right]  <i$ and
$v=u\left(  \beta\right)  ,v.s_{i}=u\left(  \beta-\varepsilon_{j}\right)  $.
The inverse relation is%
\[
M_{v}=\frac{\left(  1-\zeta\right)  ^{2}}{\left(  t\zeta-1\right)  \left(
\zeta-t\right)  }M_{v.s_{i}}\left(  T_{i}-\zeta\frac{1-t}{1-\zeta}\right)  ,
\]
where $\zeta=q^{m}t^{k+1+i-2j}$. The same transformation takes $E_{v.s_{i}}$
to $E_{v}$. We use induction. Suppose $M_{u\left(  \beta-\varepsilon
_{j}\right)  }=E_{u\left(  \beta-\varepsilon_{j}\right)  }$ when
$q^{m}t^{N-k+1}=1$ and both polynomials are defined (that is, the coefficients
have no poles at $q^{m}t^{N-k+1}=1$). Then the same properties hold for
$M_{u\left(  \beta\right)  }$ and $E_{u\left(  \beta\right)  }$ if $\zeta\neq
t^{-1},1,t$. Set $q^{m}=t^{-\left(  N-k+1\right)  }$ then $\zeta=t^{a}$ with
$a=2k+i-2j-N$. But $i=\beta\left[  j\right]  -1$ thus by condition
(\ref{revlat}) we have%
\begin{eqnarray*}
a &  \leq2k-2j-N+\left(  N-2k+2j-1\right)  -1&  =-2.
\end{eqnarray*}
The induction starts with $\beta=\left[  1,2\ldots,k\right]  $. We have shown

\begin{proposition}
Suppose $2k\leq N$, $\beta\in\Omega_{N,k}$, $\beta$ satisfies (\ref{revlat}),
(and $q^{m}=t^{-\left(  N-k+1\right)  }$) then%
\[
E_{u\left(  \beta\right)  }\left(  x\right)  =M_{u\left(  \beta\right)
}\left(  x\right)  .
\]

\end{proposition}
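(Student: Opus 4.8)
The plan is to prove the identity by induction on $\beta\in\Omega_{N,k}$, building $\beta$ up one coordinate at a time from $\beta_0:=[1,2,\ldots,k]$, for which $u(\beta_0)=[m^k,0^{N-k}]$ and Proposition~\ref{factE} already gives $M_{u(\beta_0)}=E_{u(\beta_0)}$ under the running hypotheses $q^mt^{N-k+1}=1$ with no smaller relation $q^at^b=1$. So the base case is free, and the whole content is the inductive step.

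For the step, suppose $M_{u(\beta-\varepsilon_j)}=E_{u(\beta-\varepsilon_j)}$ is known, and choose $j$ so that $i:=\beta[j]-1\notin\sigma(\beta)$ (equivalently $\beta[j-1]<\beta[j]-1$, with $\beta[0]:=0$); then $u(\beta).s_i=u(\beta-\varepsilon_j)$ and $u(\beta)[i]=0<m=u(\beta)[i+1]$. The intertwining relation~(\ref{IntertwinM}) reads
\[
M_{u(\beta-\varepsilon_j)}=M_{u(\beta)}\left(T_i+\frac{1-t}{1-\zeta}\right),\qquad \zeta=\frac{\langle u(\beta)\rangle[i+1]}{\langle u(\beta)\rangle[i]},
\]
and a short computation of the rank function of $u(\beta)$ gives $\zeta=q^mt^{k+1+i-2j}$. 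Using $(T_i-t)(T_i+1)=0$ one checks that $\left(T_i+\frac{1-t}{1-\zeta}\right)\left(T_i-\zeta\frac{1-t}{1-\zeta}\right)$ equals the scalar $\frac{(1-t\zeta)(t-\zeta)}{(1-\zeta)^2}$, so the intertwiner is invertible exactly when $\zeta\notin\{1,t,t^{-1}\}$, in which case
\[
M_{u(\beta)}=\frac{(1-\zeta)^2}{(t\zeta-1)(\zeta-t)}\,M_{u(\beta-\varepsilon_j)}\left(T_i-\zeta\frac{1-t}{1-\zeta}\right).
\]
Since the homogeneous polynomials $E_v$ obey the same intertwining relation with the same spectral data, the identical operator sends $E_{u(\beta-\varepsilon_j)}$ to $E_{u(\beta)}$, and the inductive hypothesis then forces $M_{u(\beta)}=E_{u(\beta)}$.

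The hard part is verifying $\zeta\notin\{1,t,t^{-1}\}$ at the specialization $q^m=t^{-(N-k+1)}$: here $\zeta=t^a$ with $a=2k+i-2j-N$, and since $i=\beta[j]-1$, condition~(\ref{revlat}) gives $a\le 2k+(N-2k+2j-1)-1-2j-N=-2$. Each of $\zeta=1,t,t^{-1}$ would then amount to a relation $q^0t^b=1$ with $b>0$, i.e.\ a relation whose $q$-exponent $0$ is strictly less than $m$, which is excluded by the running genericity assumption; this is precisely where~(\ref{revlat}) earns its keep. Two more routine points remain. First, every $\beta$ obeying~(\ref{revlat}) is reachable from $\beta_0$ by successive moves $\beta'\mapsto\beta'+\varepsilon_j$ that stay in $\Omega_{N,k}$ and preserve~(\ref{revlat}): at each stage take the least index $j$ with $\beta'[j]>j$, so that $\beta'[j-1]<\beta'[j]-1$, the move has the form needed above, and~(\ref{revlat}) survives passage to $\beta'-\varepsilon_j$. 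Second, the rational coefficients of $M_{u(\beta-\varepsilon_j)}$ and $E_{u(\beta-\varepsilon_j)}$ have no pole at the specialization, which propagates along the induction because the scalar $\frac{(1-\zeta)^2}{(t\zeta-1)(\zeta-t)}$ is finite and nonzero there.
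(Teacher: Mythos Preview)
Your argument is correct and follows essentially the same route as the paper's proof: start from $\beta_0=[1,\ldots,k]$ where Proposition~\ref{factE} gives the base case, then use the inverse of the intertwiner~(\ref{IntertwinM}) with $\zeta=q^{m}t^{k+1+i-2j}$ to propagate $M_{u(\beta-\varepsilon_j)}=E_{u(\beta-\varepsilon_j)}$ to $M_{u(\beta)}=E_{u(\beta)}$, the reverse-lattice condition~(\ref{revlat}) guaranteeing $a=2k+i-2j-N\le -2$ so that $\zeta\notin\{t^{-1},1,t\}$. Your write-up is in fact slightly more careful than the paper's in two respects: you explicitly check that every $\beta$ satisfying~(\ref{revlat}) is reachable from $\beta_0$ along a path staying inside~(\ref{revlat}), and you spell out why the pole-freeness of the coefficients propagates through the induction.
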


For the specialization result, let $z^{\left(  \beta\right)  }$ be defined by
$z_{i}^{\left(  \beta\right)  }=x_{i}$ if $i\in\sigma\left(  \beta\right)  $,
and $z_{i}^{\left(  \beta\right)  }=yt^{\chi_{\beta}\left(  i\right)  }$ if
$i\notin\sigma\left(  \beta\right)  $, for variables $x_{i}$ and $y$.

\begin{proposition}
Suppose $2k\leq N$, $\beta\in\Omega_{N,k}$, $\beta$ satisfies (\ref{revlat}),
(and $q^{m}=t^{-\left(  N-k+1\right)  }$), then%
\[
E_{u\left(  \beta\right)  }\left(  z^{\left(  \beta\right)  }\right)
=\prod_{j=1}^{k}\left(  x_{\beta\left[  j\right]  }-t^{N+j-k-\beta\left[
j\right]  }y\right)  \prod_{i=1}^{m-1}\left(  x_{\beta\left[  j\right]
}-t^{N-k}q^{i}y\right)  .
\]

\end{proposition}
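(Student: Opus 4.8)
The plan is to reduce the statement to Theorem \ref{Mux_fact} and to the immediately preceding proposition by a homogeneity argument. First I would use that $E_{u(\beta)}$ is homogeneous of degree $|u(\beta)|=km$, since $u(\beta)$ has exactly $k$ nonzero entries, all equal to $m$. Let $w$ denote the $N$-tuple obtained from $x^{(\beta)}$ by the substitution $x_i\mapsto x_i/y$, that is $w_i=x_i/y$ for $i\in\sigma(\beta)$ and $w_i=t^{\chi_\beta(i)}$ for $i\notin\sigma(\beta)$; then $z^{(\beta)}=y\,w$ coordinatewise, and homogeneity gives
\[
E_{u(\beta)}(z^{(\beta)})=y^{km}\,E_{u(\beta)}(w).
\]

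Next I would invoke the preceding proposition: under the standing hypotheses ($2k\le N$, $\beta$ satisfying (\ref{revlat}), and $q^{m}=t^{-(N-k+1)}$) one has $E_{u(\beta)}=M_{u(\beta)}$, both sides being regular at this specialization, so $E_{u(\beta)}(w)=M_{u(\beta)}(w)$. Since Theorem \ref{Mux_fact} is an identity of rational functions in $(q,t)$ it survives the specialization, and after performing $x_i\mapsto x_i/y$ in it we obtain
\[
M_{u(\beta)}(w)=\prod_{j=1}^{k}\Bigl[\bigl(x_{\beta[j]}/y-t^{N+j-k-\beta[j]}\bigr)\prod_{i=1}^{m-1}\bigl(x_{\beta[j]}/y-t^{N-k}q^{i}\bigr)\Bigr].
\]
Pulling a factor $y^{-1}$ out of each of the $km$ linear factors on the right rewrites this as $y^{-km}$ times the product appearing in the statement; substituting into the first display and cancelling then finishes the proof.

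I do not expect a real obstacle: the substantial content already lies in Theorem \ref{Mux_fact} and the preceding proposition, and this statement is essentially their common specialization dressed up with the extra variable $y$. The only points needing a little care are the counting $|u(\beta)|=km$ --- so that the homogeneity factor $y^{km}$ exactly cancels the $y^{-km}$ created by the substitution --- and the trivial remark that the substitution $x_i\mapsto x_i/y$ commutes both with the specialization of $(q,t)$ and with the identity $E_{u(\beta)}=M_{u(\beta)}$, since it acts only on the variables $x_i$.
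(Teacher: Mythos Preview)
Your proof is correct and is essentially identical to the paper's own argument: the paper computes $E_{u(\beta)}(yx^{(\beta)})=y^{mk}E_{u(\beta)}(x^{(\beta)})=y^{mk}M_{u(\beta)}(x^{(\beta)})$, applies Theorem~\ref{Mux_fact}, and then substitutes $x_i\mapsto x_i/y$ at the end, whereas you perform the substitution $x_i\mapsto x_i/y$ first and then apply homogeneity and Theorem~\ref{Mux_fact}. The two orderings are equivalent and the content is the same.
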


\begin{proof}
With the notation of Definition \ref{defx_beta}%
\begin{eqnarray*}
E_{u\left(  \beta\right)  }\left(  yx^{\left(  \beta\right)  }\right)   &
=&y^{mk}E_{u\left(  \beta\right)  }\left(  x^{\left(  \beta\right)  }\right)
\\&=&y^{mk}M_{u\left(  \beta\right)  }\left(  x^{\left(  \beta\right)  }\right) \\
&  =&y^{mk}\prod_{j=1}^{k}\left(  x_{\beta\left[  j\right]  }-t^{N+j-k-\beta
\left[  j\right]  }\right)  \prod_{i=1}^{m-1}\left(  x_{\beta\left[  j\right]
}-t^{N-k}q^{i}\right)  .
\end{eqnarray*}
Replace $x_{i}$ by $x_{i}/y$ to finish the proof.
\end{proof}

\begin{example}
Let $\beta=\left[  2,5,6,9\right]  \in\Omega_{10,4}$, $q^{m}t^{7}=1$ then%
\begin{eqnarray*}
z^{\left(  \beta\right)  } &  =\left(  yt^{5},x_{2},yt^{4},yt^{3},x_{5}%
,x_{6},yt^{2},yt,x_{9},y\right)  ,\\
E_{u\left(  \beta\right)  }\left(  z^{\left(  \beta\right)  }\right)   &
=\left(  x_{2}-yt^{5}\right)  \left(  x_{5}-yt^{3}\right)  \left(
x_{6}-yt^{3}\right)  \left(  x_{9}-yt\right)  \\
&  \times\prod_{j\in\left\{  2,5,6,9\right\}  }\prod_{i=1}^{m-1}\left(
x_{j}-t^{6}q^{i}y\right)
\end{eqnarray*}

\end{example}

One expects a generalization of this result for permutations of staircase partitions.
\\ \\\noindent{\bf Acknowledgments} This paper is partially supported by the ANR project PhysComb,
ANR-08-BLAN-0243-04.


\begin{thebibliography}{abc}
\bibitem{BaFo} T.H. Baker and P.J. Forrester, {\it A $q$-analogue of the type A Dunkl operator and integral kernel}, Int Math Res Notices  {\bf 1997} {14}(1997),  667-686. 
\bibitem{BF}  W. Baratta and Peter J. Forrester, {\it Jack polynomial fractional quantum Hall states and their generalizations}, {Nuclear Physics B} 
{\bf 843} 1 (2011), 362-381.
\bibitem{BH0}B.A. Bernevig and F.D.M. Haldane, {\it Model fractional quantum Hall states and Jack polynomials}, { Phys. Rev. Lett.} {\bf 100} (2008), 246802.
\bibitem{BH} B.A. Bernevig and F.D.M. Haldane, {\it Generalized clustering conditions of Jack polynomials at negative Jack parameter $\alpha$}, { Phys. Rev. B} {\bf 77} (2008), 184502.
\bibitem{BH2}  B. A. Bernevig and F. D. M. Haldane, {\it Properties of Non-Abelian Fractional Quantum Hall States at Filling $ν=\frac{k}{r}$ }, { Phys. Rev. Lett.} {\bf 101} (2008), 246806.
\bibitem{BH3} B. A. Bernevig and F. D. M. Haldane,
{ \it Clustering Properties and Model Wavefunctions for Non-Abelian Fractional Quantum Hall Quasielectrons},  { Phys. Rev. Lett.} {\bf 102} (2009),  066802.
\bibitem{BGS} B. A. Bernevig, V. Gurarie and S. H. Simon, {\it Central Charge and Quasihole Scaling Dimensions From Model Wavefunctions: Towards Relating Jack Wavefunctions to W-algebras}, { J. Phys. A: Math. Theor.} {\bf 42} (2009), 245206. 
\bibitem{BL} A. Boussicault and J.-G. Luque, {\it Staircase Macdonald polynomials and the q-Discriminant}, DMTCS proc. AJ, 2008, 20th Annual International Conference on Formal Power Series and Algebraic Combinatorics (2008): 381–392.
\bibitem{DL1} C.F. Dunkl and J.-G. Luque, {\it Vector valued Macdonald polynomials}, { Seminaire Lotharingien de Combinatoire} {\bf 66} (2012),  B66b.
\bibitem{ES} B. Estienne and R. Santachiara, {\it Relating Jack wavefunctions to $\textrm{WA}_{k-1}$ theories }, {J.Phys. A: Math. Theor.} {\bf 42} (2009), 445209.
\bibitem{EBS} B. Estienne, B.A. Bernevig and R. Santachiara, {\it Electron-quasihole duality and second-order differential equation for Read-Rezayi and Jack wave function}, {Phys. Rev. B} {\bf 82} (2010), 205307.
\bibitem{FJMM1}B. Feigin, M. Jimbo, T .Miwa  and E. Mukhin, {\it  A differential ideal of symmetric
polynomials spanned by Jack polynomials at $\beta = −(r − 1)/(k + 1)$}, {\it Int. Math. Res.
Notices.} {\bf 2002} 23 (2002), 1223-1237.
\bibitem{FJMM2}B. Feigin, M. Jimbo, T. Miwa and E. Mukhin,  {\it Symmetric polynomials vanishing on the shifted diagonals and Macdonald polynomials},  {\it Int. Math. Res. Notices.} {\bf 2003} 18 (2003), 1015-1034.
\bibitem{Kakei} S. Kakei, M. Nishizawa, Y. Saito  and Y. Takeyama, {\it The Rational qKZ equation and shifted non-symmetric Jack polynomials}, {\it  SIGMA} {\bf 5}  (2009), 010. 
\bibitem{Knop} F. Knop, {\it Integrality of two variable Kostka functions}, { J. Reine Angew. Math.}, {\bf 482}  (1997), 177-189.
\bibitem{Knop2} F. Knop, {\it Symmetric and nonsymmetric quantum Capelli polynomials}, { Comment. Math. Helv.}, {\bf 72} (1997), 84-100. 
\bibitem{Jain} J. K. Jain, {\it Composite Fermions}, Cambridge University Press, Cambridge (2007).
\bibitem{JL} Th. Jolic\oe ur and J.-G. Luque, {\it Highest weight Macdonald and Jack polynomials}, {\it J. Phys. A: Math. Theor.}, {\bf 44 } 5 (2011), 055204. 
\bibitem{Lasc} A. Lascoux,  {\it Symmetric Functions and Combinatorial Operators on Polynomials}, Amer. Mathematical Society (2003).
\bibitem{YangLasc} A. Lascoux, {\it Yang-Baxter graphs, Jack and Macdonald polynomials}, {\it Ann. Comb. }{\bf 5} (2001), 397-424.
\bibitem{Dummies} A. Lascoux, {\it Schubert and Macdonald polynomials, a parallel}. Electronically available at {\tt http://igm.univ-mlv.fr/$\sim$al/ARTICLES/Dummies.pdf.}
\bibitem{LRW} A. Lascoux, E.M. Rains and S. Ole Warnaar, {\it Nonsymmetric interpolation Macdonald polynomials and gln basic hypergeometric series}, { Transform. Groups} {\bf 14} 3 (2009), 613-647. 
\bibitem{LASS} M. Lassalle, {\it Coefficients binomiaux g\'en\'eralis\'es et polyn\^omes de Macdonald}, {J. Funct. Anal.}
{\bf 158}  (1998), 289-324.
\bibitem{Laugh} R. B. Laughlin, {\it Anomalous Quantum Hall Effect: An Incompressible Quantum Fluid with Fractionally Charged Excitations}, {\it Phys. Rev. Lett.} {\bf 50} (1983), 1395.
\bibitem{Macdo} I. G. Macdonald, {\it  Symmetric functions and Hall polynomials}, second edition, Oxford
University Press Inc., New York 1995.
\bibitem{MR}G. Moore and N. Read, {\it Nonabelians in the fractional quantum hall effect}, { Nucl. Phys. B} {\bf 360} (1991), 362-396.
\bibitem{Okoun} A. Okounkov, {\it On Newton interpolation of symmetric functions: a characterization of interpolation Macdonald polynomials}, { Adv. Appl. Math.} {\bf 20} (1998), 395-428.
\bibitem{RM} N. Read and G. Moore,  {\it Fractional quantum Hall effect and nonabelian statistics}, { Prog. Theor. Phys. (Kyoto) Suppl.} {\bf 107} (1992), 157-166.
\bibitem{RR1} N. Read and E. H. Rezayi, {\it Quasiholes and fermionic zero modes of paired fractional quantum Hall states: the mechanism for nonabelian statistics}, Phys. Rev. B {\bf 54} (1996), 16864-16887.
\bibitem{RR2} N. Read and E. H. Rezayi, {\it Beyond paired quantum Hall states: parafermions and incompressible states in the first excited Landau level},  {Phys. Rev. B} {\bf 59} (1999), 8084-8092.
\bibitem{RS1} S.N.M. Ruijsenaars and H. Schneider, {\it A new class of integrable systems and its
relation to solitons}, {Ann. Phys.} {\bf 170} (1986), 370-405.
\bibitem{RS2} S.N.M.Ruijsenaars, {\it Complete integrability of relativistic Calogero-Moser systems and
elliptic function identities}, {Comm. Math. Phys.} {\bf 110} no. 2 (1987), 191-213.
\bibitem{Sahi} S. Sahi 1998 { The binomial formula for nonsymmetric Macdonald polynomials}, {\it Duke Math. J.} {\bf 94} 465-477.
\bibitem {Sta}R. P. Stanley, \textit{Enumerative Combinatorics}, Vol. 2,
Cambridge University Press, Cambridge 1999.
\end{thebibliography}
\end{document}